\newif\ifarxivsubmission
  \definecolor{darkblue}{rgb}{0.1,0.1,.7}
  \def\@makechapterhead#1{%
    {\vbox to 110pt{
    \def\thefootnote{\@fnsymbol\c@footnote}%
    \vspace*{37pt}      
        \parindent\z@\raggedright\reset@font
        {\centering{
         \vskip 0.25in
    \vbox{
    \CTfont #1\par
    }\par}\par}\nobreak\vfill}}}
\newcommand\be{\begin{eqnarray}}
\newcommand\ee{\end{eqnarray}}
\newcommand\f\phi
\newcommand\cO{\mathcal{O}}
\newcommand\p[1]{\left(#1\right)}
\newcommand\ptl\partial
\newcommand\e\epsilon
\newcommand\<\langle
\renewcommand\>\rangle
\newcommand\Z{\mathbb{Z}}
\newcommand\de\delta
\newcommand\R{\mathbb{R}}
\newcommand\bx{\mathbf{x}}
\newcommand\nn{\nonumber}
\renewcommand\.{\cdot}
\newcommand\x\times
\newcommand\pdr[2]{\frac{\partial #1}{\partial #2}}
\newcommand\s\sigma
\newcommand\SO{\mathrm{SO}}
\newcommand\De{\Delta}
\newcommand\cS{\mathcal{S}}
\newcommand\oo\infty
\newcommand\SU{\mathrm{SU}}
\newcommand\cH{\mathcal{H}}
\newcommand\bn{\mathbf{n}}
\newcommand\bP{\mathbf{P}}
\renewcommand\b\beta
\renewcommand\a\alpha
\newcommand\Tr{\mathrm{Tr}}
\renewcommand\l\lambda
\newcommand\cL{\mathcal{L}}
\newcommand\cD{\mathcal{D}}
\renewcommand\th{\theta}
\newcommand\tl[1]{\widetilde{#1}}
\newtheorem{exercise}{Exercise}[section]
\begin{document}

\ifarxivsubmission
\chapter{TASI Lectures on the Conformal Bootstrap}\label{ch:bootstrap}
\else
\chapter{The Conformal Bootstrap}
\fi

\author[David Simmons-Duffin]{David Simmons-Duffin}

\address{School of Natural Sciences, Institute for Advanced Study\\
Princeton, NJ 08540, \\
dsd@ias.edu}

\begin{abstract}
These notes are from courses given at TASI and the Advanced Strings School in summer 2015.  Starting from principles of quantum field theory and the assumption of a traceless stress tensor, we develop the basics of conformal field theory, including conformal Ward identities, radial quantization, reflection positivity, the operator product expansion, and conformal blocks. We end with an introduction to numerical bootstrap methods, focusing on the 2d and 3d Ising models.
\end{abstract}
\body

\makeatletter
\newalphalph{\alphmult}[mult]{\@alph}{26}
\makeatother
\renewcommand{\thefootnote}{\alphmult{\value{footnote}}}

\ifarxivsubmission
  \thispagestyle{empty}
  \makeatletter
  \renewcommand*\l@chapter[2]{}
  \renewcommand*\l@schapter[2]{}
  \renewcommand*\l@author[2]{}
  \makeatother
  \newpage
  \setcounter{page}{1}
  \setcounter{tocdepth}{2}
  \smalltoc
  \tableofcontents
\fi

\section*{Other Resources}

This course is heavily inspired by Slava Rychkov's {\it EPFL Lectures on Conformal Field Theory in $d\geq 3$ Dimensions} \cite{Rychkov:2016iqz}. His notes cover similar topics, plus additional material that we won't have time for here, including conformal invariance in perturbation theory, the embedding formalism, and some analytical bootstrap bounds.  By contrast, these lectures spend more time on QFT basics and numerical bootstrap methods. See also lectures by Sheer El-Showk \cite{Busan} and Joshua Qualls \cite{Qualls:2015qjb}.

Our discussion of symmetries and quantization is based on Polchinski's {\it String Theory, Vol.\ 1} \cite{Polchinski:1998rq}: mostly Chapter 2 on 2d CFTs and Appendix A on path integrals.  Appendix A is required reading for any high energy theory student.

The book {\it Conformal Field Theory\/} by Di Francesco, Mathieu, and Senechal \cite{YellowPages} is also a useful reference. It starts with a discussion of CFTs in general spacetime dimensions, but includes much more about 2d CFTs, a topic we unfortunately neglect in this course.

\ifarxivsubmission
  \newpage
\fi

\section{Introduction}

\subsection{Landmarks in the Space of QFTs}

Quantum field theories generically become scale-invariant at long distances. Often, invariance under rescaling actually implies invariance under the larger conformal group, which consists of transformations that locally look like a rescaling and a rotation.\footnote{The question of when scale invariance implies conformal invariance is an important foundational problem in quantum field theory that is still under active study.  In 2d and 4d, it has been proven that Lorentz-invariance and unitarity are sufficient conditions \cite{Polchinski:1987dy,Dymarsky:2013pqa}.  In 3d or $d\geq 5$, the appropriate conditions are not known, but conformal invariance appears in many examples.}  These extra symmetries are powerful tools for organizing a theory. Because their emergence requires no special structure beyond the long distance limit, they are present in a huge variety of physical systems.

We can think of a UV-complete QFT as a renormalization group (RG) flow between conformal field theories (CFTs),\footnote{Having a CFT in the IR is generic.  We do not necessarily have a CFT in the UV, but assuming one is sometimes a useful framework.}
\be
\left.
\begin{array}{c}
\mathrm{CFT}_{UV}\\
\downarrow\\
\mathrm{CFT}_{IR}
\end{array}\right\} \mathrm{QFT}.
\ee
Studying CFTs lets us map out the possible endpoints of RG flows, and thus understand the space of QFTs.

Many of the most interesting RG flows are nonperturbative. A simple example is $\f^4$ theory in 3 dimensions, which has the Euclidean action
\be
S&=&\int d^3 x\p{\frac 1 2(\ptl\phi)^2 + \frac 1 2 m^2 \phi^2 + \frac 1 {4!}g \phi^4}.
\ee
This theory is free in the UV, since $m$ and $g$ have mass dimension 1 and can be ignored at high energies.  The behavior of the theory in the IR depends on the ratio $g^2/m^2$.  If $m^2$ is large and positive, the IR theory is massive and preserves the $\Z_2$ symmetry $\phi\to -\phi$.  If $m^2$ is large and negative, the IR theory is again massive but spontaneously breaks $\Z_2$.  For a special value of $g^2/m^2$, in between these two regimes, the IR theory becomes gapless and is described by a nontrivial interacting CFT.\footnote{The precise value of $g^2/m^2$ that gives a CFT is scheme-dependent: it depends on how one regulates UV divergences.}

It is hard to study this CFT with Feynman diagrams.  By dimensional analysis, naive perturbation theory leads to an expansion in $gx$, where $x$ is a distance scale characterizing the observable we're computing.  At distances $x\gg 1/g$, the expansion breaks down.  Instead, the best perturbative tool we have is the $\e$-expansion, where we compute Feynman diagrams in $4-\e$ dimensions and afterwards continue $\e\to 1$.  This works surprisingly well, but is conceptually a little shaky.  

\subsection{Critical Universality}

In the example above, the UV theory was a continuum QFT: the free boson.  However, IR CFTs can also arise from very different microscopic systems \cite{Polyakov:1970xd}.  An example is the 3d Ising model, which is a lattice of classical spins $s_i \in \{\pm 1\}$ with nearest-neighbor interactions.  The partition function is
\be
Z_\mathrm{Ising} &=& \sum_{\{s_i\}} \exp\p{-J \sum_{\<ij\>} s_i s_j},
\ee
where $i,j$ label lattice points and $\<ij\>$ indicates that $i$ and $j$ are nearest neighbors.
We can think of this sum as a discrete path integral, where the integration variable is the space of functions
\be
s: \mathrm{Lattice} \to \{\pm 1\}.
\ee
For a special value of $J$, this theory also becomes a nontrivial CFT at long distances. Actually {\it it is the same CFT as the one appearing in $\phi^4$ theory!\/}

The Ising CFT also arises in water (and other liquids) at the critical point on its phase diagram, and in uni-axial magnets at their critical temperatures \cite{Cardy:1996xt}. We say that $\phi^4$ theory, the Ising model, water, and uni-axial magnets are {\it IR equivalent\/} at their critical points (figure~\ref{fig:rgflows}), and that they are in the same universality class.  IR equivalences show up all over high-energy and condensed-matter physics, where they are sometimes called ``dualities."  The ubiquity of IR equivalences is the phenomenon of {\it critical universality.} 

\begin{figure}
\begin{center}
\includegraphics[width=0.9\textwidth]{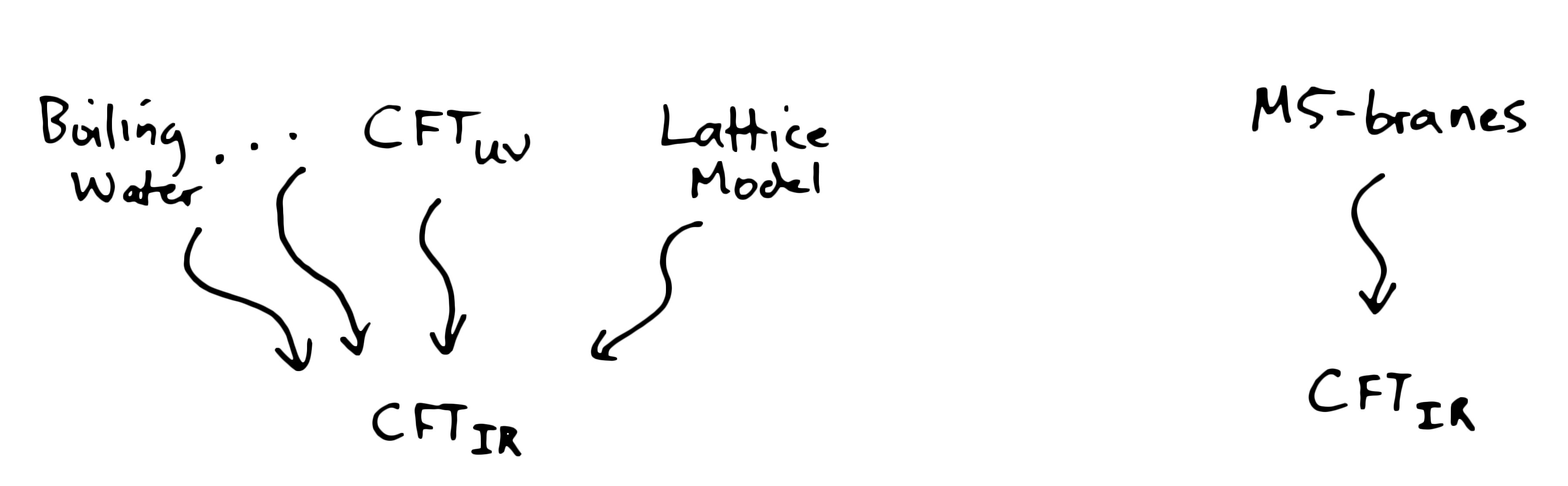}
\end{center}
\caption{Many  microscopic theories can flow to the same IR CFT\@. We say that the theories are IR equivalent, or IR dual. The UV can even be something exotic like a stack of M5-branes in M-theory. \label{fig:rgflows}}
\end{figure}

The above examples are Euclidean field theories. But Lorentzian CFTs also appear in nature, describing quantum critical points. For example, the Lorentzian $O(2)$ model describes thin-film superconductors~\cite{PhysRevB.44.6883,PhysRevLett.95.180603}, while its Wick-rotation, the Euclidean $O(2)$ model, describes the superfluid transition in ${}^4$He~\cite{Lipa:2003zz}. Amazingly, the critical exponents of these theories agree, allowing us to see Wick rotation in nature!

\subsection{The Bootstrap Philosophy}

Critical universality means we can study the 3d Ising CFT by doing computations in any of its microscopic realizations. This is a powerful tool.  For example, we can model critical water by simulating classical spins on a computer, without ever worrying about $10^{23}$ bouncing water molecules!  For analytical results, we can use the $\e$-expansion.  But all of these approaches fail to exploit the emergent symmetries of the IR theory.

The conformal bootstrap philosophy is to:
\begin{enumerate}
\item[0.] focus on the CFT itself and not a specific microscopic realization,
\item[1.] determine the full consequences of symmetries,
\item[2.] impose consistency conditions,
\item[3.] combine (1) and (2) to constrain or even solve the theory.
\end{enumerate}

This strategy was first articulated by Ferrara, Gatto, and Grillo~\cite{Ferrara:1973yt} and Polyakov~\cite{Polyakov:1974gs} in the 70's.  Importantly, it only uses nonperturbative structures, and thus has a hope of working for strongly-coupled theories.  Its effectiveness for studying the 3d Ising model will become clear during this course.  In addition, sometimes bootstrapping is the {\it only\/} known strategy for understanding the full dynamics of a theory. An example is the 6d $\mathcal{N}=(2,0)$ supersymmetric CFT describing the IR limit of a stack of M5 branes in M-theory.  This theory has no known Lagrangian description, but is amenable to bootstrap analysis \cite{Beem:2015aoa}.\footnote{At large central charge, this theory is solved by the AdS/CFT correspondence \cite{Maldacena:1997re}. Supersymmetry also lets one compute a variety of protected quantities (at any central charge).  However, the bootstrap is currently the only known tool for studying non-protected quantities at small central charge.}

A beautiful and ambitious goal of the bootstrap program is to eventually provide a fully nonperturbative  formulation of quantum field theory, removing the need for a Lagrangian.  We are not there yet, but you can help!

\section{QFT Basics}

The first step of the conformal bootstrap is to determine the full consequences of symmetries.  In this section, we quickly review symmetries in quantum field theory, phrasing the discussion in language that will be useful later.  We work in Euclidean signature throughout.

\subsection{The Stress Tensor}

A local quantum field theory has a conserved stress tensor,
\be
\label{eq:conservationofT}
\ptl_\mu T^{\mu\nu}(x) &=& 0 \qquad \textrm{(operator equation)}.
\ee
This holds as an ``operator equation," meaning it is true away from other operator insertions.  In the presence of other operators, (\ref{eq:conservationofT}) gets modified to include contact terms on the right-hand side,
\be
\label{eq:wardidentity}
\partial_\mu \< T^{\mu\nu}(x) \cO_1(x_1)\dots \cO_n(x_n)\> &=& -\sum_i \de(x-x_i)\ptl_i^\nu\<\cO_1(x_1)\dots \cO_n(x_n)\>.\nn\\
\ee

\begin{exercise}
Consider a QFT coupled to a background metric $g$. For concreteness, suppose correlators are given by the path integral
\be
\<\cO_1(x_1)\dots\cO_n(x_n)\>_g &=& \int D\phi\,\cO_1(x_1)\dots\cO_n(x_n)\, e^{-S[g,\f]}.
\ee
A stress tensor insertion is the response to a small metric perturbation,\footnote{This definition  of the stress tensor works in a continuum field theory. If the UV is a lattice model, we must assume (or prove) that a stress tensor emerges in the IR.}
\be
\label{eq:definitionofstresstensor}
\<T^{\mu\nu}(x)\cO_1(x_1)\dots\cO_n(x_n)\>_g &=& \frac{2}{\sqrt g}\frac{\de}{\de g_{\mu\nu}(x)}\<\cO_1(x_1)\dots\cO_n(x_n)\>_g.
\ee
Derive (\ref{eq:wardidentity}) by demanding that $S[g,\phi]$ be diffeomorphism invariant near flat space.  Find how to modify (\ref{eq:wardidentity}) when the $\cO_i$ have spin.
\end{exercise}

Consider the integral of $T^{\mu\nu}$ over a closed surface $\Sigma$,\footnote{The word ``surface" usually refers to a 2-manifold, but we will abuse terminology and use it to refer to a codimension-1 manifold.}\footnote{Our definition of $P^\nu$ differs from the usual one by a factor of $i$.  This convention is much nicer for Euclidean field theories, but it has the effect of modifying some familiar formulae, and also changing the properties of symmetry generators under Hermitian conjugation. More on this in section~\ref{sec:reflectionpositivity}.}
\be
P^\nu(\Sigma) &\equiv& -\int_\Sigma dS_\mu T^{\mu\nu}(x).
\ee
The Ward identity (\ref{eq:wardidentity}) implies that a correlator of $P^\nu(\Sigma)$ with other operators is unchanged as we move $\Sigma$, as long as $\Sigma$ doesn't cross any operator insertions (figure~\ref{fig:topologicalsurfaces}).
We say that $P^\nu(\Sigma)$ is a ``topological surface operator."

\begin{figure}
\begin{center}
\includegraphics[width=0.45\textwidth]{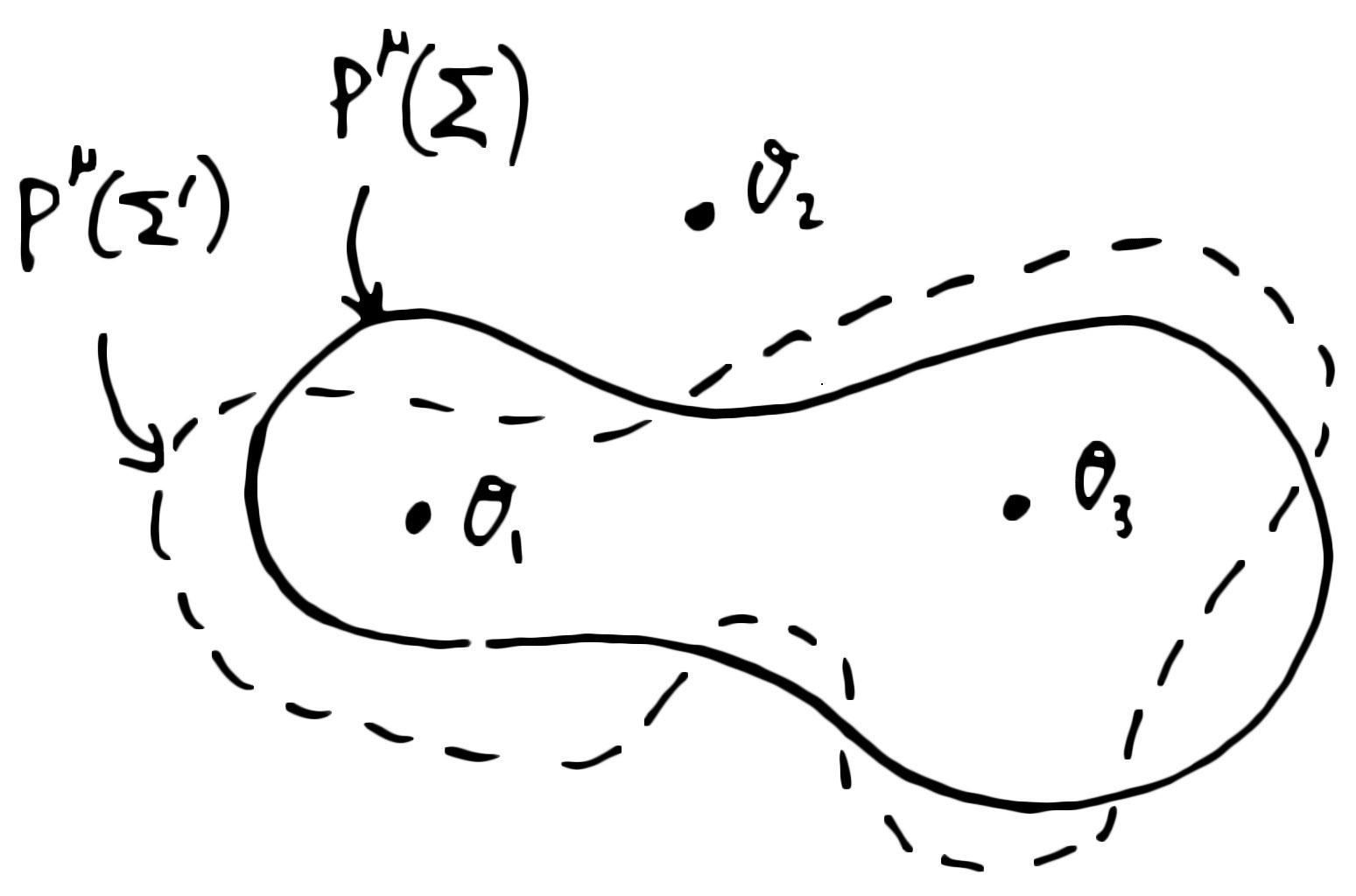}
\end{center}
\caption{A surface $\Sigma$ supporting the operator $P^\mu(\Sigma)$ can be freely deformed $\Sigma\to\Sigma'$ without changing the correlation function, as long as it doesn't cross any operator insertions.
\label{fig:topologicalsurfaces} }
\end{figure}

Let $\Sigma=\ptl B$ be the boundary of a ball $B$ containing $x$ and no other insertions. Integrating (\ref{eq:wardidentity}) over $B$ gives
\be
\label{eq:integratedwardidentity}
\<P^\mu(\Sigma)\cO(x)\dots\> &=& \ptl^\mu\<\cO(x)\dots\>.
\ee
In other words, surrounding $\cO(x)$ with the topological surface operator $P^\mu$ is equivalent to taking a derivative (figure~\ref{fig:surroundoperator}).

\begin{figure}
\begin{center}
\includegraphics[width=0.5\textwidth]{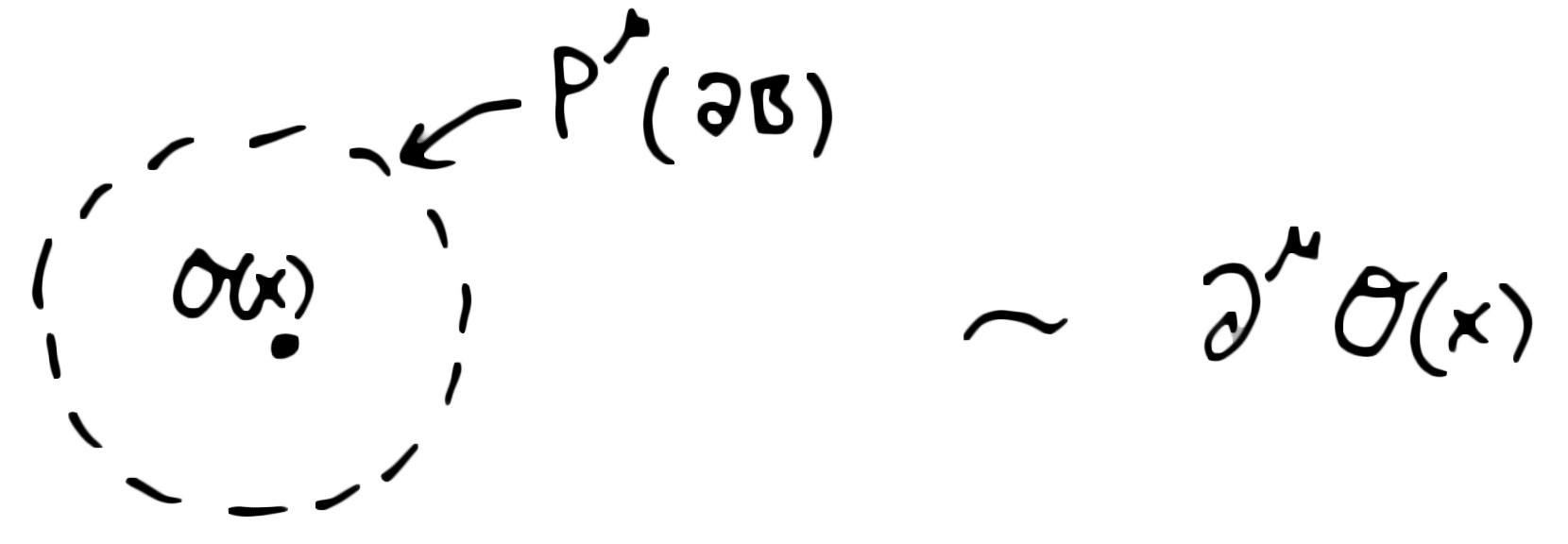}
\end{center}
\caption{\label{fig:surroundoperator} Surrounding $\cO(x)$ with $P^\mu$ gives a derivative.}
\end{figure}

In quantum field theory, having a topological codimension-1 operator is the same as having a symmetry.\footnote{Topological operators with support on other types of manifolds give ``generalized symmetries" \cite{Gaiotto:2014kfa}.}  This may be unfamiliar language, so to connect to something more familiar, let us revisit the relation between the path integral and Hamiltonian formalisms.

\subsection{Quantization}
\label{sec:quantization}

A single path integral can be interpreted in terms of different time evolutions in different Hilbert spaces.  For example, in a rotationally-invariant Euclidean theory on $\R^d$, we can choose any direction as ``time" and think of states living on slices orthogonal to the time direction (figure~\ref{fig:differentquantizations}).  We call each interpretation a ``quantization" of the theory.

\begin{figure}
\begin{center}
\includegraphics[width=0.65\textwidth]{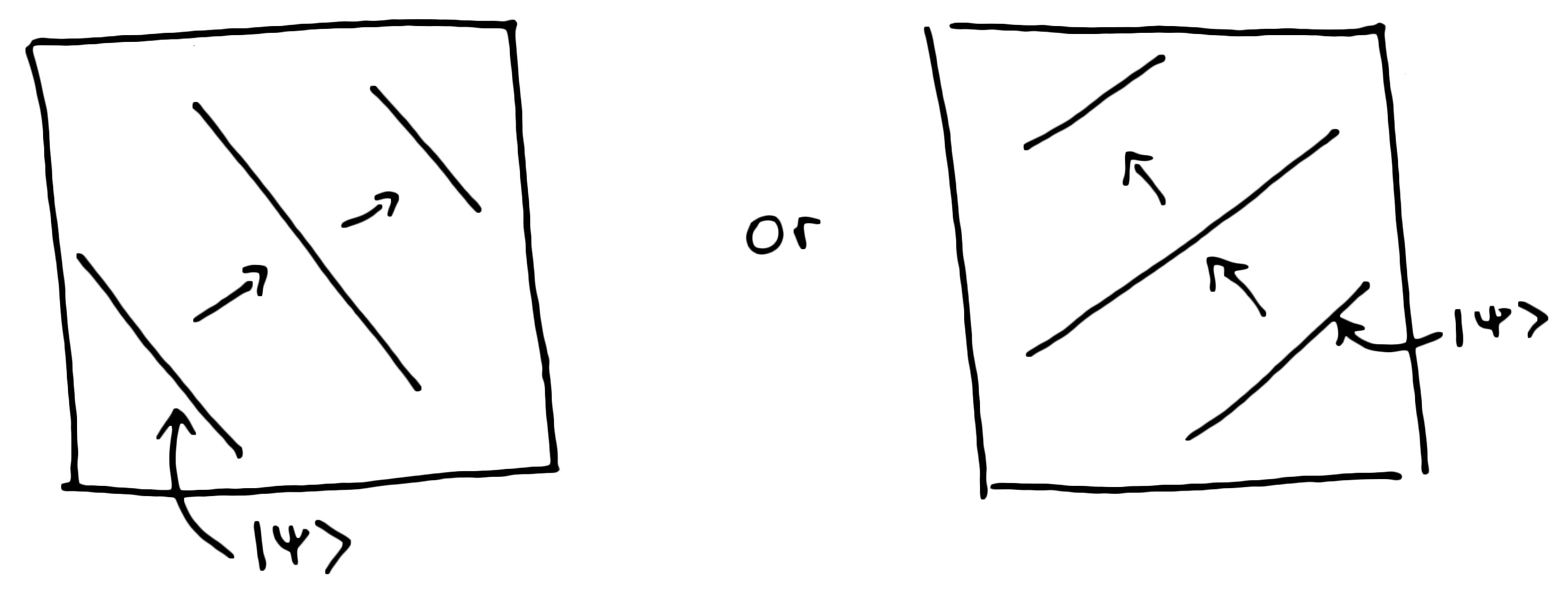}
\end{center}
\caption{\label{fig:differentquantizations} In a rotationally invariant Euclidean theory, we can choose any direction as time.  States live on slices orthogonal to the time direction.}
\end{figure}

To specify a quantization, we foliate spacetime by slices related by an isometry $\ptl_t$. A slice has an associated Hilbert space of states.  A correlation function $\<\cO_1(x_1)\cdots\cO_n(x_2)\>$ gets interpreted as a time-ordered expectation value
\be
\label{eq:timeordered}
\<\cO_1(x_1)\cdots\cO_n(x_n)\> &=& \<0|T\{\widehat \cO_1(t_1,\bx_1)\cdots \widehat \cO_n(t_n,\bx_n)\}|0\>.
\ee
Here, the time ordering $T\{\dots\}$ is with respect to our foliation, $|0\>$ is the vacuum in the Hilbert space $\cH$ living on a spatial slice,\footnote{Other choices of initial and final state correspond to different boundary conditions for the path integral.} and $\widehat\cO_i(x):\cH\to \cH$ are quantum operators corresponding to the path integral insertions $\cO_i(x)$.
  
A different quantization of the theory would give a completely different Hilbert space $\cH'$, a completely different time-ordering, and completely different quantum operators $\widehat \cO_i'$.  However, some equations satisfied by these new operators on this new Hilbert space would be unchanged.  For example, if we arrange the operators as shown on the right-hand side of (\ref{eq:timeordered}), we always get the correlator on the left-hand side.

We demonstrate these ideas explicitly in appendix~\ref{app:latticequantization}, where we show how to (discretely) quantize the lattice Ising model in different ways.

\subsection{Topological Operators and Symmetries}

\begin{figure}
\begin{center}
\includegraphics[width=0.5\textwidth]{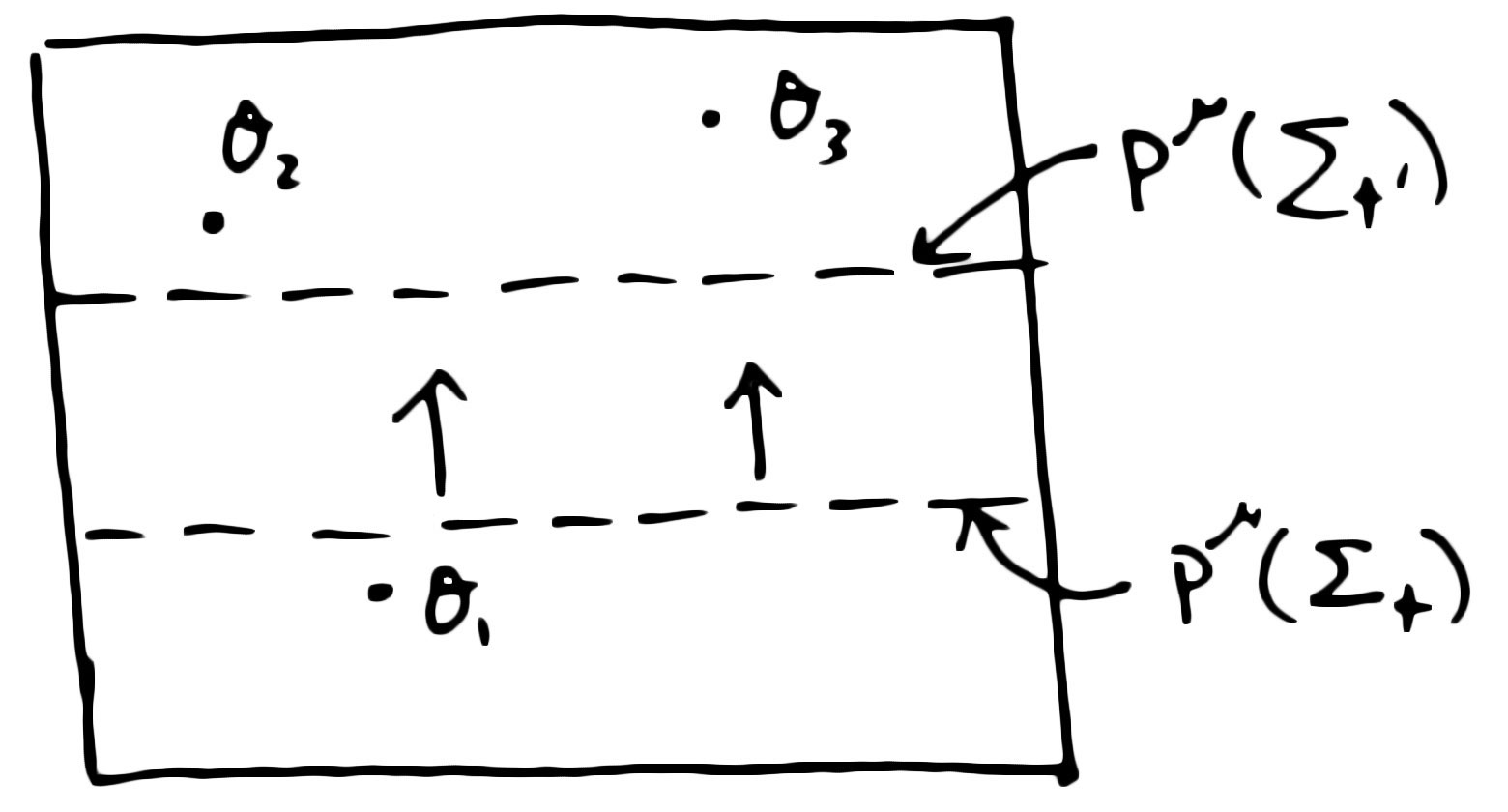}
\end{center}
\caption{\label{fig:slidingcharges} The charge $P^\mu(\Sigma_t)$ can be moved from one time to another $t\to t'$ without changing the correlation function.}
\end{figure}

Let $\Sigma_t$ be a spatial slice at time $t$ and consider the operator $P^\mu(\Sigma_t)$.  Because $P^\mu(\Sigma)$ is topological, we are free to move it forward or backward in time from one spatial slice to another as long as it doesn't cross any operator insertions (figure~\ref{fig:slidingcharges}). In fact, we  often neglect to specify the surface $\Sigma_t$ and just write $P^\mu$ (though we should keep in mind where the surface lives with respect to other operators). We call $P^\mu$ ``momentum," and the fact that it's topological is the path integral version of the statement that momentum is conserved.

Let us understand what happens when we move $P^\mu$ past an operator insertion. Consider a local operator $\cO(x)$ at time $t$. If $\Sigma_1$, $\Sigma_2$ are spatial surfaces at times $t_1<t<t_2$, then when we quantize our theory, the difference $P^\mu(\Sigma_{2})-P^\mu(\Sigma_{1})$ becomes a commutator because of time ordering,
\be
\<(P^\mu(\Sigma_{2})-P^\mu(\Sigma_{1}))\cO(x)\dots\>=\<0|T\{[\widehat P^\mu,\widehat \cO(x)]\dots\}|0\>.
\ee
(We assume that the other insertions ``$\dots$" are not between times $t_1$ and $t_2$.)
Because $P^\mu$ is topological, we can deform $\Sigma_2-\Sigma_1$ to a sphere $S$ surrounding $\cO(x)$, as in figure~\ref{fig:deformingcharges}.  Then using the Ward identity (\ref{eq:integratedwardidentity}), we find
\be
\<0|T\{[\widehat P^\mu, \widehat \cO(x)]\dots\}|0\>&=&\<(P^\mu(\Sigma_2)-P^\mu(\Sigma_1))\cO(x)\dots\>\nn\\
&=&\<P^\mu(S)\cO(x)\dots\>\nn\\
&=& \ptl^\mu\<\cO(x)\dots\>\nn\\
&=& \ptl^\mu\<0|T\{\widehat \cO(x)\dots\}|0\>,
\ee
in other words,
\be
[\widehat P^\mu, \widehat \cO(x)] &=& \ptl^\mu\widehat \cO(x).
\ee

\begin{figure}
\begin{center}
\includegraphics[width=0.8\textwidth]{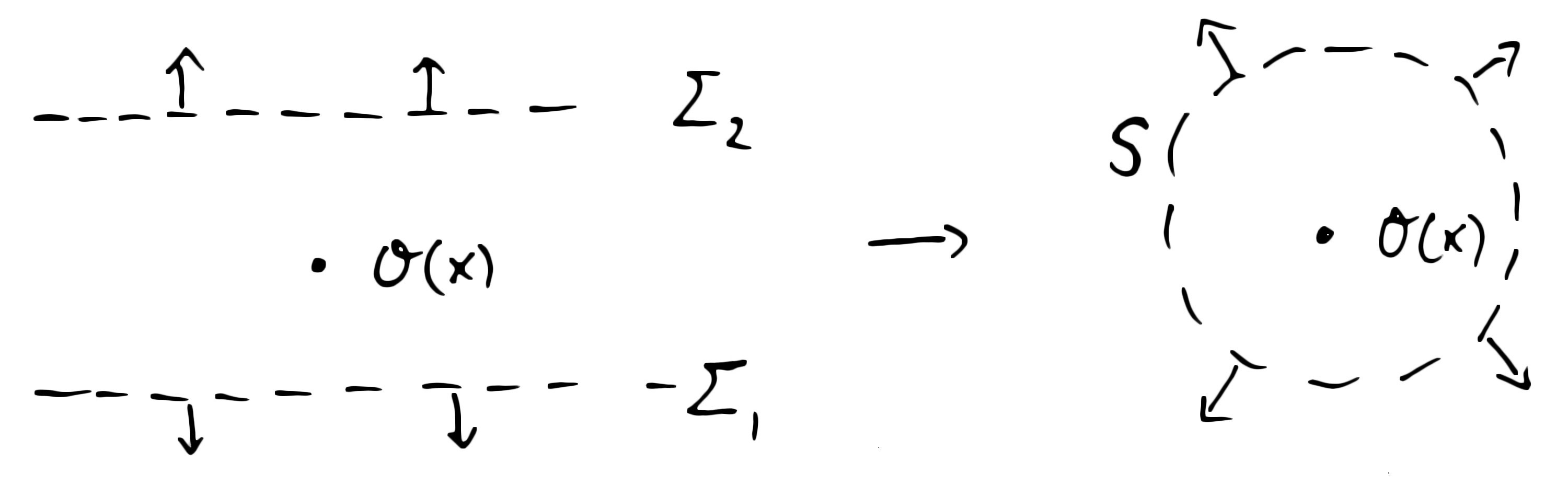}
\end{center}
\caption{\label{fig:deformingcharges} For any charge $Q(\Sigma)$, we can deform $Q(\Sigma_2)-Q(\Sigma_1)=Q(\Sigma_2-\Sigma_1)$ to an insertion of $Q(S)$. Here, arrows indicate the orientation of the surface.}
\end{figure}

Figure~\ref{fig:deformingcharges} makes it clear that this result is independent of how we quantize our theory, since we always obtain a sphere surrounding $\cO(x)$ no matter which direction we choose as ``time."  Thus, we often write
\be
\label{commutator}
[P^\mu,\cO(x)] &=& \ptl^\mu\cO(x),
\ee
without specifying a quantization.  In fact, from now on, we will no longer distinguish between path integral insertions $\cO(x)$ and quantum operators $\widehat \cO(x)$.
The expression $[Q,\cO(x)]$ can be interpreted as either an actual commutator $[\widehat Q,\widehat \cO(x)]$ in any quantization of the theory, or in path-integral language as surrounding $\cO(x)$ with a topological surface operator $Q(S)$.

Figure~\ref{fig:deformingcharges} also explains why the commutator $[Q,\cO(x)]$ of a charge $Q$ with a local operator $\cO(x)$ is local, even though $Q$ is nonlocal (it is the integral of a current). The reason is that the support of $Q$ can be deformed to an arbitrarily small sphere $S$ around $x$, so that the insertion $Q(S)\cO(x)$ only affects the path integral in an infinitesimal neighborhood of $x$.  In general, the way local operators transform under symmetry is always insensitive to IR details like spontaneous symmetry breaking or finite temperature.  This is because commutators with conserved charges can be computed at short distances.

Equation (\ref{commutator}) integrates to
\be
\label{eq:integratedtranslations}
\cO(x) &=& e^{x\.P}\cO(0)e^{-x\.P}.
\ee
This statement is also true in any quantization of the theory.  In path integral language, $e^{x\.P}(\Sigma)$ is another type of topological surface operator.  When we surround $\cO(0)$ with $e^{x\.P}(\Sigma)$, it becomes conjugation $e^{x\.P}(\Sigma)\cO(0)\to e^{\widehat P\. x}\widehat\cO(0)e^{- \widehat P\. x}$ in any quantization.

Consider the time-ordered correlator (\ref{eq:timeordered}) with $t_n>\cdots>t_1$.  Using (\ref{eq:integratedtranslations}), it becomes
\begin{align}
&\<\cO_1(x_1)\cdots\cO_n(x_n)\>\nonumber\\
&= \<0|e^{t_n P^0}\cO_n(0,\bx_n)e^{-t_n P^0}\cdots e^{t_1 P^0}\cO_1(0,\bx_1)e^{-t_1P^0}|0\>\nonumber\\
&=\<0|\cO_n(0,\bx_n)e^{-(t_n-t_{n-1})P^0}\cdots e^{-(t_2-t_1)P^0}\cO_1(0,\bx_1)|0\>.
\end{align}
In other words, the path integral between spatial slices separated by time $t$ computes the time evolution operator $U(t)=e^{-tP^0}$.  In unitary theories (defined in more detail in section~\ref{sec:reflectionpositivity}), $P^0$ has a positive real spectrum, so $U(t)$ causes damping at large time separations.

\subsection{More Symmetries}

Given a conserved current $\ptl_\mu J^\mu=0$ (operator equation), we can always define a topological surface operator by integration.\footnote{It is an interesting question whether the converse is true. When a theory has a Lagrangian description, the Noether procedure gives a conserved current for any continuous symmetry (that is manifest in the Lagrangian).  Proving Noether's theorem without a Lagrangian is an open problem.} For $P^\nu$, the corresponding currents are $T^{\mu\nu}(x)$.  More generally, given a vector field $\e=\e^\mu(x)\ptl_\mu$, the charge
\be
Q_\e(\Sigma) &=& -\int_\Sigma dS_\mu \e_\nu(x) T^{\mu\nu}(x)
\ee
will be conserved whenever
\be
0&=&\ptl_\mu(\e_\nu T^{\mu\nu}) \nn\\
&=&
 \ptl_\mu \e_\nu T^{\mu\nu}+\e_\nu \ptl_\mu T^{\mu\nu}\nn\\
&=& \frac 1 2(\ptl_\mu \e_\nu+\ptl_\nu \e_\mu) T^{\mu\nu},
\ee
or
\be
\label{eq:killingvector}
\ptl_\mu\e_\nu+\ptl_\nu\e_\mu &=& 0.
\ee
This is the Killing equation. In flat space, it has solutions
\begin{align}
\label{eq:poincaregenerators}
p_\mu &= \ptl_\mu &&\textrm{(translations)},\nn\\
m_{\mu\nu} &= x_\nu\ptl_\mu - x_\mu\ptl_\nu && \textrm{(rotations)}.
\end{align}
The corresponding charges
are momentum $P_\mu=Q_{p_\mu}$ and angular momentum $M_{\mu\nu}=Q_{m_{\mu\nu}}$.

\section{Conformal Symmetry}

In a conformal theory, the stress tensor satisfies an additional condition: it is traceless,
\be
T_\mu^\mu(x) &=& 0 \qquad\textrm{(operator equation)}.
\ee
This is equivalent to the statement that the theory is insensitive to position-dependent rescalings of the metric $\de g_{\mu\nu}=\omega(x) g_{\mu\nu}$ near flat space.\footnote{In curved space, there can by Weyl anomalies.} When the stress tensor is traceless, we can relax the requirement (\ref{eq:killingvector}) further to
\be
\label{eq:conformalkilling}
\ptl_\mu\e_\nu + \ptl_\nu \e_\mu = c(x)\de_{\mu\nu},
\ee
where $c(x)$ is a scalar function.  Contracting both sides with $\de^{\mu\nu}$ gives $c(x)=\frac{2}{d}\ptl\.\e(x)$.  Equation (\ref{eq:conformalkilling}) is the {\it conformal\/} Killing equation.  It has two additional types of solutions in $\R^d$,
\begin{align}
\label{eq:extraconformalgenerators}
d &= x^\mu \ptl_\mu &&\textrm{(dilatations)},\nn\\
k_\mu &= 2x_\mu (x\.\ptl)-x^2\ptl_\mu &&\textrm{(special conformal transformations)}.
\end{align}
The corresponding symmetry charges are $D=Q_d$ and $K_\mu=Q_{k_\mu}$.\footnote{The above solutions are present in any spacetime dimension.  In two dimensions, there exists an infinite set of additional solutions to the conformal Killing equation, leading to an infinite set of additional conserved quantities \cite{Belavin:1984vu}.  This is an extremely interesting subject that we unfortunately won't have time for in this course.}

\subsection{Finite Conformal Transformations}

Before discussing the charges $P_\mu,M_{\mu\nu},D,K$, let us take a moment to understand the geometrical meaning of the conformal Killing vectors (\ref{eq:poincaregenerators}) and (\ref{eq:extraconformalgenerators}).  Consider an infinitesimal transformation $x^\mu\to x'^\mu=x^\mu+\e^\mu(x)$.  If $\e^\mu$ satisfies the conformal Killing equation, then
\be
\label{eq:conformalinfinitesimal}
\pdr{x'^\mu}{x^\nu} &=& \de^{\mu}_\nu+\ptl_\nu\e^\mu
\ \ =\ \ \p{1+\frac 1 d (\ptl\.\e)}\p{\de^\mu_\nu + \frac 1 2\p{\ptl_\nu \e^\mu - \ptl^\mu \e_\nu}}.
\ee
The right-hand side is an infinitesimal rescaling times an infinitesimal rotation.  Exponentiating gives a coordinate transformation $x\to x'$ such that
\be
\label{eq:conformalfinite}
\pdr{x'^\mu}{x^\nu} &=& \Omega(x)R^\mu{}_\nu{}(x),\qquad R^TR=I_{d\x d},
\ee
where $\Omega(x)$ and $R^\mu{}_\nu{}(x)$ are finite position-dependent rescalings and rotations.
Equivalently, the transformation $x\to x'$ rescales the metric by a scalar factor,
\be
\delta_{\mu\nu}\pdr{x'^\mu}{x^\alpha}\pdr{x'^\nu}{x^\beta} &=& \Omega(x)^2\de_{\alpha\beta}.
\ee
Such transformations are called {\it conformal}. They comprise the conformal group, a finite-dimensional subgroup of the diffeomorphism group of $\R^d$.

The exponentials of $p_\mu$ and $m_{\mu\nu}$ are translations and rotations. Exponentiating $d$ gives a scale transformation $x\to \lambda x$.  
We can understand the exponential of $k_\mu$ by first considering an inversion
\be
I:x^\mu \to \frac{x^\mu}{x^2}.
\ee
$I$ is a conformal transformation, but it is not continuously connected to the identity, so it can't be obtained by exponentiating a conformal Killing vector. This means that a CFT need not have $I$ as a symmetry.
\begin{exercise}
Show that $I$ is continuously connected to a reflection $x^0\to -x^0$.  Conclude that a CFT is invariant under $I$ if and only if it is invariant under reflections.
\end{exercise}

\begin{exercise}
Show that $k_\mu = -I p_\mu I$. Conclude that $e^{a\.k}$ implements the transformation
\be
\label{eq:finitespecialconformal}
x &\to& x'(x)=\frac{x^\mu-a^\mu x^2}{1-2(a\.x)+a^2 x^2}.
\ee
\end{exercise}
We can think of $k_\mu$ as a ``translation that moves infinity and fixes the origin" in the same sense that the usual translations move the origin and fix infinity, see figure~\ref{fig:translationnearinfinity}.

\begin{figure}
\begin{center}
\includegraphics[width=0.5\textwidth]{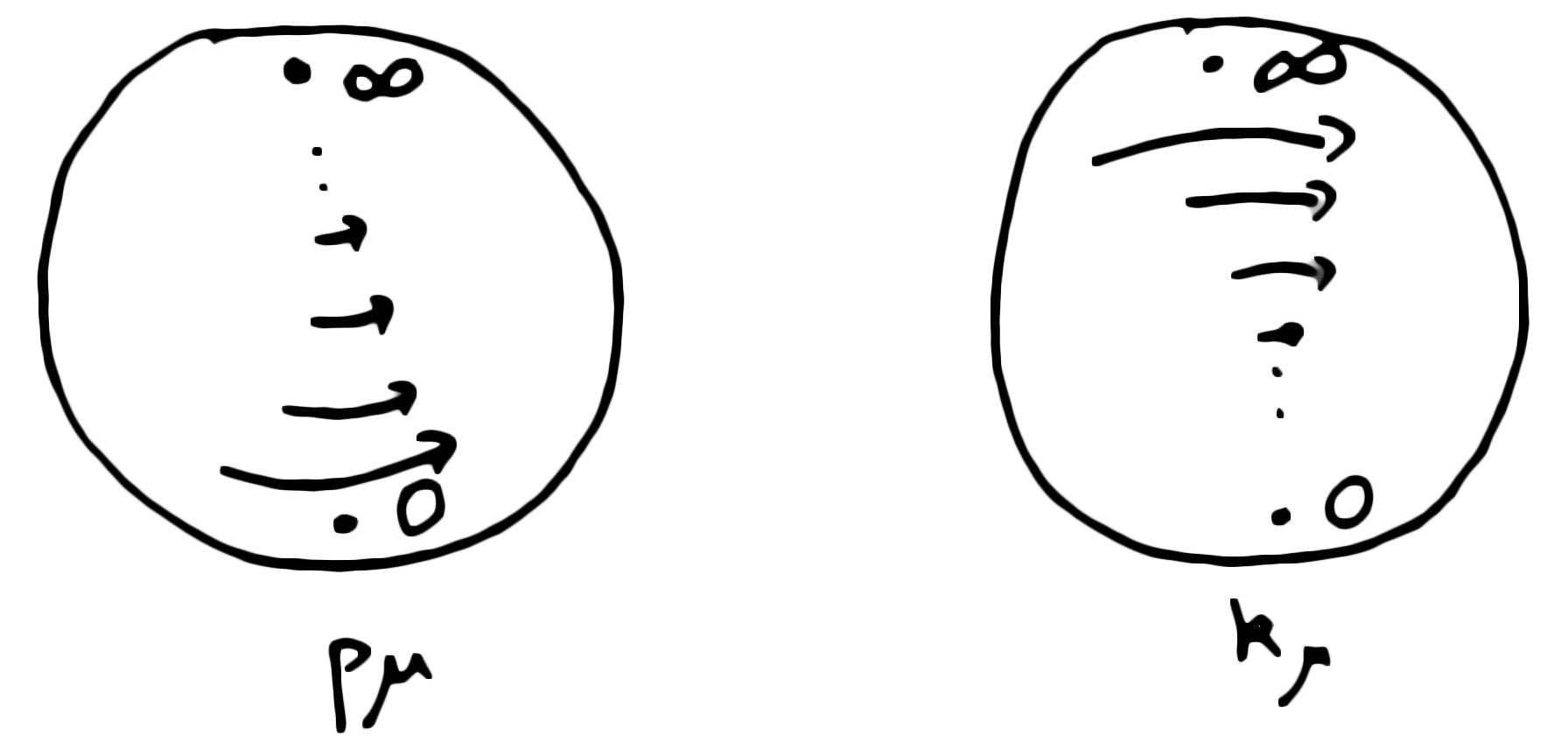}
\end{center}
\caption{\label{fig:translationnearinfinity} $k_\mu$ is analogous to $p_\mu$, with the origin and the point at infinity swapped by an inversion.}
\end{figure}

\subsection{The Conformal Algebra}

The charges $Q_\e$ give a representation of the conformal algebra
\be
\label{eq:conformalalgebra}
[Q_{\e_1},Q_{\e_2}] &=& Q_{-[\e_1,\e_2]},
\ee
where $[\e_1,\e_2]$ is a commutator of vector fields.\footnote{The minus sign in (\ref{eq:conformalalgebra}) comes from the fact that when charges $Q_i$ are represented by differential operators $\cD_i$, repeated action reverses the order $[Q_1,[Q_2,\cO]]=\cD_2 \cD_1 \cO$.  Alternatively, we could have introduced an extra minus sign in the $Q$'s so that $[Q,\cO]=-\cD$ and then $Q,\cD$ would have the same commutation relations.}  This is not obvious and deserves proof. In fact, it is {\it not true\/} in 2-dimensional CFTs, where the algebra of charges is a central extension of the the algebra of conformal killing vectors.

\begin{exercise}
Show that in $d\geq 3$,
\be
\label{eq:conformaltransfofT}
[Q_\e,T^{\mu\nu}] &=& \e^\rho\ptl_\rho T^{\mu\nu}+(\ptl\.\e)T^{\mu\nu}-\ptl_\rho \e^\mu T^{\rho\nu}+\ptl^\nu\e_\rho T^{\rho\mu}.
\ee
Argue as follows. Assume that only the stress tensor appears on the right-hand side.\footnote{Bonus exercise: can other operators appear?} Using linearity in $\e$, dimensional analysis, and the conformal Killing equation, show that (\ref{eq:conformaltransfofT}) contains all terms that could possibly appear.\footnote{The terms on the right-hand side are local in $\e$ because we can evaluate $[Q_{\e},T^{\mu\nu}(x)]$ in an arbitrarily small neighborhood of $x$. Assuming the singularity as two $T^{\mu\nu}$'s coincide is bounded, we can then replace $\e$ by its Taylor expansion around $x$.}  Fix the relative coefficients using conservation, tracelessness, and symmetry under $\mu\leftrightarrow \nu$. Fix the overall coefficient by matching with (\ref{commutator}).
\end{exercise}

\begin{exercise}
Using (\ref{eq:conformaltransfofT}), prove the commutation relation (\ref{eq:conformalalgebra}).
\end{exercise}

\begin{exercise}
When $d=2$, it's possible to add an extra term in (\ref{eq:conformaltransfofT}) proportional to the unit operator that is consistent with dimensional analysis, conservation, and tracelessness.  Find this term (up to an overall coefficient),\footnote{The coefficient can be fixed by comparing with the OPE, see e.g. \cite{Polchinski:1998rq}. It is proportional to the central charge $c$.} and show how it modifies the commutation relations (\ref{eq:conformalalgebra}). This is the Virasoro algebra!
\end{exercise}

As usual, (\ref{eq:conformalalgebra}) is true in any quantization of the theory.  In path integral language, it tells us how to move the topological surface operators $Q_{\e}(\Sigma)$ through each other.

\begin{exercise} Show that
\be
\,[M_{\mu\nu},P_\rho] &=& \de_{\nu\rho}P_\mu - \de_{\mu\rho}P_\nu,\\
\,[M_{\mu\nu},K_\rho] &=& \de_{\nu\rho}K_\mu - \de_{\mu\rho}K_\nu,\\
\,[M_{\mu\nu},M_{\rho\s}] &=& \de_{\nu\rho}M_{\mu\s}-\de_{\mu\rho}M_{\nu\s}+\de_{\nu\s}M_{\rho\mu}-\de_{\mu\s}M_{\rho\nu},\label{eq:mmcommutator}\\
\label{eq:dpcommutator}
\,[D,P_\mu]&=&P_\mu,\\
\label{eq:dkcommutator}
\,[D,K_\mu]&=&-K_\mu,\\
\,[K_\mu,P_\nu]&=&2\de_{\mu\nu}D-2M_{\mu\nu},
\ee
and all other commutators vanish.
\end{exercise}
The first three commutation relations say that $M_{\mu\nu}$ generates the algebra of Euclidean rotations $\SO(d)$ and that $P_\mu,K_\mu$ transform as vectors.  The last three are more interesting.  Equations~(\ref{eq:dpcommutator}) and (\ref{eq:dkcommutator}) say that $P_\mu$ and $K_\mu$ can be thought of as raising and lowering operators for $D$. We will return to this idea shortly.

\begin{exercise} Define the generators
\be
\label{eq:conformalgeneratorssodplus11}
L_{\mu\nu}&=&M_{\mu\nu},\nn\\
L_{-1,0} &=& D,\nn\\
L_{0,\mu} &=& \frac 1 2 (P_\mu+K_\mu),\nn\\
L_{-1,\mu}&=& \frac 1 2 (P_\mu-K_\mu),
\ee
where $L_{ab}=-L_{ba}$ and $a,b\in \{-1,0,1,\dots,d\}$.  Show that $L_{ab}$ satisfy the commutation relations of $\SO(d+1,1)$.
\end{exercise}
The fact that the conformal algebra is $\SO(d+1,1)$ suggests that it might be good to think about its action in terms of $\R^{d+1,1}$ instead of $\R^d$.  This is the idea behind the ``embedding space formalism" \cite{Dirac:1936fq,Mack:1969rr,Boulware:1970ty,Ferrara:1973eg,Weinberg:2010fx,Costa:2011mg}, which provides a simple and powerful way to understand the constraints of conformal invariance. We will be more pedestrian in this course, but I recommend reading about the embedding space formalism in the lecture notes by Penedones \cite{Joao} or Rychkov \cite{Rychkov:2016iqz}.

\section{Primaries and Descendants}

Now that we have our conserved charges, we can classify operators into representations of those charges.  We do this in steps. First we classify operators into Poincare representations, then scale+Poincare representations, and finally conformal representations.

\subsection{Poincare Representations}

In a rotationally-invariant QFT, local operators at the origin transform in irreducible representations of the rotation group $\SO(d)$,
\be
\label{eq:rotationatorigin}
[M_{\mu\nu},\cO^a(0)]&=& (\cS_{\mu\nu}){}_b{}^a\cO^b(0),
\ee
where $\cS_{\mu\nu}$ are matrices satisfying the same algebra as $M_{\mu\nu}$, and $a,b$ are indices for the $\SO(d)$ representation of $\cO$.\footnote{The funny index contractions in (\ref{eq:rotationatorigin}) ensure that $M_{\mu\nu}$ and $\cS_{\mu\nu}$ have the same commutation relations (exercise!).}\footnote{Because our commutation relations (\ref{eq:mmcommutator}) for $\SO(d)$ differ from the usual conventions by a factor of $i$, the generators $\cS_{\mu\nu}$ will be {\it anti-}hermitian, $\cS^\dag=-\cS$.}  We often suppress spin indices and write the right-hand side as simply $\cS_{\mu\nu}\cO(0)$. The action (\ref{eq:rotationatorigin}), together with the commutation relations of the Poincare group, determines how rotations act away from the origin.

\begin{figure}
\begin{center}
\includegraphics[width=0.5\textwidth]{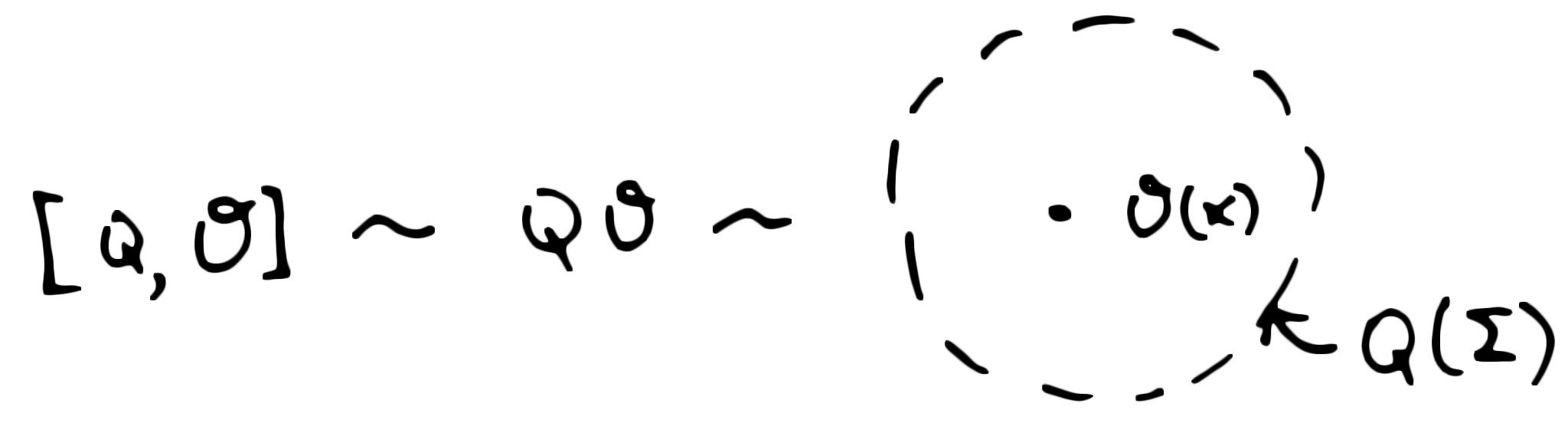}
\includegraphics[width=0.5\textwidth]{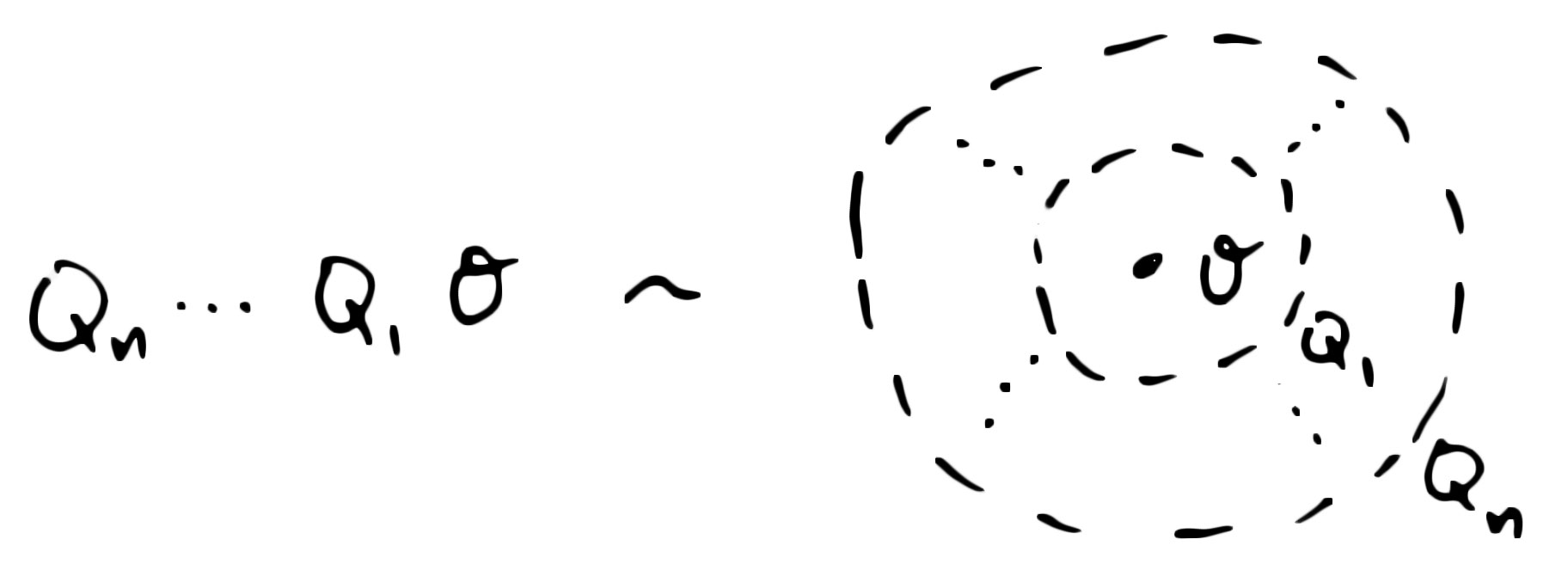}
\end{center}
\caption{The shorthand notation $Q\cO$ stands for surrounding $\cO$ with a surface operator $Q(\Sigma)$. Equivalently, it stands for $[Q,\cO]$ in any quantization of the theory. \label{fig:commutatorissurround}}
\end{figure}

To see this, it is convenient to adopt shorthand notation where commutators of charges with local operators are implicit, $[Q,\cO] \to Q \cO$, see figure~\ref{fig:commutatorissurround}.  This notation is valid because of the Jacobi identity (more formally, the fact that adjoint action gives a representation of a Lie algebra).  In path integral language, $Q_n\cdots Q_1 \cO(x)$ means surrounding $\cO(x)$ with topological surface operators where $Q_n$ is the outermost surface and $Q_1$ is the innermost.  The conformal commutation relations tell us how to re-order these surfaces.

Acting with a rotation on $\cO(x)$, we have
\be
M_{\mu\nu}\cO(x) &=& M_{\mu\nu}e^{x\.P}\cO(0) \nn\\
&=& e^{x\.P}(e^{-x\.P} M_{\mu\nu} e^{x\.P})\cO(0)\nn\\
&=& e^{x\.P}(-x_\mu P_\nu + x_\nu P_\mu+M_{\mu\nu})\cO(0)\nn\\
&=& (x_\nu \ptl_\mu - x_\mu\ptl_\nu+\cS_{\mu\nu})e^{x\.P}\cO(0)\nn\\
&=& (m_{\mu\nu}+\cS_{\mu\nu})\cO(x).\label{eq:actionbyrotation}
\ee
In the third line, we've used the Poincare algebra and the Hausdorff formula
\be
e^{A}Be^{-A} = e^{[A,\.]}B
= B+[A,B]+\frac 1 {2!}[A,[A,B]]+\dots.
\ee

\subsection{Scale+Poincare Representations}

In a scale-invariant theory, it's also natural to diagonalize the dilatation operator acting on operators at the origin,
\be
\label{eq:dilatationcondition}
[D,\cO(0)]&=&\Delta \cO(0).
\ee
The eigenvalue $\Delta$ is the {\it dimension\/} of $\cO$.
\begin{exercise}
Mimic the computation (\ref{eq:actionbyrotation}) to derive the action of dilatation on $\cO(x)$ away from the origin,
\be
\label{eq:dilatationaction}
[D,\cO(x)] &=& (x^\mu\ptl_\mu + \Delta)\cO(x).
\ee
\end{exercise}

Equation (\ref{eq:dilatationaction}) is constraining enough to fix two-point functions of scalars up to a constant.  Firstly, by rotation and translation invariance, we must have
\be
\<\cO_1(x)\cO_2(y)\>&=&f(|x-y|),
\ee
for some function $f$.

In a scale-invariant theory with scale-invariant boundary conditions, the simultaneous action of $D$ on all operators in a correlator must vanish, as illustrated in figure~\ref{fig:wardidentityford}.  Moving $D$ to the boundary gives zero.\footnote{It is also interesting to consider non-scale-invariant boundary conditions. These can be interpreted as having a nontrivial operator at $\oo$.}  On the other hand, shrinking $D$ to the interior gives the sum of its actions on the individual operators.  By the Ward identity (\ref{eq:dilatationaction}), this is
\be
\label{eq:wardidentityforcorrelator}
0 &=& \p{x^\mu\ptl_\mu + \Delta_1+y^\mu\ptl_\mu+\Delta_2}f(|x-y|).
\ee
We could alternatively derive (\ref{eq:wardidentityforcorrelator}) by working in some quantization, where it follows from trivial algebra and the fact that $D|0\> = 0$,
\be
0 &=& \<0|[D,\cO_1(x)\cO_2(y)]|0\>\nn\\
&=& \<0|[D,\cO_1(x)]\cO_2(y)+\cO_1(x)[D,\cO_2(y)]|0\>\nn\\
&=& \p{x^\mu\ptl_\mu + \Delta_1+y^\mu\ptl_\mu+\Delta_2}\<0|\cO_1(x)\cO_2(y)|0\>.
\ee
Either way, the solution is
\be
f(|x-y|) &=& \frac{C}{|x-y|^{\Delta_1+\Delta_2}}.
\ee

If we had an operator with negative scaling dimension, then its correlators would grow with distance, violating cluster decomposition. This is unphysical, so we expect dimensions $\De$ to be positive. Shortly, we will prove this fact for unitary conformal theories (and derive even stronger constraints on $\De$).

\begin{figure}
\begin{center}
\includegraphics[width=0.75\textwidth]{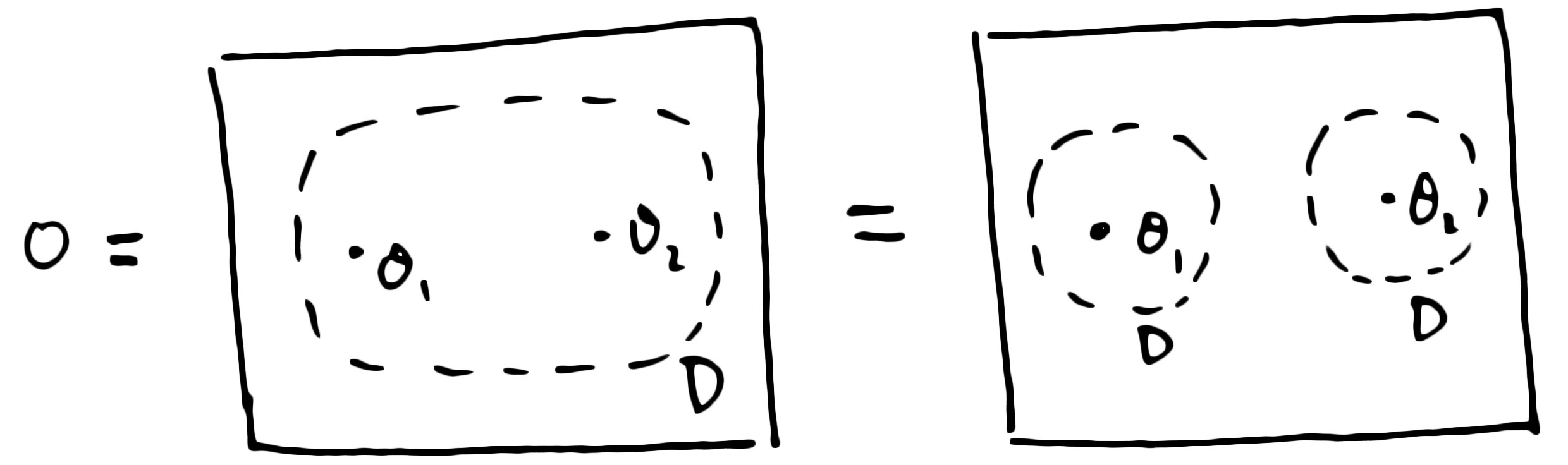}
\end{center}
\caption{The Ward identity for scale invariance of a two-point function. \label{fig:wardidentityford}}
\end{figure}

\subsection{Conformal Representations}

Note that $K_\mu$ is a lowering operator for dimension,
\be
D K_\mu \cO(0) &=& ([D,K_\mu] + K_\mu D)\cO(0)\nn\\
&=& (\De-1)K_\mu \cO(0).
\ee
(Again, we're using shorthand notation $[Q,\cO]\to Q\cO$.)  Thus, given an operator $\cO(0)$, we can repeatedly act with $K_\mu$ to obtain operators $K_{\mu_1}\dots K_{\mu_n}\cO(0)$ with arbitrarily low dimension.  Because dimensions are bounded from below in physically sensible theories, this process must eventually terminate.  That is, there must exist operators such that
\be
\label{eq:primarycondition}
[K_\mu,\cO(0)] &=& 0\qquad\textrm{(primary operator)}.
\ee
Such operators are called ``primary."  Given a primary, we can construct operators of higher dimension, called ``descendants," by acting with momentum generators, which act like raising operators for dimension,
\be
\cO(0) &\to& P_{\mu_1}\cdots P_{\mu_n}\cO(0)\qquad\textrm{(descendant operators)}\nn\\
\De &\to& \De+n.
\ee
For example, $\cO(x)=e^{x\.P}\cO(0)$ is an (infinite) linear combination of descendant operators.
The conditions (\ref{eq:rotationatorigin},~\ref{eq:dilatationcondition},~\ref{eq:primarycondition}) are enough to determine how $K_\mu$ acts on any descendant using the conformal algebra.  For example,
\begin{exercise}
Let $\cO(0)$ be a primary operator with rotation representation matrices $\cS_{\mu\nu}$ and dimension $\Delta$.  Using the conformal algebra, show
\be
[K_\mu, \cO(x)] &=& (k_\mu + 2\De x_\mu - 2x^\nu \cS_{\mu\nu})\cO(x),
\label{eq:actionofK}
\ee
where $k_\mu$ is the conformal Killing vector defined in~(\ref{eq:extraconformalgenerators}). 
\end{exercise}

To summarize, a primary operator satisfies
\be
\label{eq:isotropyaction}
\,[D,\cO(0)] &=& \De\cO(0)\nn\\
\,[M_{\mu\nu},\cO(0)] &=& \cS_{\mu\nu}\cO(0)\nn\\
\,[K_\mu,\cO(0)] &=& 0.
\ee
From these conditions, we can construct a representation of the conformal algebra out of 
$\cO(0)$ and its descendants,
\be
\label{eq:conformalrepresentation}
\begin{array}{c|c}
\textrm{operator} & \textrm{dimension}
\\
\hline
\vdots & \\
P_{\mu_1}P_{\mu_2}\cO(0) & \De+2\\
\uparrow & \\
P_{\mu_1} \cO(0) & \De+1\\
\uparrow &\\
\cO(0) & \De.
\end{array}
\ee
The action of conformal generators on each state follows from the conformal algebra.  This should remind you of the construction of irreducible representations of $\SU(2)$ starting from a highest-weight state.  In this case, our primary is a {\it lowest-weight\/} state of $D$, but the representation is built in an analogous way.\footnote{Generically, the representation (\ref{eq:conformalrepresentation}) is an {\it induced representation} $\mathrm{Ind}^G_H(R_H)$, where $H$ is the subgroup of the conformal group generated by $D,M_{\mu\nu},K_\mu$ (called the isotropy subgroup), $R_H$ is the finite-dimensional representation of $H$ defined by (\ref{eq:isotropyaction}), and $G$ is the full conformal group. It is also called a parabolic Verma module.  Sometimes the operator $\cO$ satisfies ``shortening conditions" where a linear combination of descendants vanishes. (A conserved current is an example.)  In this case, the Verma module is reducible and the actual conformal multiplet of $\cO$ is one of the irreducible components.}  It turns out that any local operator in a unitary CFT is a linear combination of primaries and descendants. We will prove this in section~\ref{sec:onlyprimariesanddescendants}.

\begin{exercise}
Show that (\ref{commutator}), (\ref{eq:actionbyrotation}), (\ref{eq:dilatationaction}), and (\ref{eq:actionofK}) can be summarized as
\be
\label{eq:generatorsummary}
[Q_\e,\cO(x)] &=& \p{\e\.\ptl + \frac{\De}{d}(\ptl\.\e) - \frac 1 2 (\ptl^\mu \e^\nu)\cS_{\mu\nu}}\cO(x).
\ee
\end{exercise}

\begin{exercise}
Deduce that $T^{\mu\nu}$ is primary by comparing (\ref{eq:generatorsummary}) with (\ref{eq:conformaltransfofT}). 
\end{exercise}

\subsection{Finite Conformal Transformations}

An exponentiated charge $U=e^{Q_\e}$ implements a finite conformal transformation.  Denote the corresponding diffeomorphism $e^{\e}$ by $x\mapsto x'(x)$. By comparing with (\ref{eq:conformalinfinitesimal}) and (\ref{eq:conformalfinite}), we find that (\ref{eq:generatorsummary}) exponentiates to
\be
\label{eq:finiteprimarytransformation}
U \cO^a(x) U^{-1} &=& \Omega(x')^\De D(R(x'))_b{}^a\cO^b(x'),
\ee
where as before
\be
\pdr{x'^\mu}{x^\nu} &=& \Omega(x')R^\mu{}_\nu(x'),\qquad R^\mu{}_\nu(x')\in\SO(d).
\ee
Here, $D(R)_b{}^a$ is a matrix implementing the action of $R$ in the $\SO(d)$ representation of $\cO$, for example
\begin{align}
D(R) &= 1 && \textrm{(scalar representation)},\nn\\
D(R)_\mu{}^\nu &= R_\mu{}^\nu && \textrm{(vector representation)},\label{eq:vectorrep}\nn\\
 &\cdots&& \cdots
\end{align}
and so on.

We could have started the whole course by taking (\ref{eq:finiteprimarytransformation}) as the definition of a primary operator. But the connection to the underlying conformal algebra will be crucial in what follows, so we have chosen to derive it.

\begin{exercise}
Show that the transformation (\ref{eq:finiteprimarytransformation}) composes correctly to give a representation of the conformal group.  That is, show
\be
U_{g_1}U_{g_2}\cO^a(x) U_{g_2}^{-1} U_{g_1}^{-1} &=& U_{g_1g_2}\cO^a(x)U_{g_1g_2}^{-1}
\ee
where $x\mapsto g_{i}(x)$ are conformal transformations, $g_1g_2$ denotes composition $x\mapsto g_1(g_2(x))$, and $U_{g}$ is the unitary operator associated to $g$.
\end{exercise}

\section{Conformal Correlators}

\subsection{Scalar Operators}
\label{sec:conformalcorrelatorsscalars}

We have already seen that scale invariance fixes two-point functions of scalars up to a constant
\be
\label{eq:scaletwoptfunction}
\<\cO_1(x_1)\cO_2(x_2)\> &=& \frac{C}{|x_1-x_2|^{\De_1+\De_2}} \qquad\textrm{(SFT)}.
\ee

For primary scalars in a CFT, the correlators must satisfy a stronger Ward identity,
\be
\label{eq:scalarconformalcorrelator}
\<\cO_1(x_1)\dots\cO_n(x_n)\> &=& \<(U\cO_1(x_1)U^{-1}) \cdots (U\cO_n(x_n)U^{-1})\>\nn\\
 &=& \Omega(x_1')^{\De_1}\cdots\Omega(x_n')^{\De_n}\<\cO_1(x_n')\cdots \cO_n(x_n')\>.
\ee
Let us check whether this holds for (\ref{eq:scaletwoptfunction}).

\begin{exercise}
Show that for a conformal transformation,
\be
\label{eq:conformaltransformationofdistance}
(x-y)^2 &=& \frac{(x'-y')^2}{\Omega(x')\Omega(y')}.
\ee
Hint: This is obviously true for translations, rotations, and scale transformations. It suffices to check it for inversions $I:x\to\frac{x}{x^2}$ (why?).
\end{exercise}
Using (\ref{eq:conformaltransformationofdistance}), we find
\be
\frac{C}{|x_1-x_2|^{\De_1+\De_2}} &=& \Omega(x_1')^{\frac{\De_1+\De_2}{2}}\Omega(x_2')^{\frac{\De_1+\De_2}{2}}\frac{C}{|x_1'-x_2'|^{\De_1+\De_2}}.
\ee
Consistency with (\ref{eq:scalarconformalcorrelator}) then requires $\De_1=\De_2$ or $C=0$.  In other words,
\be
\<\cO_1(x_1)\cO_2(x_2)\> &=& \frac{C\de_{\De_1\De_2}}{x_{12}^{2\De_1}}\qquad\textrm{(CFT, primary operators)},
\ee
where $x_{12}\equiv x_1-x_2$.
\begin{exercise}
Recover the same result using the Ward identity for $K_\mu$
\be
\<[K_\mu,\cO_1(x_1)]\cO_2(x_2)\>+\<\cO_1(x_1)[K_\mu,\cO_2(x_2)]\> &=& 0.
\ee
\end{exercise}

Conformal invariance is also powerful enough to fix a three-point function of primary scalars, up to an overall coefficient.  Using (\ref{eq:conformaltransformationofdistance}), it's easy to check that the famous formula \cite{Polyakov:1970xd}
\be
\label{eq:conformalthreeptfunction}
\<\cO_1(x_1)\cO_2(x_2)\cO_3(x_3)\> &=& \frac{f_{123}}{x_{12}^{\De_1+\De_2-\De_3}x_{23}^{\De_2+\De_3-\De_1}x_{31}^{\De_3+\De_1-\De_2}},
\ee
with $f_{123}$ constant, satisfies the Ward identity (\ref{eq:scalarconformalcorrelator}).

With four points, there are nontrivial conformally invariant combinations of the points called ``conformal cross-ratios,"
\be
\label{eq:definitionofcrossratios}
u = \frac{x_{12}^2 x_{34}^2}{x_{13}^2 x_{24}^2},\qquad
v = \frac{x_{23}^2 x_{14}^2}{x_{13}^2 x_{24}^2}.
\ee
The reason that there are exactly two independent cross-ratios can be understood as follows.
\begin{itemize}
\item Using special conformal transformations, we can move $x_4$ to infinity.
\item Using translations, we can move $x_1$ to zero.
\item Using rotations and dilatations, we can move $x_3$ to $(1,0,\dots,0)$.
\item Using rotations that fix $x_3$, we can move $x_2$ to $(x,y,0,\dots,0)$.
\end{itemize}

\begin{figure}
\begin{center}
\includegraphics[width=0.4\textwidth]{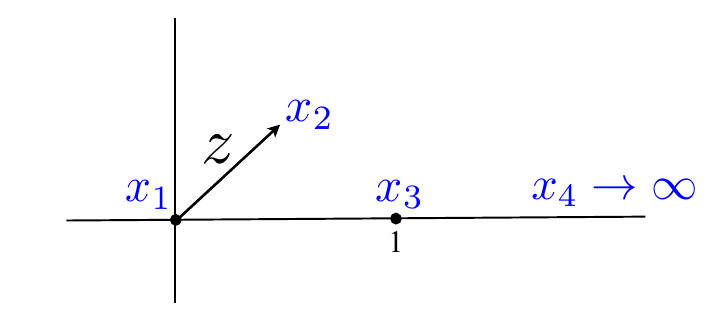}
\end{center}
\caption{\label{fig:zplane} Using conformal transformations, we can place four points on a plane in the configuration shown above (figure from \cite{Hogervorst:2013sma}).}
\end{figure}

This procedure leaves exactly two undetermined quantities $x,y$, giving two independent conformal invariants. Evaluating $u$ and $v$ for this special configuration of points (figure~\ref{fig:zplane}) gives
\be
u=z\bar z,\qquad v=(1-z)(1-\bar z),
\ee
where $z\equiv x+iy$.

Four-point functions can depend nontrivially on the cross-ratios.  For a scalar $\f$ with dimension $\De_\phi$, the formula
\be
\label{eq:fourptfunctionofprimaries}
\<\f(x_1)\f(x_2)\f(x_3)\f(x_4)\> &=& \frac{g(u,v)}{x_{12}^{2\De_\f}x_{34}^{2\De_\f}}
\ee
satisfies the Ward identity (\ref{eq:scalarconformalcorrelator}) for any function $g(u,v)$. 
\begin{exercise}
Generalize (\ref{eq:fourptfunctionofprimaries}) to the case of non-identical scalars $\f_i(x)$ with dimensions $\De_i$.
\end{exercise}

The left-hand side of (\ref{eq:fourptfunctionofprimaries}) is manifestly invariant under permutations of the points $x_i$.  This leads to consistency conditions on $g(u,v)$,
\begin{align}
\label{eq:trivialcrossing}
g(u,v) &= g(u/v,1/v) && \textrm{(from swapping $1\leftrightarrow 2$ or $3\leftrightarrow 4$)},\\
\label{eq:crossingsymmetry}
g(u,v) &= \p{\frac{u}{v}}^{\De_\f} g(v,u) && \textrm{(from swapping $1\leftrightarrow 3$ or $2\leftrightarrow 4$)}.
\end{align}
All other permutations can be generated from the ones above.  We will see shortly that $g(u,v)$ is  actually determined in terms of the dimensions $\De_i$ and three-point coefficients $f_{ijk}$ of the theory.  Equation~(\ref{eq:trivialcrossing}) will be satisfied for trivial reasons.  However (\ref{eq:crossingsymmetry}) will lead to powerful constraints on the $\De_i, f_{ijk}$.

\subsection{Spinning Operators}

The story is similar for operators with spin.  For brevity, we give the answers without doing any computations.  The embedding space formalism provides a transparent and practical way to derive all of these results \cite{Costa:2011mg}, so it's not worth dwelling on them here.

Two-point functions of spinning operators are fixed by conformal invariance.  They are nonzero only if the operators have identical dimensions and spins.  For example, a two-point function of spin-1 operators with dimension $\Delta$ is given by
\be
\label{eq:twoptfunctionofspin1}
\<J^\mu(x)J_\nu(y)\> &=& C_J \frac{I^\mu{}_{\nu}(x-y)}{(x-y)^{2\De}},\\
 I^{\mu}{}_{\nu}(x)&\equiv& \de^\mu_\nu-2\frac{x^\mu x_\nu}{x^2},
 \label{eq:Itensor}
\ee
where $C_J$ is a constant. Note that $I^\mu{}_{\nu}(x)$ is the orthogonal matrix associated with an inversion, $\pdr{x'^\mu}{x^\nu}=\Omega(x) I^\mu{}_\nu(x)$.
\begin{exercise}
Check that (\ref{eq:twoptfunctionofspin1}) is consistent with conformal symmetry.  Hint: it is enough to check inversions.
\end{exercise}

Two-point functions of operators in more general spin representations can be constructed from the above.  For spin-$\ell$ traceless symmetric tensors,
\be
\label{eq:twopointfunctionofspinL}
\<J^{\mu_1\dots\mu_\ell}(x)J_{\nu_1\dots\nu_\ell}(0)\> &=& C_J \p{\frac{I^{(\mu_1}{}_{\nu_1}(x)\cdots I^{\mu_\ell)}{}_{\nu_\ell}(x)}{x^{2\De}} - \mathrm{traces}},
\ee
where we can symmetrize either the $\mu$'s or $\nu$'s (or both).  Subtracting traces means adding terms proportional to $\de^{\mu_i\mu_j}$ and $\de_{\nu_i\nu_j}$ so that the result is separately traceless in the $\mu$ indices and the $\nu$ indices (not necessarily under $\mu$-$\nu$ contractions).

It is sometimes conventional to normalize $J$ so that $C_J=1$ in (\ref{eq:twoptfunctionofspin1}), (\ref{eq:twopointfunctionofspinL}).  An exception is if $J$ already has a natural normalization.  For example, the normalization of the stress tensor is fixed by demanding that $T^{\mu\nu}$ satisfy the appropriate Ward identities.  In this case, $C_T$ is physically meaningful.

Three-point functions are fixed up to a finite number of coefficients.  For example, a three-point function of scalars $\phi_{1}$, $\phi_2$ and a spin-$\ell$ operator $J_{\mu_1\dots\mu_\ell}$ is determined up to a single coefficient $f_{\f_1\f_2 J}$,
\be
\label{eq:scalarscalarspinL}
\<\f_1(x_1)\f_2(x_2)J^{\mu_1\dots\mu_\ell}(x_3)\> &=& \frac{f_{\f_1\f_2 J}(Z^{\mu_1}\cdots Z^{\mu_\ell} - \mathrm{traces})}{x_{12}^{\De_1+\De_2-\De_3+\ell}x_{23}^{\De_2+\De_3-\De_1-\ell}x_{31}^{\De_3+\De_1-\De_2-\ell}},\nn\\
Z^\mu &\equiv& \frac{x_{13}^\mu}{x_{13}^2}-\frac{x_{23}^\mu}{x_{23}^2}.
\ee
When multiple operators have spin, there can be more than one linearly independent structure consistent with conformal invariance.

Formula (\ref{eq:scalarscalarspinL}) applies when $J^{\mu\nu}$ is the stress tensor.  In that case, the coefficient $f_{\phi_1\phi_2 T}$ is fixed by demanding that integrals of $T^{\mu\nu}$ give the correct action of the conformal charges $Q_\e$ (see the exercise in Jo\~ao Penedones' notes \cite{Joao}). The result is 
\be
\label{eq:stresstensorward}
f_{\phi_1\phi_2 T} &=& -\frac{d\De_1}{d-1}\frac 1 {S_d} C_{12},
\ee
where $S_d$ is the volume of the unit sphere $S^{d-1}$ and $C_{12}$ is the coefficient in the two-point function $\<\phi_1(x)\phi_2(0)\>=C_{12}x^{-2\De_1}$ (note $C_{12}$ vanishes unless $\Delta_1=\Delta_2$). The coefficient $f_{\phi_1\phi_2 J}$ is fixed by Ward identities whenever $J$ is a conserved current.

\section{Radial Quantization and the State-Operator Correspondence}

So far, we've written lots of commutation relations, and carefully pointed out that they are true in any quantization of the theory. Now we'll really put that idea to use.  In general, we should to choose quantizations that respect symmetries.  In a scale-invariant theory, it's natural to foliate spacetime with spheres around the origin and consider evolving states from smaller spheres to larger spheres using the dilatation operator (figure~\ref{fig:radialquantization}).  This is called ``radial quantization." The sphere $S^{d-1}$ has an associated Hilbert space $\cH$. We can act on $\cH$ by inserting operators on the surface of the sphere. For example, to act with a symmetry generator $Q$, we insert the surface operator $Q(S^{d-1})$ into the path integral (figure~\ref{fig:chargeactionradialquantization}).

\begin{figure}
\begin{center}
\includegraphics[width=0.35\textwidth]{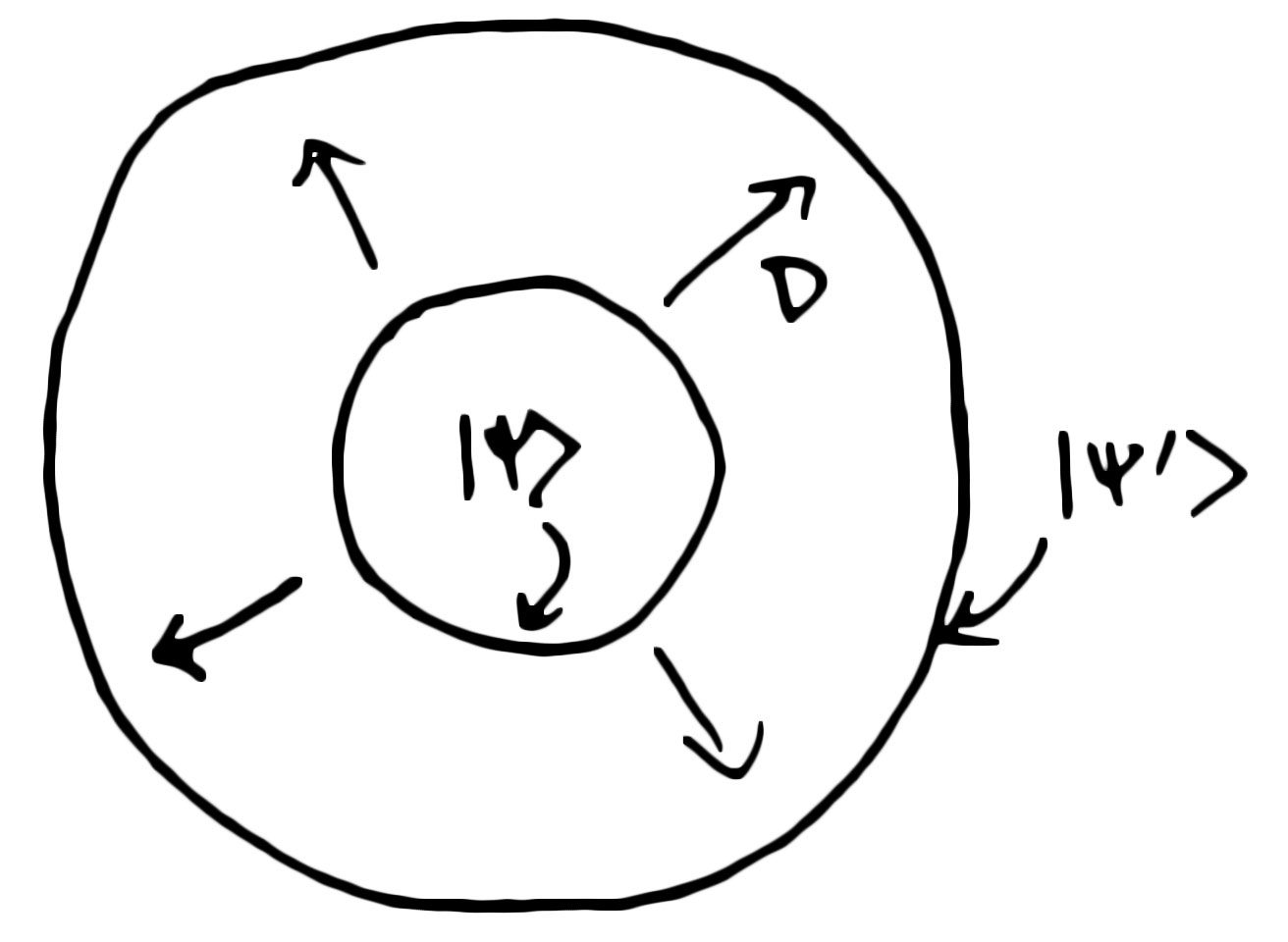}
\end{center}
\caption{In radial quantization, states live on spheres, and we evolve from one state to another with the dilatation operator. \label{fig:radialquantization}}
\end{figure}

\begin{figure}
\begin{center}
\includegraphics[width=0.35\textwidth]{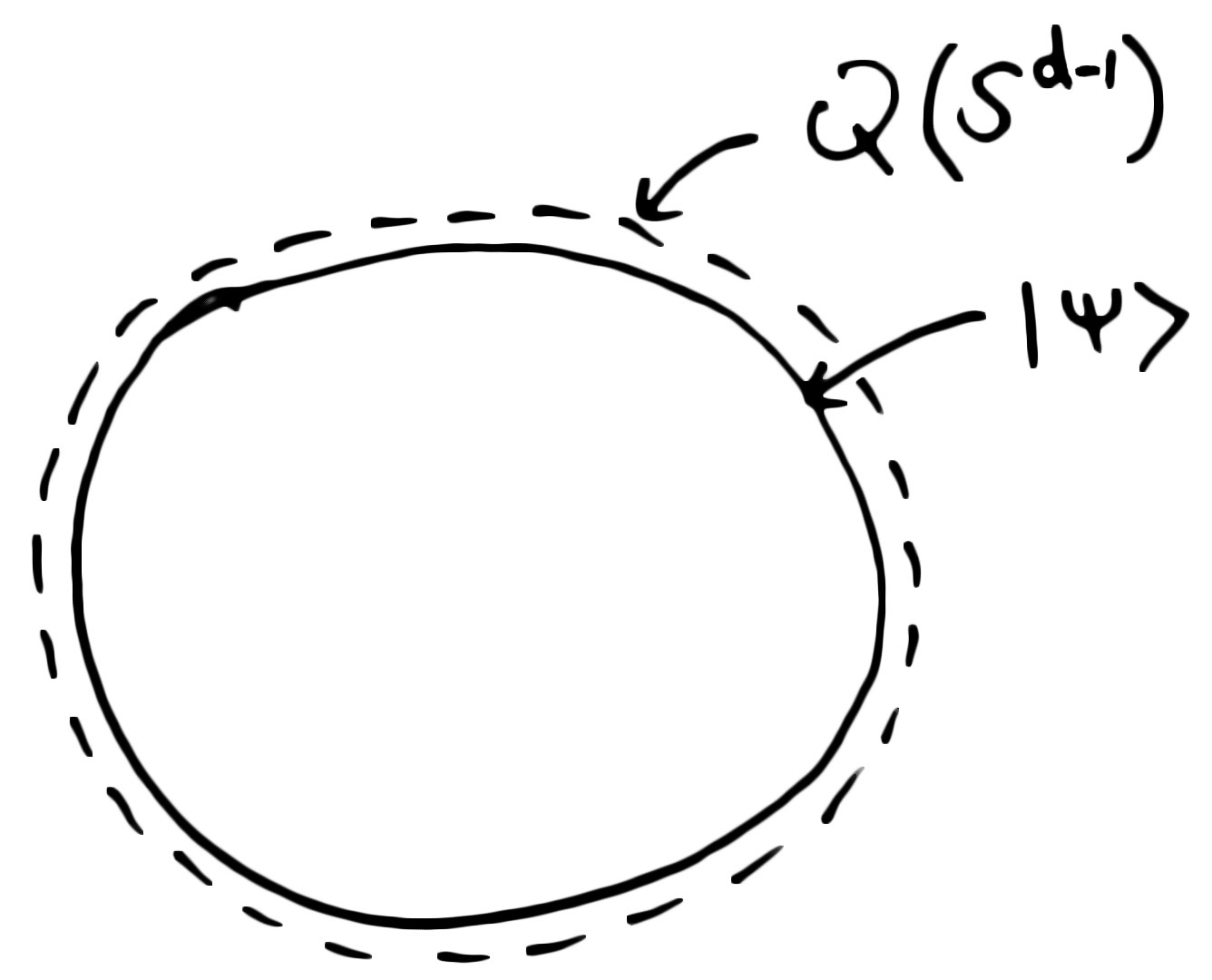}
\end{center}
\caption{We act with a charge in radial quantization by inserting $Q(S^{d-1})$ just outside the sphere on which the state is defined. \label{fig:chargeactionradialquantization}}
\end{figure}

In radial quantization, a correlation function gets interpreted as a radially ordered product,
\be
\<\cO_1(x_1)\cdots \cO_n(x_n)\> &=& \<0|\mathcal{R}\{ \cO_1(x_1)\cdots \cO_n(x_n)\}|0\>\nn\\
&\equiv & \theta(|x_n|-|x_{n-1})\cdots \theta(|x_2|-|x_1|) \<0|\cO(x_n)\cdots\cO(x_1)|0\>\nn\\
&&+\mathrm{permutations}.
\ee
Of course, we can perform radial quantization around different points.  The same correlation function then gets interpreted as an expectation value of differently ordered operators acting on different states in different (but isomorphic) Hilbert spaces (figure~\ref{fig:radialquantdifferentpoints}).  This is completely analogous to changing reference frames in Lorentz invariant theories.  The radial ordering prescription is consistent because operators at the same radius but different angles on the sphere commute, just as spacelike-separated operators commute in the usual quantization.

\begin{figure}
\begin{center}
\includegraphics[width=0.75\textwidth]{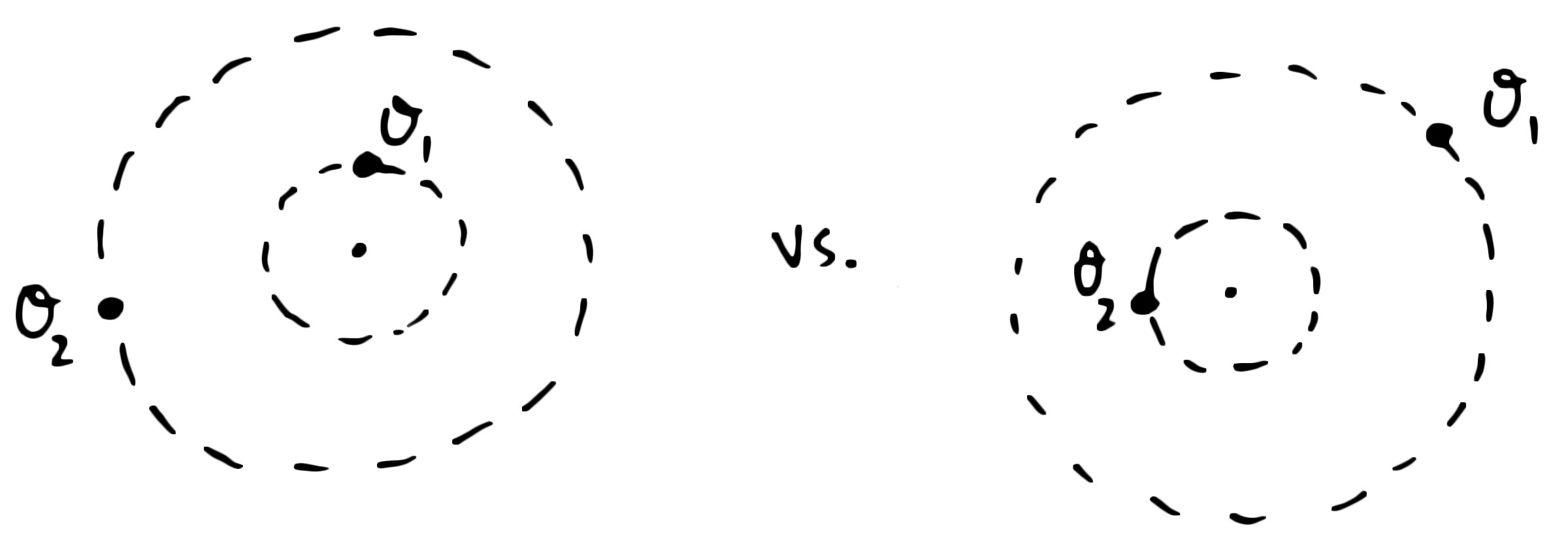}
\end{center}
\caption{When we perform radial quantization around different points, the same correlator gets interpreted as a product of operators with different orderings.  \label{fig:radialquantdifferentpoints}}
\end{figure}

\subsection{Operator $\Longrightarrow$ State}
\label{sec:operatorimpliesstate}

The simplest way to prepare a state in radial quantization is to perform the path integral over the interior $B$ of the sphere, with no operator insertions inside $B$.  This gives the vacuum state $|0\>$ on the boundary $\ptl B$ (figure~\ref{fig: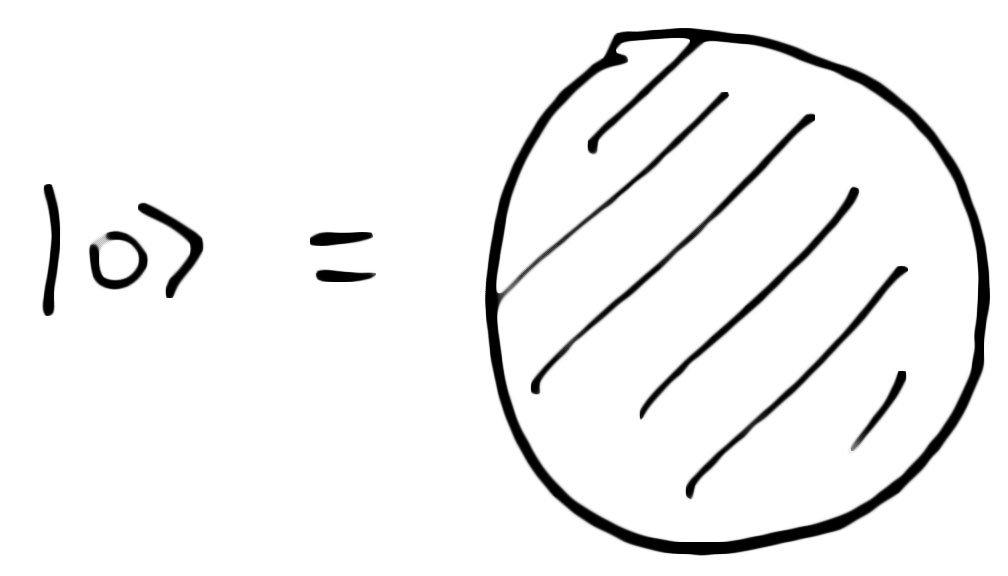}).  It's easy to see that $|0\>$ is invariant under all symmetries because a topological surface on the boundary of $B$ can be shrunk to zero inside $B$ (figure~\ref{fig:vacuuminvariant}).

\begin{figure}
\begin{center}
\includegraphics[width=0.3\textwidth]{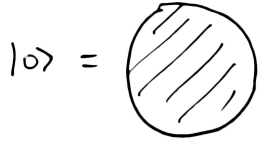}
\end{center}
\caption{The vacuum in radial quantization is given by the path integral over the interior of the sphere, with no operator insertions.  \label{fig:radialvacuum.jpg}}
\end{figure}

\begin{figure}
\begin{center}
\includegraphics[width=0.7\textwidth]{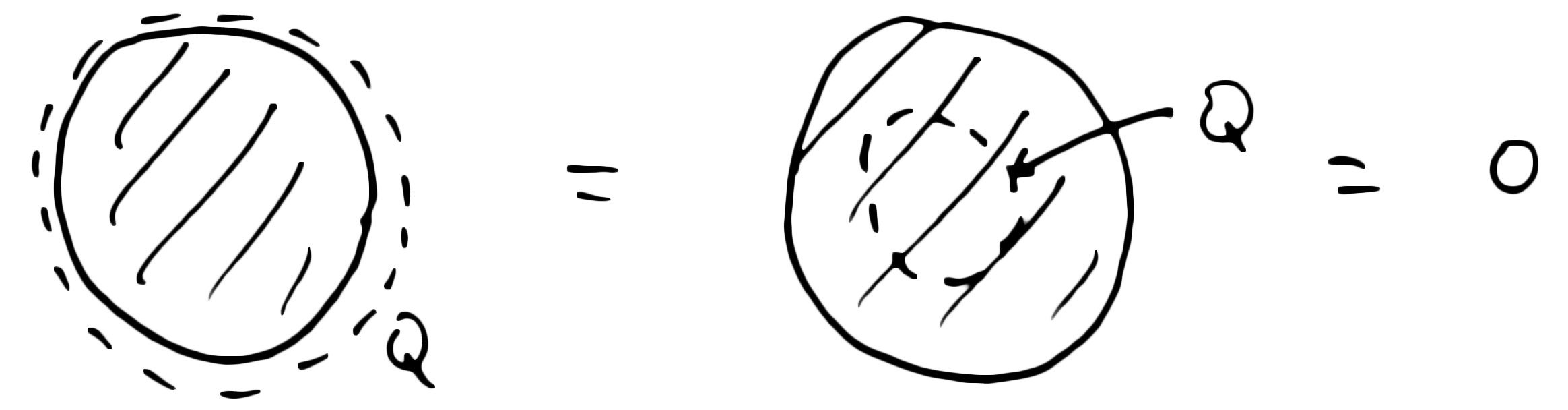}
\end{center}
\caption{The vacuum is automatically invariant under all symmetries.  \label{fig:vacuuminvariant}}
\end{figure}

To be explicit, suppose our CFT is given by the path integral over a scalar field $\phi$.  The Hilbert space in radial quantization is spanned by ``field eigenstates" $|\phi_b\>$, where $\phi_b(\bn)$ is a field configuration on the sphere $\bn\in \ptl B$.  The subscript ``$b$" indicates that $\phi_b$ is defined only on the boundary $\ptl B$ and not in the interior.  A general state is a linear combination of field eigenstates
\be
|\psi\> &\equiv& \int D\phi_b |\phi_b\>\<\phi_b|\psi\>.
\ee
Here, $\int D\phi_b$ represents a $d-1$-dimensional path integral over fields on $\ptl B$.

For the vacuum, the coefficients $\<\phi_b|0\>$ are given by the path integral over the interior with boundary conditions $\phi_b$ and no operator insertions,
\be
\<\phi_b |0\> &=&  \int_{\substack{\phi(1,\bn)=\phi_b(\bn) \\ r \leq 1}} D\phi(r,\bn) e^{-S[\phi]}.
\ee

A more exciting possibility is to insert an operator $\cO(x)$ inside $B$ and then perform the path integral,
\be
\<\phi_b|\cO(x)|0\> &=& \int_{\substack{\phi(1,\bn)=\phi_b(\bn) \\ r \leq 1}} D\phi(r,\bn) \cO(x) e^{-S[\phi]}.
\ee
This defines a state called $\cO(x)|0\>$, see figure~\ref{fig:radialexcited}.
By inserting different operators inside $B$, we can prepare a variety of states on the boundary $\ptl B$. In this language, $|0\>$ is prepared by inserting the unit operator.

\begin{figure}
\begin{center}
\includegraphics[width=0.4\textwidth]{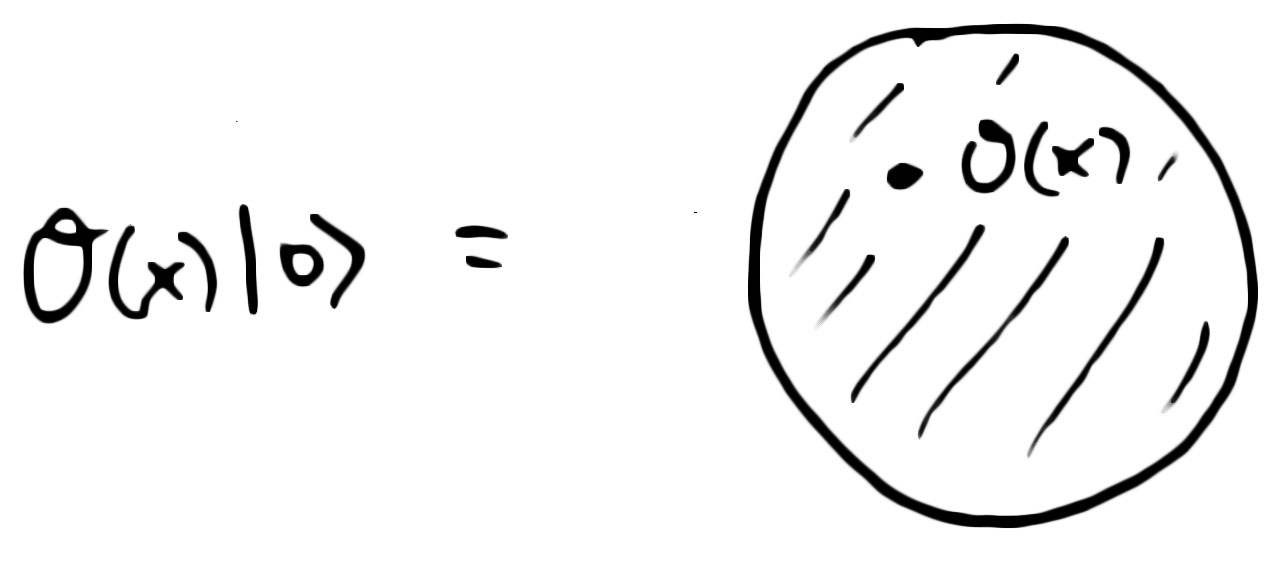}
\end{center}
\caption{The state $\cO(x)|0\>$ is given by inserting $\cO(x)$ inside the sphere and performing the path integral over the interior.  \label{fig:radialexcited}}
\end{figure}

\subsection{Operator $\Longleftarrow$ State}

This construction also works backwards. Let $|\cO_i\>$ be eigenstates of the dilatation operator
\be
D |\cO_i\> &=& \De_i |\cO_i\>.
\ee
The $|\cO_i\>$ can themselves be used as operators: we cut spherical holes $B_i$ out of the path integral centered around positions $x_i$ and glue in the states $|\cO_i\>$ at the boundary of the holes, as in figure~\ref{fig:correlatorofstates}.
\begin{figure}
\begin{center}
\includegraphics[width=0.7\textwidth]{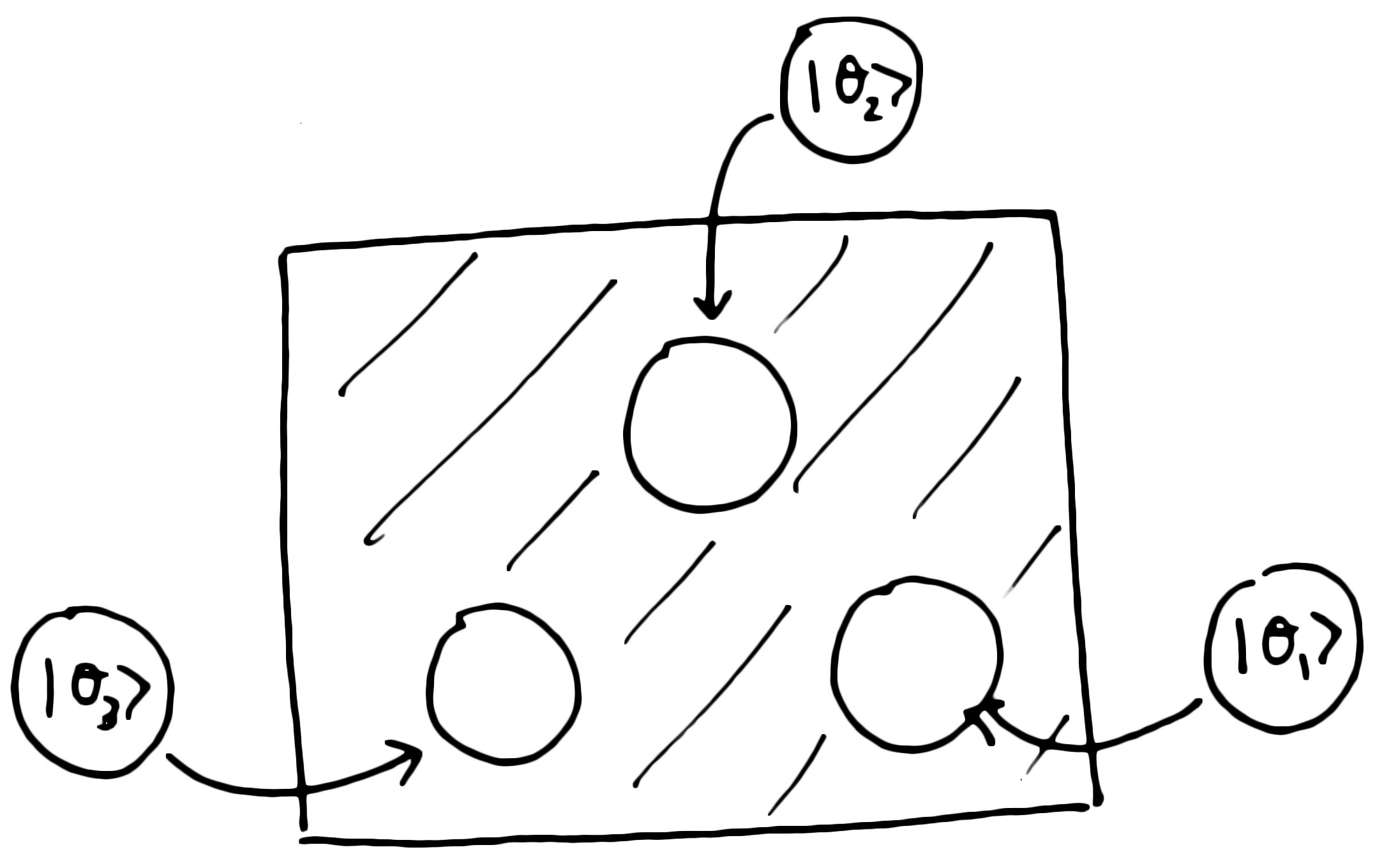}
\end{center}
\caption{A correlator of states is defined by cutting holes out of the path integral and gluing states into the holes.  \label{fig:correlatorofstates}}
\end{figure}
This gives a quantity that behaves exactly like a correlator of local operators.
In the scalar field example, the gluing procedure gives
\be
\<\cO_1(x_1)\cdots \cO_n(x_n)\> &=& \int \prod_i D\phi_{bi} \<\phi_{bi}|\cO_i\> \int_{\substack{\phi_{\ptl i}=\phi_{bi}\\ x \notin B_i }} D\phi(x)\, e^{-S},
\ee
where the path integral $D\phi(x)$ is performed over the region outside the balls $B_i$, and the integrals $D\phi_{bi}$ are over field configurations on the boundaries $\ptl B_i$. Here, $\phi_{\ptl i}$ denotes the restriction of the bulk field $\phi(x)$ to the $i$-th boundary $\ptl B_i$.

This construction only works when the $x_i$ are far enough apart that the balls $B_i$ don't overlap.  If they're too close together, we can use
\be
\<\cO(x_1)\cdots \cO(x_n)\> &=& \lambda^{\sum_i \Delta_i}\<\cO_1(\lambda x_1)\cdots \cO_n(\lambda x_n)\>,
\ee
with $\l$ sufficiently large to define the correlator.  Since the $x_i$ can now be arbitrarily close together, we have defined local operators.\footnote{A more careful construction of the state $\implies$ operator map that doesn't require this rescaling trick is given in Polchinski \cite{Polchinski:1998rq} volume 1, chapter 2.}

\subsection{Operator $\Longleftrightarrow$ State}

So far I've been vague about what I mean by a local operator.  But now, we can give a rigorous definition: we will simply {\it define\/} a local operator to be an eigenstate of $D$ in radial quantization.\footnote{The dilatation operator is diagonalizable in unitary (reflection positive) CFTs.  However, there exist interesting non-unitary theories where $D$ has a nontrivial Jordan block decomposition.  In these cases, we define a local operator as a finite-dimensional representation of $D$.} With this definition, the two constructions above are inverse to each other, with the identification
\be
\cO(0)\quad &\longleftrightarrow& \quad \cO(0)|0\>\equiv |\cO\>.
\ee
This is the ``state-operator correspondence."

It is straightforward to see how the conformal group acts on states in radial quantization.  A primary operator creates a state that is killed by $K_\mu$ and transforms in a finite-dimensional representation of $D$ and $M_{\mu\nu}$,
\begin{align}
\label{eq:operatortostateconditionK}
\,[K_\mu,\cO(0)]&=0 &\longleftrightarrow && K_\mu|\cO\>&=0,\\
\label{eq:operatortostateconditionD}
\,[D,\cO(0)] &= \De\cO(0) &\longleftrightarrow&& D|\cO\>&=\De|\cO\>,\\
\label{eq:operatortostateconditionM}
\,[M_{\mu\nu},\cO(0)]&=\cS_{\mu\nu}\cO(0) &\longleftrightarrow&& M_{\mu\nu}|\cO\> &= \cS_{\mu\nu}|\cO\>.
\end{align}
This follows by acting on $|0\>$ with the operator equations above and using the fact that $|0\>$ is killed by $K,D,$ and $M$.

A conformal multiplet in radial quantization is given by acting with momentum generators on a primary state
\be
|\cO\>, P_\mu|\cO\>, P_\mu P_\nu|\cO\>, \dots \qquad\textrm{(conformal multiplet)}.
\ee
This is equivalent to acting with derivatives of $\cO(x)$ at the origin, for example
\be
\ptl_\mu\cO(x)|_{x=0}|0\> &=& [P_\mu,\cO(0)]|0\>\ \ =\ \ P_\mu|\cO\>.
\ee
The operator $\cO(x)$ creates an infinite linear combination of descendants,
\be
\cO(x)|0\>\ =\  e^{x\.P}\cO(0)e^{-x\.P}|0\>\ =\ e^{x\.P}|\cO\>\ =\ \sum_{n=0}^\oo \frac{1}{n!}(x\.P)^n|\cO\>.\quad
\ee

As with the classification of operators, the action of the conformal algebra on a multiplet in radial quantization is determined by the commutation relations of the algebra. In fact the required computations look {\it exactly identical\/} to the computations we did to determine the action of conformal generators on operators (\ref{eq:actionbyrotation}, \ref{eq:dilatationaction}, \ref{eq:actionofK}).  This is because by surrounding operators with charges supported on spheres, we were secretly doing radial quantization all along!

\subsection{Another View of Radial Quantization}

To study a conformal Killing vector $\e$, it is often helpful to perform a Weyl rescaling of the metric $g\to \Omega(x)^2 g$ so that $\e$ becomes a regular Killing vector, i.e.\ an isometry.  We can turn a dilatation into an isometry by performing a Weyl rescaling from $\R^d$ to the cylinder $\R\x S^{d-1}$,
\be
ds_{\R^d}^2 &=& dr^2 + r^2 ds_{S^{d-1}}^2\nn\\
&=& r^2\p{\frac{dr^2}{r^2} + ds_{S^{d-1}}^2}\nn\\
&=& e^{2\tau}(d\tau^2 + ds_{S^{d-1}}^2) = e^{2\tau} ds_{\R\x S^{d-1}}^2,
\ee
where $r=e^\tau$.

Dilatations $r\to\l r$ become shifts of radial time $\tau\to\tau+\log \l$.  Radial quantization in flat space is equivalent to the usual quantization on the cylinder.  States live on spheres and time evolution is generated by acting with $e^{-D\tau}$.  While the development of radial quantization in the previous sections relied only on scale invariance, the cylinder picture relies on conformal invariance because we have performed a nontrivial Weyl rescaling.

Let us build a more detailed dictionary between the two pictures.  Under a Weyl rescaling, correlation functions of local operators transform as\footnote{In even dimensions, the partition function itself can transform with a Weyl anomaly $\<1\>_g=\<1\>_{\Omega^2 g}e^{S_\mathrm{Weyl}[g]}$.  This will not be important for our discussion, so we have divided through by the partition function.}
\be
\label{eq:weyltransformation}
\frac{\<\cO_1(x_1)\cdots\cO_n(x_n)\>_g}{\<1\>_g} &=& \p{\prod_i \Omega(x_i)^{\De_i}}\frac{\<\cO_1(x_1)\cdots\cO_n(x_n)\>_{\Omega^2g}}{\<1\>_{\Omega^2 g}}.
\ee
This is a nontrivial claim --- if we implement the Ising model in flat space, compute expectation values and take the continuum limit, it's not obvious that the answer should be simply related to the same lattice theory on the cylinder.\footnote{Comparing the flat and cylindrical Ising models is relatively easy in 2d, but harder in 3d since $S^2$ is curved. See \cite{Brower:2014gsa} for a recent attempt.}    In general it isn't, but at the critical value of the coupling when the theory becomes conformal, tracelessness of the stress tensor implies insensitivity to Weyl rescalings, and the answers become related.

\begin{exercise}
By integrating by parts in (\ref{eq:dilatationaction}), show that 
\be
\label{eq:tracetcontact}
T_\mu^\mu(x) \cO(y) &=& \De \de(x-y)\cO(y).
\ee
An insertion of $T_\mu^\mu$ is the response of the theory to an infinitesimal Weyl transformation $g\to e^{2\de\omega} g$. Derive (\ref{eq:weyltransformation}) by exponentiation.\footnote{We cheated here by only deriving (\ref{eq:tracetcontact}) in flat space.  In curved space there is an additional contribution to $T_\mu^\mu$ coming from the Weyl anomaly.  This factor cancels in (\ref{eq:weyltransformation}). There could also be modifications to the contact term (\ref{eq:tracetcontact}). However, in a conformally flat metric, we can simply define the curved space operator $\cO(x)$ so that it satisfies (\ref{eq:tracetcontact}). For instance, we may modify the Weyl factor so that it is constant in a tiny neighborhood of $\cO(x)$ and the flat-space calculation applies. This definition might not be consistent with other independent definitions. For instance, if $\cO(x)$ is the stress tensor, it gives a different answer from the canonical definition (\ref{eq:definitionofstresstensor}) because of the Weyl anomaly.}
\end{exercise}

Thus, given an operator $\cO(x)$ in $\R^d$, it is natural to define a cylinder operator
\be
\label{eq:definitionofcylinderop}
\cO_\mathrm{cyl.}(\tau,\bn) &\equiv& e^{\De \tau} \cO_\mathrm{flat}(x=e^\tau \bn).
\ee
We often omit the subscripts ``cyl." and ``flat," relying on the coordinates to indicate which type of operator we're discussing.
\begin{exercise}
Using (\ref{eq:weyltransformation}), compute a two-point function of cylinder operators
\be
\<\cO(\tau_1,\bn_1)\cO(\tau_2,\bn_2)\>.
\ee
Verify that it is time-translationally invariant on the cylinder. Show that in the limit of large time separation $\tau=\tau_2-\tau_1 \gg 1$, the two-point function has an expansion in terms of the form $e^{-(\De+n)\tau}$ with integer $n\geq 0$.  Interpret these as coming from the exchange of states in the conformal multiplet of $\cO$.
\end{exercise}

\section{Reflection Positivity and Unitarity Bounds}

\subsection{Reflection Positivity}
\label{sec:reflectionpositivity}

In Lorentzian signature, we are interested in unitary theories: theories where the conserved charges (including the Hamiltonian) are Hermitian operators so that they generate unitary transformations.  Unitarity in Lorentzian signature is equivalent to a property called ``reflection positivity" in Euclidean signature.\footnote{We make some brief comments about Euclidean vs. Lorentzian field theory and analytic continuation in appendix~\ref{app:analyticcontinuation}.}

Consider a Lorentzian theory with a local operator $\cO_L$ and Hermitian energy-momentum generators $(H,\bP_L)$ ($L$ is for ``Lorentzian").  We have the textbook formula
\be
\label{eq:textbookinlorentz}
\cO_L(t,\bx) &=& e^{iHt-i\bx\.\bP_L}\cO_L(0,0)e^{-iHt+i\bx\.\bP_L}.
\ee
Let $\cO_L(0,0)$ be Hermitian.  It follows from (\ref{eq:textbookinlorentz}) that $\cO_L(t,\bx)$ is Hermitian too.

Now, let us Wick-rotate to Euclidean signature,
\be
\label{eq:wickrotatedoperator}
\cO_E(t_E,\bx) \equiv \cO_L(-it_E,\bx)
= e^{Ht_E-i\bx\.\bP_L}\cO_L(0,0)e^{-Ht_E+i\bx\.\bP_L}.
\ee
The Euclidean operator satisfies
\be
\cO_E(t_E,\bx)^\dag &=& \cO_E(-t_E,\bx).
\ee
To Wick-rotate an operator with spin, we conventionally add factors of $-i$ to the time components,\footnote{These factors are needed to make Euclidean correlation functions manifestly covariant under $\SO(d)$ rotations.} e.g.\ for a vector operator $\cO_L^\mu$,
\be
\cO_E^0(t_E,\bx) &=& -i \cO_L^0(-it_E,\bx),\nn\\
\cO_E^i(t_E,\bx) &=& \cO_L^i(-it_E,\bx).
\ee
This leads to
\be
\label{eq:reflectionforhermitianconjugation}
\cO_E^{\mu_1\dots\mu_\ell}(t_E,\bx)^\dag &=& \Theta^{\mu_1}{}_{\nu_1}\cdots \Theta^{\mu_\ell}{}_{\nu_\ell} \cO_E^{\nu_1\cdots\nu_\ell}(-t_E,\bx),
\ee
where $\Theta^\mu{}_\nu = \de^\mu_\nu-2\de^\mu_0\de_\nu^0$ is a reflection in the time-direction.

Thus, the way Hermitian conjugation acts on a Euclidean operator depends on which direction we call time.  Whether an operator is Hermitian or not depends on how we quantize the theory! This is very different from Lorentzian signature, where the conjugation properties of operators don't depend on a choice of reference frame.

As an example, consider the momentum generators
\be
P^\mu &=& -\int d^{d-1}\bx\, T^{\mu 0}(0,\bx).
\ee
(From now on, we work in the Euclidean theory and omit the $E$ subscripts.)
Using (\ref{eq:reflectionforhermitianconjugation}), we have
\be
T^{i 0}(0,\bx)^\dag &=& -T^{i0}(0,\bx),\nn\\
T^{00}(0,\bx)^\dag &=& T^{00}(0,\bx).
\ee
It follows that $P^0$ is Hermitian, and the $P^i$ are {\it antihermitian}.  We may write
\be
P^0 = H,\qquad
P^j = -iP^j_L,
\ee
with $H,P_L$ Hermitian, and then (\ref{eq:integratedtranslations}) agrees with the formula we got from Wick rotation (\ref{eq:wickrotatedoperator}).  If we had quantized with a different time direction, say the $x_1$-direction, then we would conclude that $P^1$ is Hermitian, while $P^0,P^2,\dots,P^{d-1}$ are antihermitian.

To reiterate, {\it the way conjugation acts on operators depends on how we quantize our theory}.  This makes sense, because Hermitian conjugation is something you do to operators on Hilbert spaces, and different quantizations have different Hilbert spaces.

This raises the question: given a Euclidean path integral, how do we know if it computes the Wick-rotation of a unitary Lorentzian theory?  One important condition is that norms of states should be positive.  Consider some in-state $|\psi\>$ given by acting on the vacuum with a bunch of operators at negative Euclidean time
\be
|\psi\> &=& \cO(-t_{E1})\cdots\cO(-t_{En})|0\>.
\ee
For brevity, we suppress the spatial positions of the operators.
The conjugate state is given by
\be
\<\psi| &=& (\cO(-t_{E1})\cdots\cO(-t_{En})|0\>)^\dag \nn \\
&=& \<0|\cO(t_{En})\cdots\cO(t_{E1}).
\ee
That is, $\<\psi|$ is given by taking the vacuum in the future and positioning operators in a time-reflected way.  Thus, the condition
\be
\<\psi|\psi\> &\geq& 0
\ee
says that a time-reflection symmetric configuration should have a positive path integral, see figure~\ref{fig:reflectionpositivity}.  This is called ``reflection positivity."

\begin{figure}
\begin{center}
\includegraphics[width=0.5\textwidth]{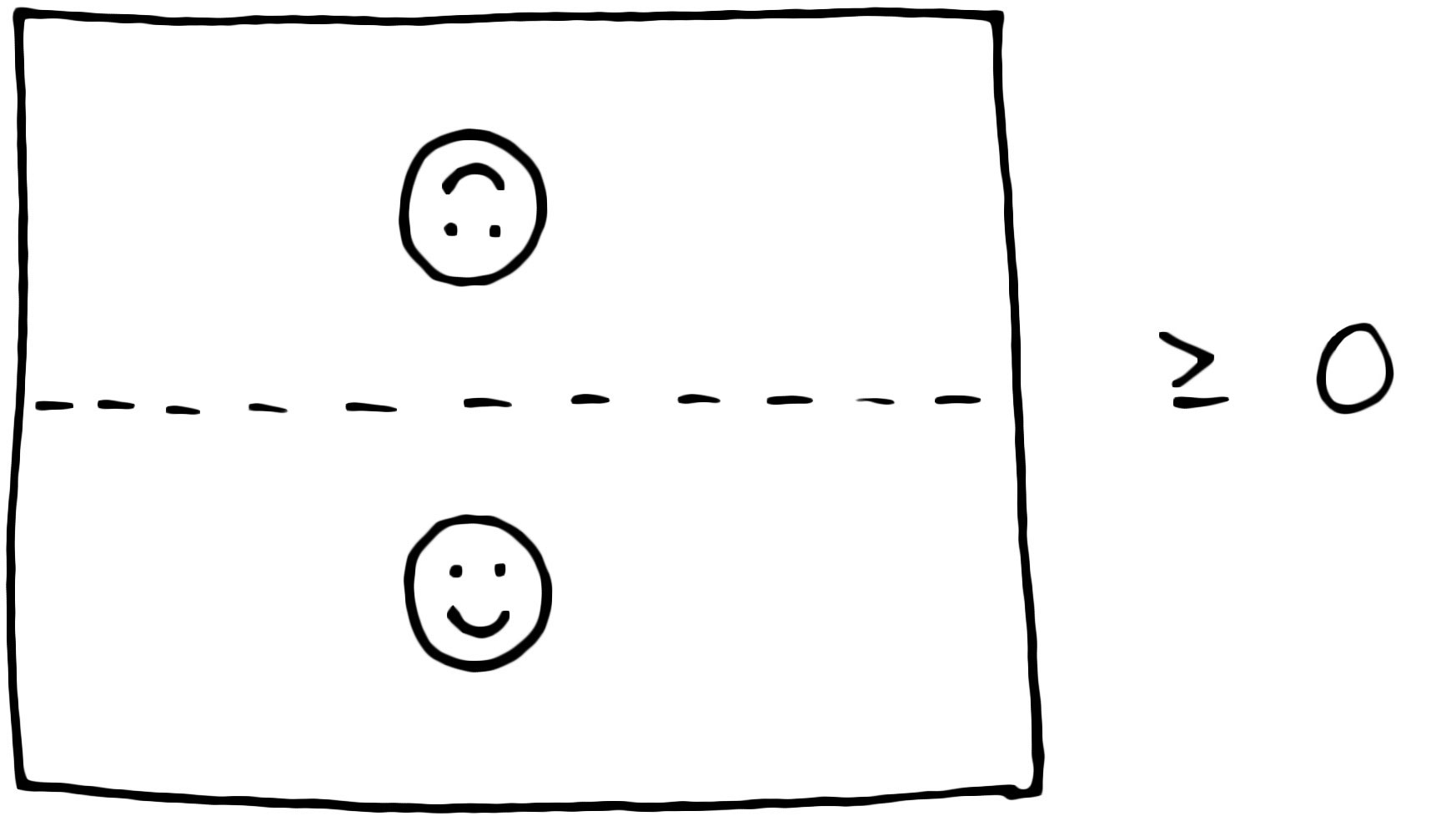}
\end{center}
\caption{Reflection positivity.  \label{fig:reflectionpositivity}}
\end{figure}

If a Euclidean theory is the Wick-rotation of a unitary Lorentzian theory, then it will be reflection positive.  However, some theories are more naturally defined in Euclidean signature.  In this case, reflection positivity must be checked.  It often suffices to check it in any microscopic theory in the same universality class as the CFT we're interested in.
\begin{exercise}
Consider the 2d Ising lattice correlator shown in figure~\ref{fig:isingreflectioncorrelator}.  Show that it can be written as a sum of squares, and is hence positive. (Hint: first sum over spins off the line $L$, and then sum over spins on $L$.) Generalize your proof to argue that the 2d Ising model is reflection-positive.
\end{exercise}

\begin{figure}
\begin{center}
\includegraphics[width=0.3\textwidth]{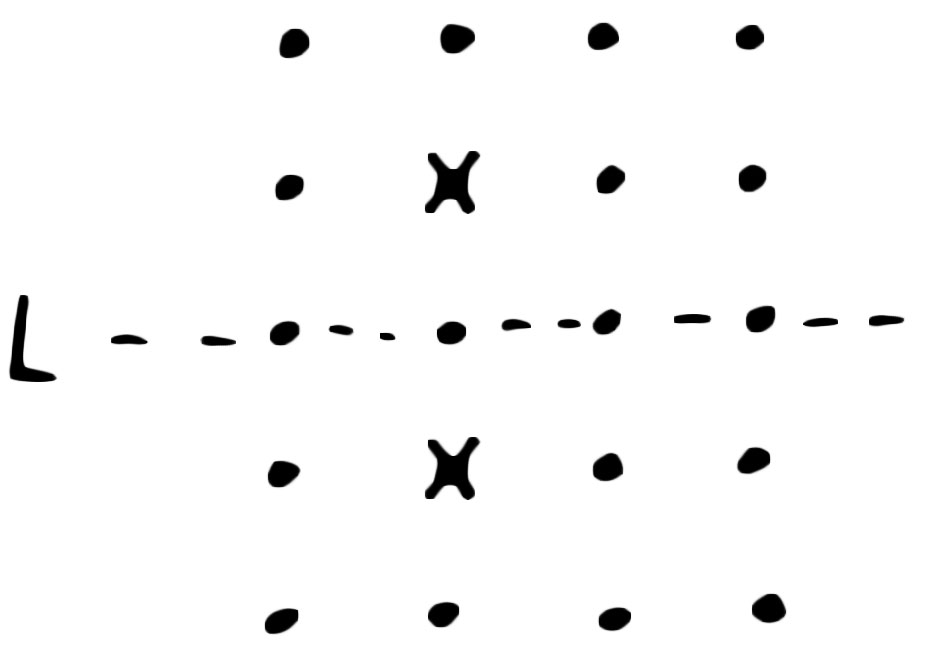}
\end{center}
\caption{A two point function on a $4\x 5$ Ising lattice with free boundary conditions, with spin operators  inserted at the sites marked with an X.  \label{fig:isingreflectioncorrelator}}
\end{figure}

The Osterwalder-Schrader reconstruction theorem says that, given a collection of Euclidean correlators satisfying reflection positivity (and some additional technical assumptions), we can reconstruct a unitary Lorentzian quantum field theory by analytic continuation \cite{GlimmJaffe}.  So reflection positivity in Euclidean signature and unitarity in Lorentzian signature are essentially equivalent, and we will use the terms interchangeably. 

\subsubsection{Real vs.\ Complex Operators}
\label{sec:realvscomplex}

Because Hermitian conjugation is tricky in Euclidean signature, it is helpful introduce some extra terminology.  We call a local operator ``real" if it is Hermitian in Lorentzian signature.  In Euclidean signature, real operators satisfy (\ref{eq:reflectionforhermitianconjugation}).  By contrast, for a complex operator $\cO_L^\dag = \cO_L^*$, we have
\be
\cO_E^{\mu_1\dots\mu_\ell}(t_E,\bx)^\dag &=& \Theta^{\mu_1}{}_{\nu_1}\cdots \Theta^{\mu_\ell}{}_{\nu_\ell} \cO_E^{\nu_1\cdots\nu_\ell*}(-t_E,\bx).
\ee

Later we will need the following result.
If $\f_1,\f_2$ are real scalars and $\cO$ is a real operator with spin $\ell$ in a unitary theory, then the three-point coefficient $f_{\f_1\f_2\cO}$ is real. This is easiest to see in Lorentzian signature when the operators are spacelike separated $x_{ij}^2>0$. Because local operators commute at spacelike separation, we have
\be
\<0|\f_1(x_1)\f_2(x_2) \cO^{\mu_1\cdots\mu_\ell}(x_3)|0\>^* &=& \<0|\f_1(x_1)\f_2(x_2) \cO^{\mu_1\cdots\mu_\ell}(x_3)|0\>.\quad
\ee
Substituting (\ref{eq:scalarscalarspinL}) gives $f_{\f_1\f_2\cO}^*=f_{\f_1\f_2\cO}$.

\subsection{Reflection Positivity on the Cylinder}

Reflection-positivity (or unitarity) has interesting consequences for CFTs on the cylinder.  The Hermitian conjugate of a real cylinder operator is
\be
\cO_\mathrm{cyl.}(\tau,\bn)^{\dag_\mathrm{rad}} &=& \cO_\mathrm{cyl.}(-\tau,\bn).
\ee
Using (\ref{eq:definitionofcylinderop}), this becomes
\be
\cO_\mathrm{flat}(x)^{\dag_\mathrm{rad}} &=& x^{-2\De}\cO_\mathrm{flat}\p{\frac{x^\mu}{x^2}}.\label{eq:radialconjugation}
\ee
Above, we have written $\dag_\mathrm{rad}$ to emphasize that Hermitian conjugation in radial quantization is different from Hermitian conjugation in the usual $P^0$ quantization.  From now on we write simply $\dag$, and hope that the meaning will be clear from context.

The right-hand side of (\ref{eq:radialconjugation}) is just the image of $\cO(x)$ under an inversion $I:x^\mu\to \frac{x^\mu}{x^2}$.  The same is true for operators with spin, where the full formula (\ref{eq:finiteprimarytransformation}) gives
\be
\label{eq:conjugateforspin}
\cO^{\mu_1\cdots\mu_\ell}(x)^\dag &=& I^{\mu_1}{}_{\nu_1}(x)\cdots I^{\mu_\ell}{}_{\nu_\ell}(x) x^{-2\De} \cO^{\nu_1\dots\nu_\ell}\p{\frac{x}{x^2}},\nn\\
I^\mu{}_\nu(x) &=& \de^\mu_\nu - \frac{2 x^\mu x_\nu}{x^2}.
\ee
\begin{exercise}
Check that the two-point function of spin-1 operators (\ref{eq:twoptfunctionofspin1}) satisfies reflection-positivity on the cylinder if $C_J>0$.
\end{exercise}
Applying (\ref{eq:conjugateforspin}) to the stress tensor, we find that 
 the action of conjugation on the conformal charges in radial quantization is
\be
Q_\e^\dag &=& -Q_{I\e I}.
\ee
In particular, we have
\be
\label{eq:mantihermitian}
M_{\mu\nu}^\dag &=& -M_{\mu\nu},\nn\\
D^\dag &=& D,\nn\\
P_\mu^\dag &=& K_\mu.
\ee

These facts let us calculate properties of correlation functions purely algebraically.  As an example, consider a two-point function.  Letting $\tl y = y/y^2$, we have
\be
\<\cO(y)\cO(x)\> &=& \<0|(y^{-2\De}\cO(\tl y))^\dag \cO(x)|0\>\nn\\
&=& y^{-2\De}\<0|(e^{\tl y\.P}\cO(0)e^{-\tl y\.P})^\dag e^{x\.P}\cO(0)e^{-x\.P}|0\>\nn\\
&=& y^{-2\De}\<0|e^{-\tl y\.K}\cO(0)^\dag e^{\tl y\.K} e^{x\.P}\cO(0)e^{-x\.P}|0\>\nn\\
&=& y^{-2\De}\<0|\cO(0)^\dag e^{\tl y\.K} e^{x\.P}\cO(0) |0\>\nn\\
&=& y^{-2\De}\<\cO|e^{\tl y\.K} e^{x\.P}|\cO\>,
\label{eq:twopointfromalgebra}
\ee
where we've defined
\be
\<\cO| \equiv \<0|\cO(0)^\dag = \lim_{y\to \oo} y^{2\De} \<0|\cO(y).
\ee
By expanding the exponentials, we can evaluate (\ref{eq:twopointfromalgebra}) using the conformal algebra.  For example, the first couple terms are
\be
\<\cO(y)\cO(x)\> &=& y^{-2\De}\p{\<\cO|\cO\> + \frac{y^\mu}{y^2}x^\nu\<\cO|K_\mu P_\nu|\cO\>+\dots},
\ee
where we've used that $K|\cO\>=\<\cO|P=0$ because $\cO$ is primary.  Using the conformal commutation relations,
\be
\<\cO|K_\mu P_\nu|\cO\> &=& \<\cO|[K_\mu,P_\nu]|\cO\>\nn\\
&=& \<\cO|(2D\de_{\mu\nu}-2M_{\mu\nu})|\cO\>\nn\\
&=& 2\De\de_{\mu\nu}\<\cO|\cO\>.
\label{eq:normoffirstdescendant}
\ee
Thus,
\be
\<\cO(y)\cO(x)\> &=& y^{-2\De} \<\cO|\cO\>\p{1 + 2\De\frac{y\.x}{y^2}+\dots}.
\ee
This exactly matches the expansion of $\<\cO|\cO\>/(x-y)^{2\De}$ in small $|x|/|y|$!  You can imagine computing all the higher terms and matching the whole series expansion.

Let us also prove our earlier claim that a two-point function of operators in different irreducible spin representations must vanish.  Consider a primary operator $\cO^a$ transforming in a nontrivial unitary representation of $\SO(d)$. The dual state transforms in the dual representation, so we will write it with a lowered index $(|\cO^a\>)^\dag=\<\cO_a|$.  Consider the matrix element $\<\cO_a|M_{\mu\nu}|\cO^b\>$.  Using that $M_{\mu\nu}$ is antihermitian (\ref{eq:mantihermitian}), we can act with it on both the bra and the ket:
\begin{align}
-((\cS_{\mu\nu})_c{}^a)^*\<\cO_c|\cO^b\> = \<\cO_a|M_{\mu\nu}|\cO^b\> = \<\cO_a|\cO^c\>(\cS_{\mu\nu})_c{}^b.
\end{align}
But $\cS_{\mu\nu}$ is antihermitian as well, so as a matrix equation this says
\be
\cS_{\mu\nu}N=N \cS_{\mu\nu},
\ee
where $N_a{}^b\equiv\<\cO_a|\cO^b\>$.  By Schur's lemma, $N_a{}^b$ must vanish if $a$ and $b$ are indices of different irreducible representations.  If $a,b$ are indices for a single irreducible representation, then $N$ is proportional to the identity.

\begin{exercise}
This computation is not directly relevant to the course, but it is instructive for getting used to radial ordering.  Consider a three-point function of scalars
\be
\<\cO_i(x_1)\cO_j(x_2)\cO_k(x_3)\> &=& \<0|\mathcal{R}\{\cO_i(x_1)\cO_j(x_2)\cO_k(x_3)\}|0\>\nn\\
&=&\theta(|x_3|\geq |x_2| \geq |x_1|)\<0|\cO_k(x_3)\cO_j(x_2)\cO_i(x_1)|0\>\nn\\
&&+\mathrm{permutations}.
\ee
Consider the operator $e^{2\pi i\mathcal{D}_1}$ where
\be
\mathcal{D}_1 &=& x_1\.\ptl_1+\De_1.
\ee
Using the fact that $e^{2\pi i\mathcal{D}_1}\cO_1(x_1)= e^{2\pi i D}\cO_1(x_1)e^{-2\pi i D}$, compute the action of $e^{2\pi i \mathcal{D}_1}$ on each of the terms above.  You will get different answers for each of the different operator orderings.

Now determine the action of $e^{2\pi i \mathcal{D}_1}$ on the known answer for the scalar three-point function (\ref{eq:conformalthreeptfunction}).  Check that the two answers agree.
\end{exercise}

\subsection{Unitarity Bounds}

Thinking about the theory on the cylinder gives a natural inner product on states in radial quantization.  Unitarity (or reflection positivity) implies that the norms of states must be nonnegative.  By demanding positivity for every state in a conformal multiplet, we obtain bounds on dimensions of primary operators \cite{Mack:1975je,Jantzen,Minwalla:1997ka}.  We have already seen an example in (\ref{eq:normoffirstdescendant}). We found
\be
|P_0|\cO\>|^2 = \<\cO|K_0 P_0|\cO\> = 2\De\<\cO|\cO\>.
\ee
Unitarity implies $\De\geq 0$.

Let us do the same exercise, this time for an operator $\cO^a$ in a nontrivial irreducible representation $R_\cO$ of $\SO(d)$. We normalize $\cO$ so that
\be
\<\cO_b|\cO^a\> &= \de^a_b.
\ee
Taking inner products between first-level descendants and using the conformal algebra, we find
\be
(P_\mu|\cO^a\>)^\dag P_\nu|\cO^b\> = \<\cO_a|K_\mu P_\nu|\cO^b\>
= 2\De\de_{\mu\nu}\de_a^b-2(\cS_{\mu\nu})_a{}^b.\label{innerproduct}
\ee
The state $P_\nu|\cO^b\>$ lives in the representation $V\otimes R_\cO$ of $\SO(d)$, where $V$ is the vector representation.
Unitarity implies that (\ref{innerproduct}) must be positive-definite as a matrix acting on this space.  This implies
\be
\De\geq \textrm{max-eigenvalue}((\cS_{\mu\nu})_a{}^b).
\ee
Let us write
\be
(\cS_{\mu\nu})_a{}^b &=& \frac 1 2 (L^{\a\b})_{\mu\nu}(\cS_{\a\b})_a{}^b\nn\\
(L^{\a\b})_{\mu\nu} &\equiv& \de^\a_\mu\de^\b_\nu - \de^\a_\nu\de^\b_\mu,
\ee
where $(L^{\a\b})_{\mu\nu}$ is the generator of rotations in the vector representation $V$.  Writing $A=\a\b$ for an adjoint index of $\SO(d)$, and thinking of $L^A,\cS_A$ as operators on $V\otimes R_\cO$, this becomes
\be
L^A \cS_A &=& \frac 1 2 \p{(L+\cS)^2-L^2 -\cS^2}\nn\\
&=& \frac 1 2\p{-\mathrm{Cas}(V\otimes R_\cO)+\mathrm{Cas}(V)+\mathrm{Cas}(R_\cO)},
\ee
where we've used the familiar trick from basic quantum mechanics to get a linear combination of Casimir operators.\footnote{The quadratic Casimir is $-L^2$ because our generators are antihermitian and differ from the conventional ones by a factor of $i$.}

Let's specialize to the case where $R_\cO=V_\ell$ is the spin-$\ell$ traceless symmetric tensor representation.  $V_\ell$ has Casimir $\ell(\ell+d-2)$.  To get the maximal eigenvalue of $L\.\cS$, we need the minimal Casimir of
\be
V\otimes V_\ell=V_{\ell-1}\oplus \dots\qquad(\ell>0).
\ee
Here the ``$\dots$" are irreducible representations with larger Casimirs.  Thus,
\be
\De &\geq& \frac 1 2\p{-\mathrm{Cas}(V_{\ell-1})+\mathrm{Cas}(V)+\mathrm{Cas}(V_\ell)}\nn\\
&=& \ell+d-2.
\ee
This computation was valid only for $\ell> 0$, since for scalars $V\otimes V_{\ell=0}=V$.

One can also consider more complicated descendants.
\begin{exercise}
Compute the norm of $P_\mu P^\mu |\cO\>$, where $\cO$ is a scalar.  Show that unitarity implies either $\De=0$ or $\De\geq \frac{d-2}{2}$.  This gives a stronger condition than what we derived above ($\De\geq 0$) for scalars.
\end{exercise}

For traceless symmetric tensors in general conformal field theories, these inequalities are the best you can do (other descendants give no new information).  In theories with more symmetry, like supersymmetric theories or 2d CFTs, unitarity bounds can be more interesting.  A classic reference for unitarity bound computations is \cite{Minwalla:1997ka}.  In the math literature, unitarity bounds for higher-dimensional CFTs were essentially computed long ago by Jantzen \cite{Jantzen}, though the relevance of that work for physics has only been emphasized recently \cite{Yamazaki:2016vqi,Penedones:2015aga}.

In summary, we have the unitarity bounds
\be
\De &=& 0 \ \textrm{(unit operator), or}\nn\\
\De &\geq& \begin{cases}
\frac{d-2}{2} & \ell = 0,\\
\ell+d-2 & \ell > 0.
\end{cases}
\label{eq:unitaritysummary}
\ee

\subsubsection{Null States and Conserved Currents}

If $\De$ saturates the bounds (\ref{eq:unitaritysummary}), the conformal multiplet will have a null state.  For the unit operator, all descendants are null.  For a scalar with dimension $\frac{d-2}{2}$, the null state is
\be
P^2|\cO\> &=& 0.
\ee
In operator language, this says $\ptl^2 \cO(x)=0$, which means $\cO$ satisfies the Klein-Gordon equation, and is thus a free scalar that decouples from the rest of the CFT.

For a spin-$\ell$ operator, the null state is\footnote{The null state has spin $\ell-1$ because the unitarity bound came from $V_{\ell-1}\subset V\otimes V_\ell$.  Something special happens for vectors in 2d, where $V\otimes V = \mathbf{1}\oplus\mathbf{1}\oplus V_{2}$, with the extra $\mathbf{1}$ appearing because of the antisymmetric $\e_{\mu\nu}$ symbol. Unitarity then implies that $J^{\mu}$ and $\e^{\mu\nu}J_\nu$ are each separately conserved.}
\be
P_\mu | \cO^{\mu\mu_2\cdots\mu_\ell}\> &=& 0.
\ee
In operator language, this becomes the equation for a conserved current
\be
\ptl_\mu \cO^{\mu\mu_2\cdots\mu_\ell}(x) &=& 0.
\ee
The implication also works the other way, so
\be
\De = \ell + d-2 \qquad \textrm{if and only if}\qquad \textrm{$\cO$ is a conserved current}.
\ee
Some important examples are global symmetry currents ($\ell=1$, $\Delta=d-1$) and the stress tensor ($\ell=2$, $\Delta=d$).  For CFTs in $d\geq 3$, the presence of currents with spin $\ell\geq 3$ implies that the theory is free \cite{Maldacena:2011jn,Alba:2015upa}.\footnote{One must also assume the existence of exactly one stress tensor, since otherwise the theory could contain a free subsector, decoupled from the rest.}

\subsection{Only Primaries and Descendants}
\label{sec:onlyprimariesanddescendants}

With a positive-definite inner product, we can now prove that all operators in unitary CFTs are linear combinations of primaries and descendants. We will use one additional physical assumption: that the partition function of the theory on $S^{d-1}\x S^1_\beta$ is finite,
\be
\mathcal{Z}_{S^{d-1}\x S^1_\beta} = \Tr(e^{-\beta D}) < \oo.
\ee
This means that $e^{-\beta D}$ is trace-class, and hence diagonalizable with a discrete spectrum (by the spectral theorem).\footnote{Assuming $e^{-\b D}$ is trace-class may be too strong for some applications. Boundedness of $e^{-\b D}$ suffices for $D$ to be diagonalizable (with a possibly continuous spectrum). An interesting example is Liouville theory, which has a divergent partition function and continuous spectrum, but still has many properties of a sensible CFT, like an OPE.} It follows that $D$ is also diagonalizable, with real eigenvalues because $D$ is Hermitian.

Now consider a local operator $\cO$, and assume for simplicity it is an eigenvector of dilatation with dimension $\De$.  By finiteness of the partition function, there are a finite number of primary operators $\cO_p$ with dimension less than or equal to $\De$.  Using the inner product, we may subtract off the projections of $\cO$ onto the conformal multiplets of $\cO_p$ to get $\cO'$.  Now suppose (for a contradiction) that $\cO'\neq 0$.  Acting on it with $K_\mu$'s, we must eventually get zero (again by finiteness of the partition function), which means there is a new primary with dimension below $\De$, a contradiction.  Thus $\cO'=0$, and $\cO$ is a linear combination of states in the multiplets $\cO_p$.

\section{The Operator Product Expansion}

If we insert two operators $\cO_i(x)\cO_j(0)$ inside a ball and perform the path integral over the interior, we get some state on the boundary.  Because every state is a linear combination of primaries and descendants, we can decompose this state as
\be
\label{eq:opeinitial}
\cO_i(x)\cO_j(0)|0\> &=& \sum_{k}C_{ijk}(x,P)\cO_k(0) |0\>,
\ee
where $k$ runs over primary operators and $C_{ijk}(x,P)$ is an operator that packages together primaries and descendants in the $k$-th conformal multiplet (figure~\ref{fig:ope}).

\begin{figure}
\begin{center}
\includegraphics[width=0.6\textwidth]{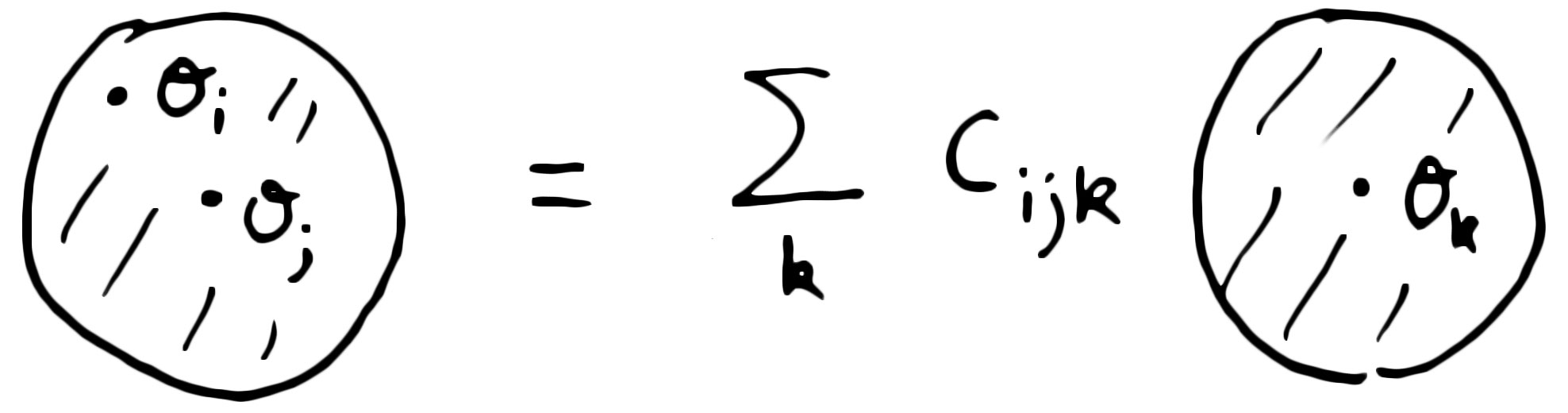}
\end{center}
\caption{A state created by two operator insertions can be expanded as a sum of primary and descendant states.  \label{fig:ope}}
\end{figure}

Eq.~(\ref{eq:opeinitial}) is an exact equation that can be used in the path integral, as long as all other operators are outside the sphere with radius $|x|$.  Using the state-operator correspondence, we can write
\be
\label{eq:opeinitial2}
\cO_i(x_1)\cO_j(x_2) &=& \sum_{k}C_{ijk}(x_{12},\ptl_2)\cO_k(x_2),\qquad\textrm{(OPE)}
\ee
where it is understood that (\ref{eq:opeinitial2}) is valid inside any correlation function where the other operators $\cO_n(x_n)$ have $|x_{2n}|\geq |x_{12}|$.  Eq.~(\ref{eq:opeinitial2}) is called the Operator Product Expansion (OPE).

We could alternatively perform radial quantization around a different point $x_3$, giving
\be
\label{eq:opealternative}
\cO_i(x_1)\cO_j(x_2) &=& \sum_k C'_{ijk}(x_{13},x_{23},\ptl_3)\cO_k(x_3),
\ee
where $C'_{ijk}(x_{13},x_{23},\ptl_3)$ is some other differential operator (figure~\ref{fig:radialquantotherpoint}).  The form  (\ref{eq:opeinitial2}) is usually more convenient for computations, but the existence of (\ref{eq:opealternative}) is important. It shows that we can do the OPE between two operators whenever it's possible to draw any sphere that separates the two operators from all the others.

\begin{figure}
\begin{center}
\includegraphics[width=0.6\textwidth]{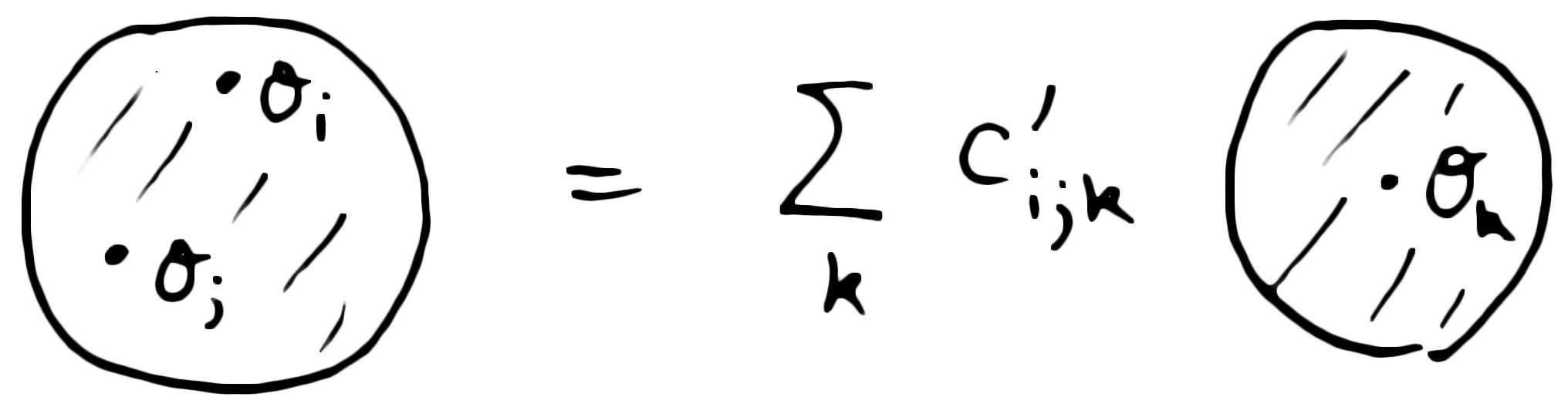}
\end{center}
\caption{It isn't necessary for one of the operators to be at the origin.  \label{fig:radialquantotherpoint}}
\end{figure}

We are being a bit schematic in writing the above equations.  It's possible for all the operators to have spin.  In this case, the OPE looks like
\be
\cO_i^a(x_1)\cO_j^b(x_2) &=& \sum_k C_{ijk}^{ab}{}_c(x_{12},\ptl_2)\cO_k^c(x_2),
\ee
where $a,b,c$ are indices for (possibly different) representations of $\SO(d)$.

\subsection{Consistency with Conformal Invariance}

Conformal invariance strongly restricts the form of the OPE\@.  For simplicity, suppose $\cO_i$, $\cO_j$, and $\cO_k$ are scalars.  
\begin{exercise}
By acting on both sides of (\ref{eq:opeinitial}) with $D$, prove that $C_{ijk}(x,\ptl)$ has an expansion of the form
\be
\label{eq:opeexpansionexample}
C_{ijk}(x,\ptl) &\propto& |x|^{\De_k-\De_i-\De_j}\p{1 +\# x^\mu\ptl_\mu + \# x^\mu x^\nu\ptl_\mu\ptl_\nu+\# x^2 \ptl^2 + \dots}.\nn\\
\ee
\end{exercise}
This is just a fancy way of saying we can do dimensional analysis and that $\cO_i$ has length-dimension $-\De_i$. We're also implicitly using rotational invariance by contracting all the indices appropriately. We could have proved this too by acting with $M_{\mu\nu}$.

We get a more interesting constraint by acting with $K_\mu$. In fact, consistency with $K_\mu$ completely fixes $C_{ijk}$ up to an overall coefficient. In this way, we can determine the coefficients in (\ref{eq:opeexpansionexample}).

This computation is a little annoying (exercise!), so here's a simpler way to see why the form of the OPE is fixed, and to get the coefficients in (\ref{eq:opeexpansionexample}).  Take the correlation function of both sides of (\ref{eq:opeinitial2}) with a third operator $\cO_k(x_3)$ (we will assume $|x_{23}|\geq |x_{12}|$, so that the OPE is valid),
\be
\label{eq:threetotwo}
\<\cO_i(x_1)\cO_j(x_2)\cO_k(x_3)\> &=& \sum_{k'} C_{ijk'}(x_{12},\ptl_2)\<\cO_{k'}(x_2)\cO_k(x_3)\>.
\ee
The three-point function on the left-hand side is fixed by conformal invariance, and is given in  (\ref{eq:conformalthreeptfunction}). We can choose an orthonormal basis of primary operators, so that $\<\cO_k(x_2)\cO_{k'}(x_3)\>= \de_{kk'}x_{23}^{-2\De_k}$.  The sum then collapses to a single term, giving
\be
\label{eq:matchingthreept}
\frac{f_{ijk}}{x_{12}^{\De_i+\De_j-\De_k}x_{23}^{\De_j+\De_k-\De_i}x_{31}^{\De_k+\De_i-\De_j}} &=& C_{ijk}(x_{12},\ptl_2)x_{23}^{-2\De_k}.
\ee
This determines $C_{ijk}$ to be proportional to $f_{ijk}$, times a differential operator that depends only on the $\De_i$'s. The operator can be obtained by matching the small $|x_{12}|/|x_{23}|$ expansion of both sides of (\ref{eq:matchingthreept}).
\begin{exercise}
\label{exercise:seriesfordiffops}
Consider the special case $\De_i=\De_j=\De_\f$, and $\De_k=\De$.  Show
\be
\label{eq:identicalscalaropeoperator}
C_{ijk}(x,\ptl) &=& f_{ijk} x^{\De-2\De_\f}\p{1+\frac 1 2 x\.\ptl + \a x^\mu x^\nu\ptl_\mu\ptl_\nu + \b x^2 \ptl^2+\dots},\nn\\
\ee
where
\be
\label{eq:opedescendantcoefficients}
\a &=& \frac{\De+2}{8(\De+1)},\quad\textrm{and}\quad \b=-\frac{\De}{16(\De-\frac{d-2}{2})(\De+1)}.
\ee
\end{exercise}

\subsection{Computing Correlators with the OPE}

Equation (\ref{eq:threetotwo}) gives an example of using the OPE to reduce a three-point function to a sum of two-point functions.  In general, we can use the OPE to reduce any $n$-point function to a sum of $n-1$-point functions,
\be
\<\cO_1(x_1)\cO_2(x_2)\cdots\cO_n(x_n)\> &=& \sum_k C_{12k}(x_{12},\ptl_2)\<\cO_k(x_2)\cdots\cO_n(x_n)\>.\nn\\
\ee
Recursing, we reduce everything to a sum of one-point functions, which are fixed by dimensional analysis,
\be
\<\cO(x)\> &=& \begin{cases}
1 & \textrm{if $\cO$ is the unit operator,}\\
0 & \textrm{otherwise.}
\end{cases}
\ee
This gives an algorithm for computing any flat-space correlation function using the OPE\@.  It shows that all these correlators are determined by dimensions $\De_i$, spins, and OPE coefficients $f_{ijk}$.\footnote{The OPE is also valid on any conformally flat manifold.  The difference is that on nontrivial manifolds, non-unit operators can have nonzero one-point functions.  An example is $\R^{d-1}\x S^1_\b$, which has the interpretation as a CFT at finite temperature.  By dimensional analysis, we have $\<\cO\>_{\R^{d-1}\x S^1_\b}\propto \b^{-\De_\cO}\propto T^{\De_\cO}$.\label{foot:finitetemperature}}

\section{Conformal Blocks}

\subsection{Using the OPE}

Let us use the OPE to compute a four-point function of identical scalars.  Recall that Ward identities imply
\be
\<\f(x_1)\f(x_2)\f(x_3)\f(x_4)\> &=& \frac{g(u,v)}{x_{12}^{\De_\f}x_{34}^{\De_\f}},
\ee
where the cross-ratios $u$, $v$ are given by~(\ref{eq:definitionofcrossratios}).

The OPE takes the form
\be
\label{eq:scalarscalarOPE}
\f(x_1)\f(x_2) &=& \sum_\cO f_{\f\f\cO} C_{a}(x_{12},\ptl_2)\cO^{a}(x_2),
\ee
where $\cO^{a}$ can have nonzero spin in general.  For $\cO^a$ to appear in the OPE of two scalars, it must transform in a spin-$\ell$ traceless symmetric tensor representation of $\SO(d)$.
\begin{exercise}
Prove this as follows. Show that $\<\cO^a|\f(x)|\f\>$ vanishes unless $\cO^a$ is a symmetric tensor.  (Tracelessness comes from restricting to irreducible representations of $\SO(d)$.) Argue that if $\<\cO^a|\f(x)|\f\>$ vanishes, then for any descendent $|\psi\>=P\cdots P|\cO\>$, the matrix element $\<\psi|\f(x)|\f\>$ vanishes as well.
\end{exercise}
\begin{exercise}
\label{exercise:elleven}
Using (\ref{eq:conformalthreeptfunction}), show that $f_{\f\f\cO}$ vanishes unless $\ell$ is even.
\end{exercise}

Assuming the points are configured appropriately, we can pair up the operators (12) (34) and perform the OPE between them,\footnote{Although our computation will make it look like we need $x_{3,4}$ to be sufficiently far from $x_{1,2}$, we will see shortly that the answer will be correct whenever we can draw any sphere separating $x_1,x_2$ from $x_3,x_4$.}
\begin{align}
\label{eq:fourptcalc}
\<
\contraction{}{\f}{(x_1)}{\f}
\f(x_1)&\f(x_2)
\contraction{}{\f}{(x_1)}{\f}
\f(x_3)\f(x_4)
\>\nn\\
&= \sum_{\cO,\cO'}f_{\f\f\cO}f_{\f\f\cO'} C_a(x_{12},\ptl_2)C_b(x_{34},\ptl_4)\<\cO^a(x_2)\cO'^b(x_4)\>\nn\\
&= \sum_\cO f_{\f\f\cO}^2 C_a(x_{12},\ptl_2)C_b(x_{34},\ptl_4)\frac{I^{ab}(x_{24})}{x_{24}^{2\De_\cO}}\nn\\
&= \frac{1}{x_{12}^{\De_\f} x_{34}^{\De_\f}}\sum_\cO f_{\f\f\cO}^2 g_{\De_\cO,\ell_\cO}(x_i),
\end{align}
where
\be
\label{eq:olddefinitionofg}
g_{\De,\ell}(x_i) &\equiv& x_{12}^{\De_\f} x_{34}^{\De_\f} C_a(x_{12},\ptl_2)C_b(x_{34},\ptl_4)\frac{I^{ab}(x_{24})}{x_{24}^{2\De}}.
\ee
In (\ref{eq:fourptcalc}), we have chosen an orthonormal basis of operators and used that
\be
\label{eq:canonicallynormalizedtwopt}
\<\cO^a(x)\cO'^b(0)\> &=& \de_{\cO\cO'} \frac{I^{ab}(x)}{x^{2\De_\cO}},
\ee
where $I^{ab}(x)=I^{\mu_1\cdots\mu_\ell,\nu_1\cdots\nu_\ell}(x)$ is the tensor in (\ref{eq:twopointfunctionofspinL}).

The functions $g_{\De,\ell}(x_i)$ are called {\it conformal blocks}.  Although it's not obvious from the way we defined them, it turns out they are actually functions of the conformal cross-ratios $u,v$ alone.  We thus have the conformal block decomposition
\be
g(u,v) &=& \sum_\cO f_{\f\f\cO}^2 g_{\De_\cO,\ell_\cO}(u,v).
\ee
\begin{exercise}
Using the differential operator (\ref{eq:identicalscalaropeoperator}), show
\be
\label{eq:boundaryconditionforblock}
g_{\De,0}(u,v) &=& u^{\De/2}\p{1+\dots}.
\ee
\end{exercise}

\begin{exercise}
Using (\ref{eq:scalarscalarspinL}), argue that $x^{2\De_\f} C_{\f\f\cO}(x,\ptl)$ is independent of $\Delta_\phi$ for any spin of $\cO$. Conclude that $g_{\De,\ell}(u,v)$ is independent of $\Delta_\phi$. (This is a special property of conformal blocks for operators with identical scaling dimensions.)
\end{exercise}

\subsection{In Radial Quantization}

A conformal block represents the contribution of a single conformal multiplet to a four-point function.  It is instructive to understand it in radial quantization.  Along the way, we'll explain why the blocks are functions of the cross-ratios $u,v$ alone.

Let us pick an origin such that $|x_{3,4}|\geq |x_{1,2}|$, so that
\be
\label{eq:fourptradial}
\<\f(x_1)\f(x_2)\f(x_3)\f(x_4)\> &=& \<0|\mathcal{R}\{\f(x_3)\f(x_4)\}\mathcal{R}\{\f(x_1)\f(x_2)\}|0\>.\quad
\ee
For a primary operator $\cO$, let $|\cO|$ be the projector onto the conformal multiplet of $\cO$,
\be
|\cO| &\equiv& \sum_{\a,\b=\cO,P\cO,PP\cO,\dots} |\a\>\mathcal{N}^{-1}_{\a\b}\<\b|,\qquad
\mathcal{N}_{\a\b} \equiv \<\a|\b\>.
\ee
The identity is the sum of these projectors over all primary operators.
\be
\mathbf{1} &=& \sum_\cO |\cO|.
\ee
Inserting this into (\ref{eq:fourptradial}) gives
\be
\label{eq:insertingprojector}
\<\f(x_1)\f(x_2)\f(x_3)\f(x_4)\> &=& \sum_\cO\<0|\mathcal{R}\{\f(x_3)\f(x_4)\}|\cO|\mathcal{R}\{\f(x_1)\f(x_2)\}|0\>.\nn\\
\ee
Each term in the sum is a conformal block times a squared OPE coefficient and some conventional powers of $x_{ij}$,
\be
\label{eq:newdefinitionofg}
\<0|\mathcal{R}\{\f(x_3)\f(x_4)\}|\cO|\mathcal{R}\{\f(x_1)\f(x_2)\}|0\> &=& \frac{f_{\f\f\cO}^2}{x_{12}^{2\De_\f}x_{34}^{2\De_\f}}g_{\De_\cO,\ell_\cO}(u,v).\qquad
\ee
\begin{exercise}
Verify the equivalence between (\ref{eq:newdefinitionofg}) and (\ref{eq:olddefinitionofg}) by performing the OPE between $\f(x_3)\f(x_4)$ and $\f(x_1)\f(x_2)$.
\end{exercise}

This expression makes it clear why $g_{\De,\ell}(u,v)$ is a function of $u$ and $v$: the projector $|\cO|$ commutes with all conformal generators (by construction).  Thus, the object above satisfies all the same Ward identities as a four-point function of primaries, and it must take the form (\ref{eq:fourptfunctionofprimaries}).  In path integral language, we can think of $|\cO|$ as a new type of  surface operator.  Here, we've inserted it on a sphere separating $x_{1,2}$ from $x_{3,4}$.

\subsection{From the Conformal Casimir}

We can now give a simple and elegant way to compute the conformal block, due to Dolan \& Osborn \cite{DO2}.
Recall that the conformal algebra is isomorphic to $\SO(d+1,1)$, with generators $L_{ab}$ given by (\ref{eq:conformalgeneratorssodplus11}).  The Casimir $C=-\frac 1 2 L^{ab}L_{ab}$ acts with the same eigenvalue on every state in an irreducible representation.  
\begin{exercise}
Show that this eigenvalue is given by
\be
C|\cO\> &=& \l_{\De,\ell}|\cO\>,\nn\\
\l_{\De,\ell} &\equiv& \De(\De-d)+\ell(\ell+d-2).
\ee
\end{exercise}
It follows that $C$ gives this same eigenvalue when acting on the projection operator $|\cO|$ from either the left or right,
\be
C|\cO|=|\cO| C = \l_{\De,\ell}|\cO|.
\ee

Let $\cL_{ab,i}$ be the differential operator giving the action of $L_{ab}$ on the operator $\f(x_i)$.  Note that
\be
(\cL_{ab,1}+\cL_{ab,2})\f(x_1)\f(x_2)|0\> &=& \p{[L_{ab},\f(x_1)]\f(x_2)+\f(x_1)[L_{ab},\f(x_2)]}|0\>\nn\\
&=& L_{ab}\f(x_1)\f(x_2)|0\>.
\ee
Thus, 
\be
C\f(x_1)\f(x_2)|0\> &=& \cD_{1,2}\f(x_1)\f(x_2)|0\>,\nn\\
\textrm{where}\qquad\cD_{1,2} &\equiv& -\frac 1 2(\cL^{ab}_{1}+\cL^{ab}_{2})(\cL_{ab,1}+\cL_{ab,2}).
\ee
We then have
\begin{align}
\cD_{1,2}\<0|&\mathcal{R}\{\f(x_3)\f(x_4)\}|\cO|\mathcal{R}\{\f(x_1)\f(x_2)\}|0\>\nn\\
&=
\<0|\mathcal{R}\{\f(x_3)\f(x_4)\}|\cO| C\mathcal{R}\{\f(x_1)\f(x_2)\}|0\>\nn\\
&= \l_{\De,\ell}\<0|\mathcal{R}\{\f(x_3)\f(x_4)\}|\cO|\mathcal{R}\{\f(x_1)\f(x_2)\}|0\>.
\end{align}
Plugging in (\ref{eq:newdefinitionofg}), we find that $g_{\De,\ell}$ satisfies the differential equation 
\be
\label{eq:conformalcasimir}
\cD g_{\De,\ell}(u,v) &=& \l_{\De,\ell} g_{\De,\ell}(u,v),
\ee
where the second-order differential operator $\cD$ is given by
\be
\cD &=& 2(z^2(1-z)\ptl_z^2-z^2 \ptl_z) + 2(\bar z^2 (1-\bar z)\ptl_{\bar z}^2-\bar z^2 \ptl_{\bar z})\nn\\
&& + 2(d-2)\frac{z\bar z}{z-\bar z}((1-z)\ptl_z - (1-\bar z)\ptl_{\bar z}).
\ee

Eq.~(\ref{eq:conformalcasimir}), together with the boundary condition~(\ref{eq:boundaryconditionforblock}) (and its generalization to nonzero spin, which we give shortly), then determines the conformal block $g_{\Delta,\ell}(u,v)$.  In even dimensions, the Casimir equation can be solved analytically.  For example, in 2d and 4d \cite{DO1,DO2},
\be
\label{eq:explicitblock2d}
g_{\De,\ell}^{(2d)}(u,v) &=& k_{\De+\ell}(z)k_{\De-\ell}(\bar z) + k_{\De-\ell}(z)k_{\De+\ell}(\bar z),\\
\label{eq:explicitblock4d}
g_{\De,\ell}^{(4d)}(u,v) &=& \frac{z \bar z}{z-\bar z}\p{k_{\De+\ell}(z)k_{\De-\ell-2}(\bar z) - k_{\De-\ell-2}(z)k_{\De+\ell}(\bar z)},\\
k_\beta(x) &\equiv& x^{\beta/2}{}_2F_1\p{\frac \beta 2, \frac \beta 2, \beta, x}.
\ee
In odd dimensions, no explicit formula in terms of elementary functions is known.  However the blocks can still be computed in a series expansion using the Casimir equation or alternative techniques like recursion relations.

\subsection{Series Expansion}

It will be helpful to understand the series expansion of the conformal blocks in more detail.
 The ``radial coordinates" of \cite{Pappadopulo:2012jk,Hogervorst:2013sma} are ideal for this purpose.
 Using conformal transformations, we can place all four operators on a plane in the configuration shown in figure~\ref{fig:rho}.  This makes it clear that the conformal block expansion is valid whenever $|\rho|<1$.

\begin{figure}
\begin{center}
\includegraphics[width=0.55\textwidth]{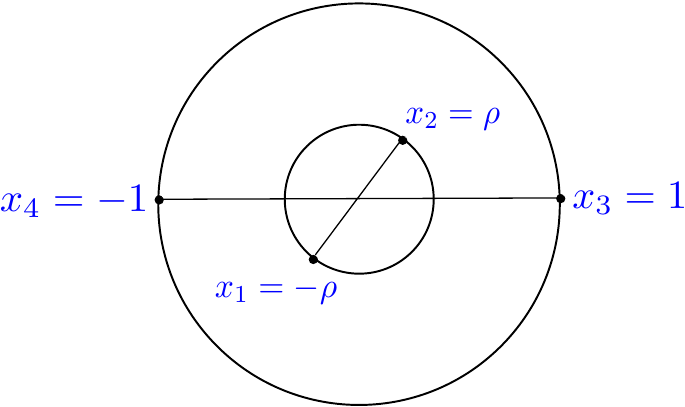}
\end{center}
\caption{Any four points can be brought to the above configuration using conformal transformations. (Figure from \cite{Hogervorst:2013sma}.)  \label{fig:rho}}
\end{figure}

\begin{exercise}
Show that $\rho=re^{i\theta}$ is related to $z$ via
\be
\label{eq:radialcoordinatedefinition}
\rho = \frac{z}{(1+\sqrt{1-z})^2},\qquad z = \frac{4\rho}{(1+\rho)^2}
\ee
(and similarly for $\bar\rho=r e^{-i\theta}$ and $\bar z$).
\end{exercise}

In radial quantization, this corresponds to placing cylinder operators (\ref{eq:definitionofcylinderop}) at diametrically opposite points $\pm \bn$ and $\pm \bn'$ on $S^{d-1}$, with $\cos\th=\bn\.\bn'$, and with the pairs separated by time $\tau=-\log r$ (figure~\ref{fig:cylinderconfig}).  The conformal block is then
\be
\label{eq:blockintermsofpsi}
 g_{\De,\ell}(u,v) &=& \<\psi(\bn)||\cO|e^{-\tau D}|\psi(\bn')\>,
\ee
where we've defined the state\footnote{The factor $2^{\De_\f}=\<\f_\mathrm{cyl.}(0,\bn)\f_\mathrm{cyl.}(0,-\bn)\>^{-1}$ comes from transforming $x_{12}^{-2\De_\f}$ to the cylinder (exercise!).}
\be
|\psi(\bn)\> &\equiv& \frac{2^{\De_\f}}{f_{\f\f\cO}}\phi_\mathrm{cyl.}(0,\bn)\phi_\mathrm{cyl.}(0,-\bn)|0\>.
\ee

\begin{figure}
\begin{center}
\includegraphics[width=0.75\textwidth]{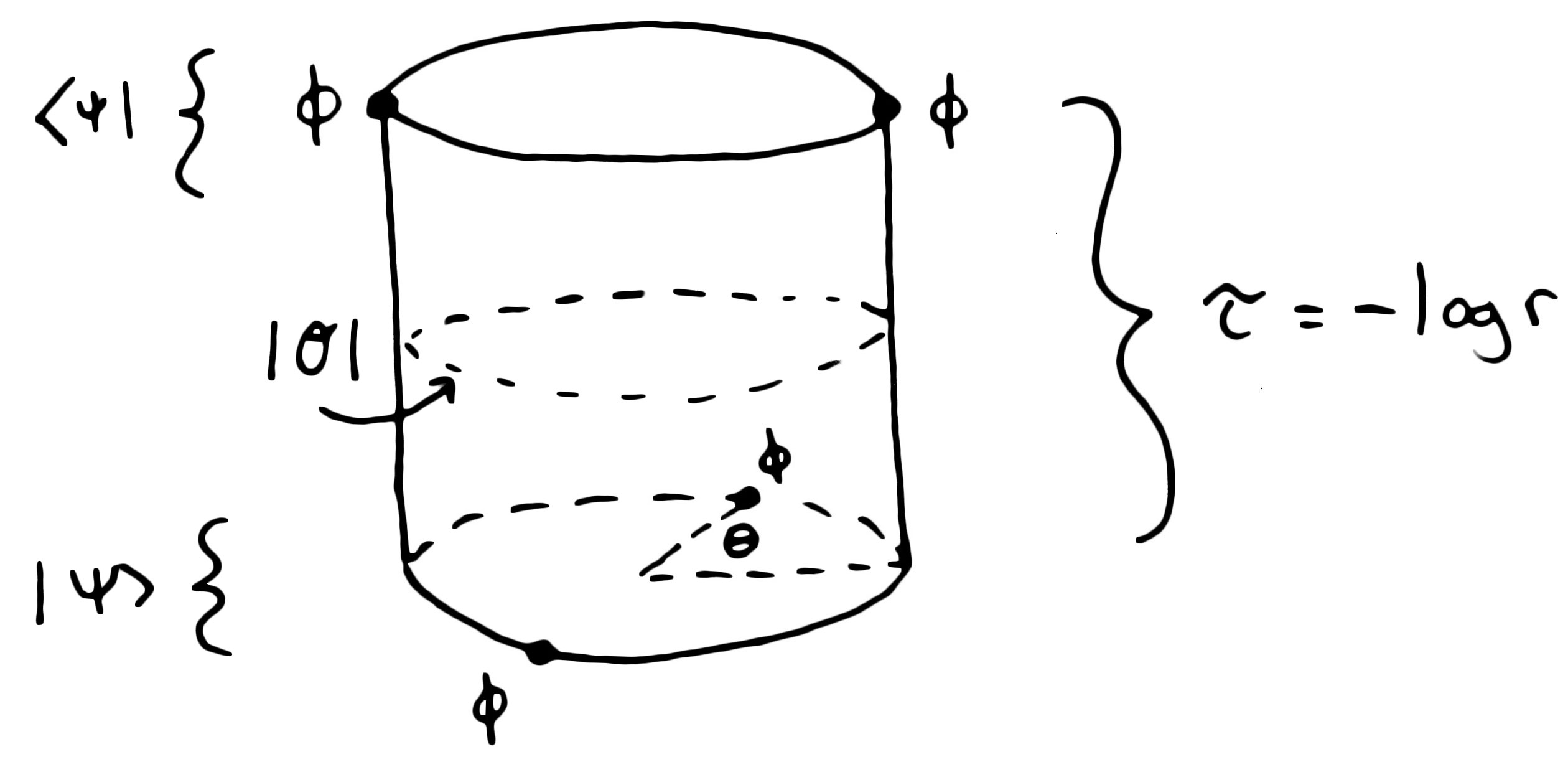}
\end{center}
\caption{Configuration on the cylinder corresponding to (\ref{eq:blockintermsofpsi}).  \label{fig:cylinderconfig}}
\end{figure}

A descendant $P^{\mu_1}\cdots P^{\mu_n}|\cO\>$ has energy $\De+n$ on the cylinder.  Within the $n$-th energy level, the $\SO(d)$ spins that appear are
\be
\label{eq:rangeofjs}
j \in \{\ell+n,\ell+n-2,\dots,\mathrm{max}(\ell-n,\ell+n\,\,\mathrm{mod}\,\,2)\}.
\ee
Consider a set of descendent states $|n,j\>^{\mu_1\cdots\mu_j}$ with energy $\De+n$ and spin $j$. They contribute
\be
r^{\De+n} \<\psi(\bn)|n,j\>^{\mu_1\cdots\mu_j}{}_{\mu_1\cdots\mu_j}\<n,j|\psi(\bn')\>.
\label{eq:definitespinandenergy}
\ee
By rotational invariance,
\be
\<\psi(\bn)|n,j\>^{\mu_1\cdots\mu_j} &\propto& \bn^{\mu_1}\cdots\bn^{\mu_j}-\mathrm{traces}.
\ee
Because $|\psi(\bn)\>=|\psi(-\bn)\>$, $j$ must be even (and thus $n$ is even).
The contraction of two traceless symmetric tensors is a Gegenbauer polynomial,
\be
C_j^{\frac{d-2}{2}}(\bn\cdot\bn') &\propto& (\bn^{\mu_1}\cdots\bn^{\mu_j}-\mathrm{traces})(\bn'_{\mu_1}\cdots\bn'_{\mu_j}-\mathrm{traces}),
\ee
so (\ref{eq:definitespinandenergy}) becomes
\be
r^{\De+n} \<\psi(\bn)|n,j\>^{\mu_1\cdots\mu_j}{}_{\mu_1\cdots\mu_j}\<n,j|\psi(\bn')\> &\propto& r^{\De+n}C_j^{\frac{d-2}{2}}(\cos\th).
\ee

Summing over descendants, we find
\be
\label{eq:seriesexpansion}
g_{\De,\ell}(u,v) &=& \sum_{\substack{n=0,2,\dots \\ j}} B_{n,j}r^{\De+n}C_j^{\frac{d-2}{2}}(\cos\th),\label{eq:seriesforblock}
\ee
where $j$ ranges over the values in (\ref{eq:rangeofjs}) and $B_{n,j}$ are constants.  
Notice a few properties:
\begin{itemize}
\item The leading term in the $r$-expansion comes from the primary state $|\cO\>$ with $n=0$ and $j=\ell$. This can be used as a boundary condition in the Casimir equation to determine the higher coefficients $B_{n,j}$.
\item The $B_{n,j}$ are positive in a unitary theory because they are given by norms of projections of $|\psi\>$ onto energy and spin eigenstates.
\item The $B_{n,j}$ are rational functions of $\De$.  This follows because the Casimir eigenvalue $\l_{\De,\ell}$ is polynomial in $\De$, or alternatively from the fact that the differential operators $C_a(x,\ptl)$ appearing in the OPE (\ref{eq:scalarscalarOPE}) have a series expansion in $x$ with rational coefficients, see exercise~\ref{exercise:seriesfordiffops}. 
\end{itemize}

\begin{exercise}
Expand $g^{(2d)}_{\De,\ell}(u,v)$ and $g^{(4d)}_{\De,\ell}(u,v)$ to the first few orders in $r$, and check these properties.  Verify that some of the coefficients $B_{n,j}$ become negative when $\De$ violates the unitarity bound.
\end{exercise}

\begin{exercise}
By rewriting in terms of $r,\th$ and using (\ref{eq:seriesexpansion}), show that even spin blocks are invariant under $x_1\leftrightarrow x_2$ or $x_3\leftrightarrow x_4$,
\be
\label{eq:invariantunderonetwo}
g_{\De,\ell}(u,v) &=& g_{\De,\ell}\p{\frac{u}{v},\frac 1 v},\qquad(\textrm{$\ell$ even}).
\ee
\end{exercise}

\section{The Conformal Bootstrap}

\subsection{OPE Associativity and Crossing Symmetry}

We've gotten pretty far using symmetries and basic principles of quantum field theory.  We  classified operators into primaries and descendants. We established the OPE, which determines $n$-point functions as sums of $(n-1)$-point functions,
\be
\label{eq:usingOPEtoreducecorrelator}
\<\cO_1(x_1)\cO_2(x_2)\cdots\cO_n(x_n)\> &=& \sum_k C_{12k}(x_{12},\ptl_2)\<\cO_2(x_2)\cdots\cO_n(x_n)\>.\nn\\
\ee
And we showed that the differential operators $C_{ijk}(x,\ptl)$ are determined by conformal symmetry in terms of dimensions $\De_i$, spins, and OPE coefficients $f_{ijk}$. 

Now it's time to implement the last step of the bootstrap program: impose consistency conditions and derive constraints.  Using the OPE, all correlation functions can be written in terms of the ``CFT data" $\De_i,f_{ijk}$.  Now suppose someone hands you a random set of numbers $\De_i,f_{ijk}$.  Does that define a consistent CFT?

\begin{figure}
\begin{center}
\includegraphics[width=0.9\textwidth]{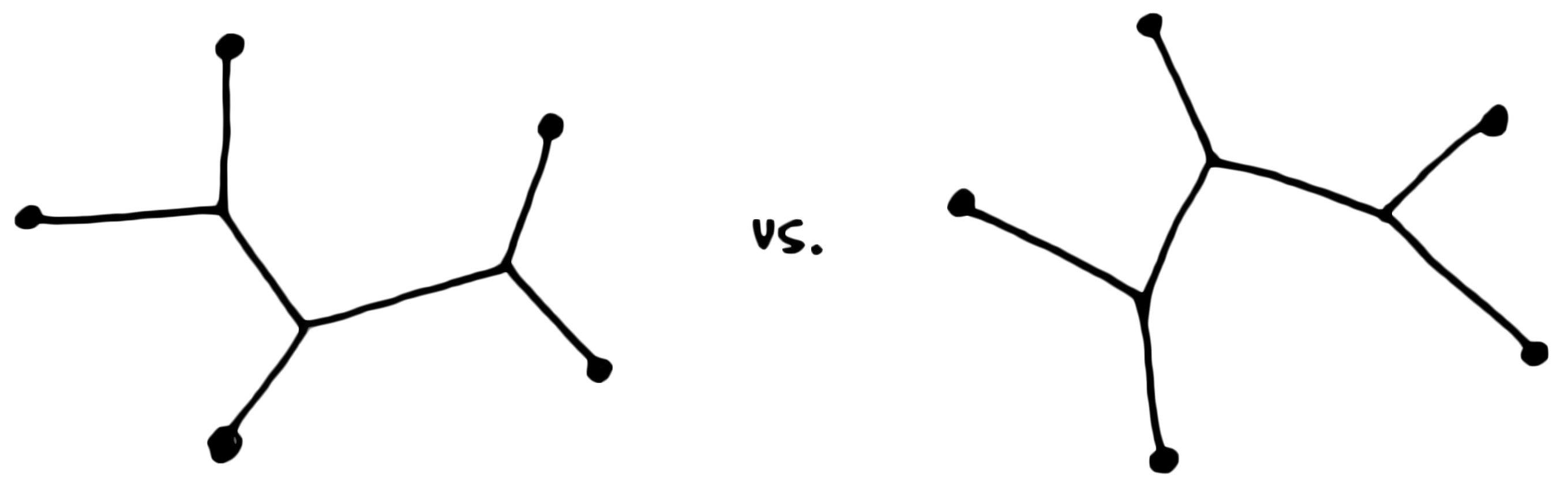}
\end{center}
\caption{Two different ways of evaluating a five-point function using the OPE\@.  Dots represent operators in the correlator, and vertices represent the OPE\@. The two ways differ by a crossing symmetry transformation (\ref{eq:graphicalcrossing}) applied to the left part of the diagram.\label{fig:opedifferentways}}
\end{figure}

The answer is: not always.  By doing the OPE (\ref{eq:usingOPEtoreducecorrelator}) between different pairs of operators in different orders (see figure~\ref{fig:opedifferentways}), we get naively different expressions for the same correlator in terms of CFT data.  These expressions should agree. 
This means the OPE should be associative,
\be
\contraction{}{
\frac 1 2\!\!\!\!\!
\cO_1\cO_2
}{}{\cO_3}
\contraction{}{\cO_1}{}{\cO_2}
\cO_1\cO_2\cO_3
&=&
\contraction{}{
\frac 1 2\!\!\!\!\!
\cO_1
}{}{
\contraction{}{\cO_2}{}{\cO_3}
\cO_2\cO_3
}
\contraction{\cO_1}{\cO_2}{}{\cO_3}
\cO_1\cO_2\cO_3,
\ee
or more explicitly,
\be
\label{eq:OPEassociativity}
C_{12i}(x_{12},\ptl_2)C_{i3j}(x_{23},\ptl_3)\cO_j(x_3) &=& C_{23i}(x_{23},\ptl_3)C_{1ij}(x_{13},\ptl_3)\cO_j(x_3).\nn\\
\ee
(We suppress spin indices for simplicity.)
Taking the correlator of both sides with a fourth operator $\cO_4(x_4)$ gives the {\it crossing symmetry equation}

\be
\label{eq:graphicalcrossing}
\hspace{-0.4in}
\setlength{\unitlength}{.4in}
\begin{picture}(7,1.7)(0,0.2)
\linethickness{1pt}
\put(1.7,0.4){\line(1,2){0.3}}
\put(1.7,1.6){\line(1,-2){0.3}}
\put(2,1){\line(1,0){0.8}}
\put(2.8,1){\line(1,2){0.3}}
\put(2.8,1){\line(1,-2){0.3}}
\put(1.2,1){\makebox(0,0){$\mathlarger{\sum}_i^{\phantom\cO}$}}
\put(4,1){\makebox(0,0){$=$}}
\put(1.65,1.8){\makebox(0,0){\small $1$}}
\put(1.65,0.2){\makebox(0,0){\small $2$}}
\put(3.15,1.8){\makebox(0,0){\small $4$}}
\put(3.15,0.2){\makebox(0,0){\small $3$}}
\put(5.38,1.8){\makebox(0,0){\small $1$}}
\put(5.38,0.2){\makebox(0,0){\small $2$}}
\put(6.6,1.8){\makebox(0,0){\small $4$}}
\put(6.6,0.2){\makebox(0,0){\small $3$}}
\put(6.3,1){\makebox(0,0){\small $\cO_i$}}
\put(2.4,1.2){\makebox(0,0){\small $\cO_i$}}
\put(5,1){\makebox(0,0){$\mathlarger{\sum}_i^{\phantom\cO}$}}
\put(6,0.6){\line(0,1){0.8}}
\put(5.5,0.35){\line(2,1){0.5}}
\put(6,0.6){\line(2,-1){0.5}}
\put(6,1.38){\line(2,1){0.5}}
\put(5.5,1.65){\line(2,-1){0.5}}
\end{picture}.
\ee
The left-hand side is the conformal block expansion of $\<\cO_1\cO_2\cO_3\cO_4\>$ in the $12\leftrightarrow 34$ channel, while the right-hand side is the expansion in the $14\leftrightarrow 23$ channel.
 
\begin{exercise}
\label{ex:crossingexercise}
Argue that by choosing different operators $\cO_4$ and taking linear combinations of derivatives, one can recover OPE associativity (\ref{eq:OPEassociativity}) from the crossing equation (\ref{eq:graphicalcrossing}). Conclude that crossing symmetry of all four-point functions implies crossing symmetry of all $n$-point functions (i.e.\ that any way of computing an $n$-point function using the OPE gives the same result).
\end{exercise}

The crossing equation (\ref{eq:graphicalcrossing}) is a powerful but complicated constraint on the CFT data.  The rest of this course will be devoted to studying its implications for the simplest possible case: a four-point function of identical scalars $\<\f(x_1)\f(x_2)\f(x_3)\f(x_4)\>$.  

\subsubsection{Additional Structures and Consistency Conditions}
\label{sec:additionalstructures}

Before jumping in, let us reflect on the implications of exercise~\ref{ex:crossingexercise}:  A solution to the crossing equations~(\ref{eq:graphicalcrossing}) gives a completely nonperturbative definition of flat-space correlation functions of local operators, without the need for a Lagrangian.  This is most of the way towards a full theory. However, some structures associated with local QFTs are missing, and additional structures might bring new consistency conditions.

Firstly, CFTs can admit extended objects like line operators, surface operators, boundaries, and interfaces. These objects have additional data associated with them, and it's possible to write down OPEs and crossing equations that relate this data to itself and the usual CFT data, see e.g. \cite{Liendo:2012hy,Gaiotto:2013nva}.  It is also interesting to consider correlation functions on manifolds not conformally equivalent to flat space. An example includes the theory at finite temperature (discussed in footnote~\ref{foot:finitetemperature}). This introduces more data, for example the one-point functions of local and extended operators on nontrivial manifolds.\footnote{It is known that this data is not determined by the local operator spectrum.  For example, pure Chern-Simons theory has no local operators at all, but has interesting nonlocal observables that depend on the gauge group and level \cite{Witten:1988hf}.  Also, 4d conformal gauge theories admit different sets of line operators for the same set of local operators \cite{Aharony:2013hda}.} Other interesting constraints come from studying CFTs in Lorentzian signature. Examples include bounds from energy positivity \cite{Hofman:2008ar}, dispersion relations \cite{Komargodski:2011vj,Nachtmann:1973mr,Komargodski:2012ek,Komargodski:2016gci}, and causality \cite{Maldacena:2015waa,Hartman:2015lfa}.

The full set of data and consistency conditions associated with a CFT is not known in general. However, we do have examples of constraints on local operators beyond the OPE and crossing equations. The most famous is modular invariance: the requirement that the partition function of a 2d CFT on the torus $T^2$ be invariant (or covariant) under large diffeomorphisms.  Imposing modular invariance is an additional step that must be performed after solving the crossing equations in 2d CFTs \cite{Moore:1988uz}.\footnote{2d is special because the space of states on a spatial slice $S^1\subset T^2$ is the same as the space of states in radial quantization, and thus modular invariance on $T^2$ directly constrains local operators.  This is not true in $d\geq 3$, so it is not clear how modular invariance on $T^d$ constrains local operators in that case.}

\subsection{Crossing Symmetry for Identical Scalars} 

For the rest of this course, we study the crossing equation for a four-point function of identical real scalars $\<\f(x_1)\f(x_2)\f(x_3)\f(x_4)\>$.  Let us summarize the consequences of conformal symmetry and unitarity for this case.
\begin{itemize}
\item We have the OPE
\be
\f(x_1)\f(x_2) &=& \sum_{\cO} f_{\f\f\cO} C_{\mu_1\cdots\mu_\ell}(x_{12},\ptl_2) \cO^{\mu_1\cdots\mu_\ell}(x_2).
\ee
We denote the dimension of $\cO$ by $\De$ and the spin by $\ell$. By exercise~\ref{exercise:elleven}, $\ell$ must be even.

\item We can choose a basis of operators such that the $\cO$'s are real and orthonormal, as in  (\ref{eq:canonicallynormalizedtwopt}).  Unitarity implies that the three-point coefficients $f_{\f\f\cO}$ are real in this basis (section~\ref{sec:realvscomplex}).

\item Each $\cO$ satisfies the unitarity bounds
\be
\label{eq:unitarityboundsummary}
\De &=& 0 \ \textrm{(unit operator), or}\nn\\
\De &\geq& \left\{
\begin{array}{ll}
\frac{d-2}{2} & (\ell=0),\\
\ell+d-2 & (\ell > 0).
\end{array}
\right.
\ee

\item We have the conformal block expansion
\be
\<\f(x_1)\f(x_2)\f(x_3)\f(x_4)\> &=& \frac{g(u,v)}{x_{12}^{2\De_\f}x_{34}^{2\De_\f}}\\
g(u,v) &=& \sum_\cO f_{\f\f\cO}^2 g_{\De,\ell}(u,v),
\ee
where $g_{\De,\ell}(u,v)$ are conformal blocks, and the cross ratios are
\be
u = z\bar z = \frac{x_{12}^2 x_{34}^2}{x_{13}^2 x_{24}^2},\qquad v=(1-z)(1-\bar z) =\frac{x_{23}^2 x_{14}^2}{x_{13}^2 x_{24}^2}.
\ee

\item Crossing symmetry is equivalent to the condition (\ref{eq:crossingsymmetry}) that our four-point function is invariant under $1\leftrightarrow 3$ or $2\leftrightarrow 4$,
\be
\label{eq:crossingeqsummary}
g(u,v) &=& \p{\frac{u}{v}}^{\De_\f} g(v,u).
\ee
Eq.~(\ref{eq:invariantunderonetwo}) shows that invariance of the four-point function under $1\leftrightarrow 2$ or $3\leftrightarrow 4$ is true block-by-block.  All other permutations can be generated from these.

\end{itemize}

We know at least two operators present in the $\f\x\f$ OPE: the unit operator and the stress tensor.  Normalizing $\f$ so that $\<\f(x)\f(0)\>=x^{-2\De_\f}$, we have $f_{\f\f\mathbf{1}}=1$.  The stress tensor three-point coefficient is set by Ward identities to be $f_{\f\f T_{\mu\nu}}\propto \De_\f/\sqrt{C_T}$, where $C_T$ is the coefficient of the two-point function of the canonically normalized stress tensor (\ref{eq:twopointfunctionofspinL}).  The factor of $1/\sqrt{C_T}$ relative to (\ref{eq:stresstensorward}) comes from choosing the basis of operators $\cO$ to be orthonormal.

\subsection{An Infinite Number of Primaries}

To get a feel for the crossing equation (\ref{eq:crossingeqsummary}), let us consider a simple limit: $z\to 0$ with $z=\bar z$. This corresponds to $x_2\to x_1$ with all four operators collinear.

Recall that the blocks go like $g_{\De,\ell}(u,v)\sim (z\bar z)^{\De/2}$ in this limit, so the left-hand side of (\ref{eq:crossingeqsummary}) is dominated by the smallest dimension operator in the OPE, the unit operator:
\be
\mathrm{LHS} &:& 1+\dots \qquad(z\to 0).
\ee

Crossing $u\leftrightarrow v$ corresponds to $(z,\bar z)\to (1-z,1-\bar z)$.  In the limit $z\to 0$, the crossed conformal blocks $g_{\De,\ell}(1-z,1-z)$ go like $\log z$.
\begin{exercise}
Check this for the explicit formulae (\ref{eq:explicitblock2d}) and (\ref{eq:explicitblock4d}).
\end{exercise}
Thus, each term on the right-hand side goes like
\be
\label{eq:rhsterms}
\textrm{each term, RHS} &:& z^{2\De_\f}\log z + \dots \qquad(z\to 0).
\ee
As $z\to 0$, any finite sum of terms of the form (\ref{eq:rhsterms}) vanishes.  Thus, for a sum of operators on the right-hand side to reproduce the unit operator on the left-hand side, we need an infinite number of primary operators.\footnote{This doesn't contradict the textbook statement that rational 2d CFTs contain a finite number of primary operators.  In that context, ``primary" refers to primary operators with respect to the Virasoro algebra.  Here, we are discussing primaries with respect to the global conformal group, which is $\mathrm{SL}(2,\R)\x\mathrm{SL}(2,\R)$ in 2d.  A single Virasoro representation contains an infinite number of global conformal representations.}

One can prove that as $z\to 0$, the sum on the right-hand side is dominated by operators of dimension $\Delta\sim 1/\sqrt z$ \cite{Pappadopulo:2012jk}.  In other words, the unit operator on the left-hand side maps to the large-$\De$ asymptotics of the sum over operators on the right-hand side.  This is a general feature of the crossing equation --- it cannot be satisfied block-by-block.

One can also show \cite{Pappadopulo:2012jk} that the conformal block expansion converges exponentially in $\De$ whenever $|\rho|\leq 1$, where $\rho$ is defined in (\ref{eq:radialcoordinatedefinition}).  In particular, this means that both sides of the crossing equation converge exponentially in a finite neighborhood of the point $z=\bar z=\frac 1 2$, which will play an important role in the next section.

Analyzing different limits of the crossing equation can give other information about the CFT spectrum.  For example, the limit $z\to 0$ with $\bar z$ fixed gives information about operators with large spin \cite{Alday:2007mf,Fitzpatrick:2012yx,Komargodski:2012ek,Alday:2015ewa}.

\subsection{Bounds on CFT Data}
\label{sec:bounds}

The crossing equation (\ref{eq:crossingeqsummary}) has been known for decades.  However, little progress was made in solving it for CFTs in $d\geq 3$ until 2008, in a breakthrough paper by Rattazzi, Rychkov, Tonni, and Vichi \cite{Rattazzi:2008pe}.  Instead of trying to solve the crossing equation exactly, their insight was to derive bounds on CFT data by studying the crossing equation geometrically.  Crucially, their methods let one make rigorous statements about some of the CFT data (for example, the lowest few operator dimensions), without having to compute all of it.

The basic idea is simple.  Let us write the crossing equation as
\be
\label{eq:crossingononeside}
\sum_\cO f_{\f\f\cO}^2 \underbrace{\p{v^{\De_\f}g_{\De,\ell}(u,v)-u^{\De_\f} g_{\De,\ell}(v,u)}}_{F_{\De,\ell}^{\De_\f}(u,v)} &=& 0.
\ee
Abstractly, we can think of the functions $F_{\De,\ell}^{\De_\f}(u,v)$ as vectors $\vec{F}_{\De,\ell}^{\De_\f}$ in the (infinite-dimensional) vector space of functions of $u$ and $v$.  Recall that the coefficients $f_{\f\f\cO}^2$ are positive, so (\ref{eq:crossingononeside}) has the form
\be
\label{eq:abstractcrossingeq}
\sum_{\De,\ell} p_{\De,\ell} \vec{F}_{\De,\ell}^{\De_\f} &=& 0,\qquad p_{\De,\ell}\geq 0,
\ee
where $\De,\ell$ run over dimensions and spins of operators in the $\f\x\f$ OPE.  

\begin{figure}
\begin{center}
\includegraphics[width=0.85\textwidth]{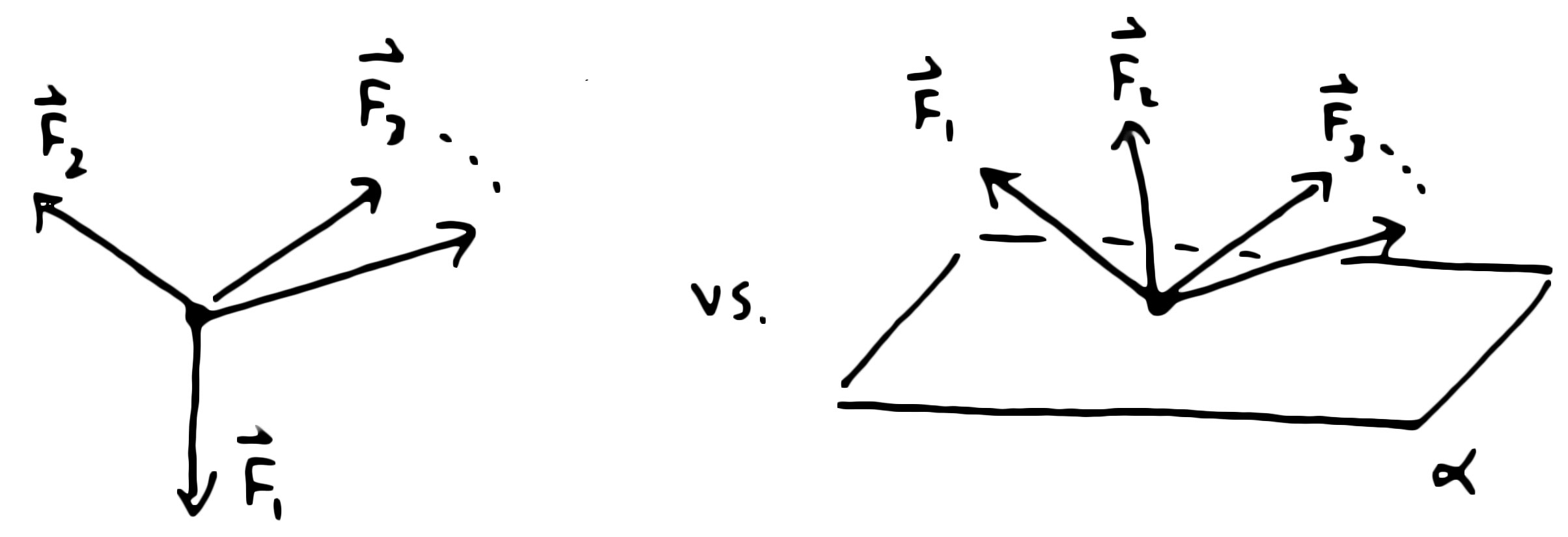}
\end{center}
\caption{On the left, a bunch of vectors that can sum to zero with positive coefficients.  On the right, a bunch of vectors that can't.  In the latter case, it's possible to find a separating plane $\a$.\label{fig:vectorpossibilities}}
\end{figure}

Equation~(\ref{eq:abstractcrossingeq}) says that a bunch of vectors sum to zero with positive coefficients.  This may or may not be possible, depending on the vectors.  The left-hand side of figure~\ref{fig:vectorpossibilities} shows a case where it's possible, and the right-hand side shows a case where it's impossible.  The way to distinguish these cases is to search for a {\it separating plane} $\alpha$ through the origin such that all the vectors $\vec{F}_{\De,\ell}^{\De_\f}$ lie on one side of $\alpha$.  If $\alpha$ exists, then the  $\vec{F}_{\De,\ell}^{\De_\f}$ cannot satisfy crossing, for any choice of coefficients $p_{\De,\ell}=f_{\f\f\cO}^2$.  This suggests the following procedure for bounding CFT data.

\begin{algorithm}[Bounding Operator Dimensions]
\label{thealgorithm}
\qquad

\begin{enumerate}
\item Make a hypothesis for which dimensions and spins $\De,\ell$ appear in the $\f\x\f$ OPE.
\item Search for a linear functional $\alpha$ that is nonnegative acting on all $\vec F_{\Delta,\ell}^{\Delta_\f}$ satisfying the hypothesis,
\be
\alpha(\vec F_{\Delta,\ell}^{\De_\f})\geq 0,
\ee
and strictly positive on at least one operator (usually taken to be the unit operator).
\item If $\alpha$ exists, the hypothesis is wrong.  We see this by applying $\alpha$ to both sides of (\ref{eq:abstractcrossingeq}) and finding a contradiction.
\end{enumerate}
\end{algorithm}

A slight modification of this algorithm also lets one bound OPE coefficients \cite{Caracciolo:2009bx}.

\subsection{An Example Bound}
\label{sec:boundsexample}

Let's work through an example.\footnote{An early version of this example is due to Sheer El-Showk, and this specific implementation is due to Jo\~ao Penedones and Pedro Vieira.}  Consider a 2d CFT with a real scalar primary $\f$ of dimension $\De_\f=\frac 1 8$.
Project the crossing equation onto a two-dimensional subspace with the linear map
\be
\label{eq:vecv}
\vec v(F) &=& \p{H\p{\frac 1 2,\frac 3 5} - H\p{\frac 1 2, \frac 1 3}, H\p{\frac 1 2,\frac 3 5} - H\p{\frac 1 3, \frac 1 4}}\in \R^2,\nn\\
&&\quad \textrm{where}\quad H(z,\bar z) = \frac{F(u,v)}{u^{\De_\f}-v^{\De_\f}},\nn\\
&&\qquad \qquad \qquad\,\,\,\,\,\,\, u=z\bar z,\nn\\
&&\qquad \qquad \qquad\,\,\,\,\,\,\, v= (1-z)(1-\bar z).
\ee
By linearity, the vectors $\vec v(F_{\De,\ell}^{\De_\f})$ also sum to zero with positive coefficients,
\be
\label{eq:finitedimensionalcrossing}
\sum_{\De,\ell} p_{\De,\ell} \vec v(F_{\De,\ell}^{\De_\f}) &=& 0.
\ee

In figure~\ref{fig:twodvectorsexample}, we plot $\vec v(F_{\De,\ell}^{\De_\f})$ for all $\De,\ell$ satisfying the unitarity bounds (\ref{eq:unitarityboundsummary}), where the conformal blocks are given by (\ref{eq:explicitblock2d}).  We have normalized the vectors so that they are easy to see, since changes in normalization can be absorbed into the coefficients $p_{\De,\ell}$.

\begin{figure}
\begin{center}
\includegraphics[width=0.9\textwidth]{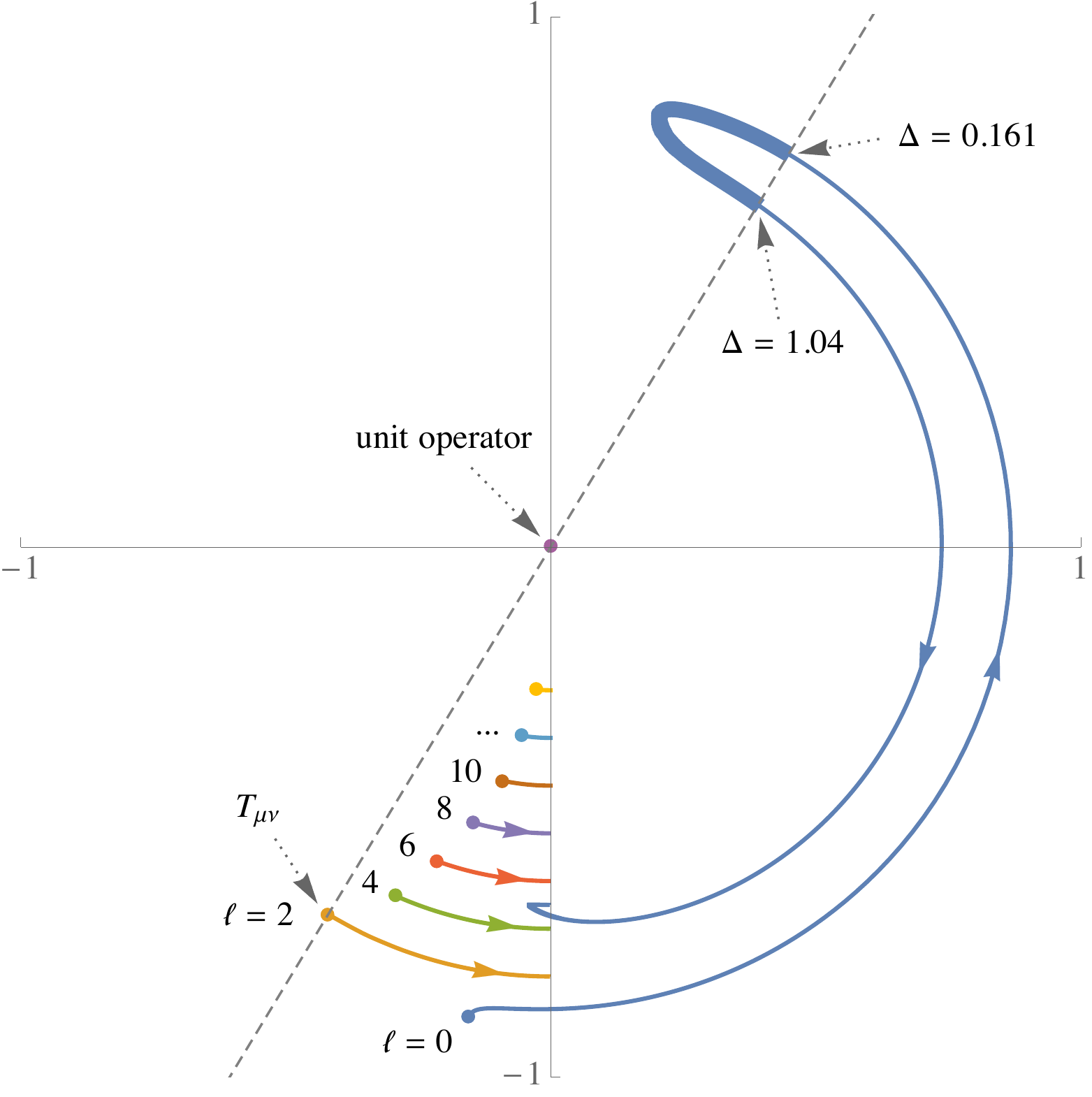}
\end{center}
\caption{Vectors $\vec v(F_{\De,\ell}^{\De_\f})$ for all values of $\De,\ell$ satisfying the 2d unitarity bound $\De\geq \ell$, with $\ell$ even.  Dots represent vectors at the unitarity bound $\De=\ell$.  As $\De$ varies, $\vec v(F_{\De,\ell}^{\De_\f})$ sweeps out a curve starting at the dot and approaching  the negative $y$-axis as $\De\to \oo$. The curves for spins $\ell=16,18,\dots$ look similar and converge quickly as $\ell\to \oo$, so we have not included them in the figure.  All vectors are normalized for visual simplicity, except for the unit operator $\vec v(F_{0,0}^{\De_\f})=\vec 0$.  The dashed line splits the figure into two half-spaces with the stress tensor $\vec v(F_{2,2}^{\De_\f})$ on the boundary.  The thicker region of the $\ell=0$ curve, in a different half-space from the rest of the figure, corresponds to scalars with dimension $\De\in[0.161,1.04]$.}
\label{fig:twodvectorsexample} 
\end{figure}

As $\De$ varies from $\ell$ (the unitarity bound) to $\oo$, $\vec v(F_{\De,\ell}^{\De_\f})$ sweeps out a curve.  The curves for higher spin operators $\ell\geq 2$ are extremely simple, converging quickly at large $\De$.  The scalar curve is more interesting. It circles counterclockwise partway around the origin before circling back and converging as $\De\to \oo$.  The region $\De\in[0.161,1.04]$ of the scalar curve lies in a different half space from the other curves.  To satisfy (\ref{eq:finitedimensionalcrossing}), we must include at least one vector from this region.  Thus, we immediately conclude
\begin{claim}
In a unitary 2d CFT with a real operator $\f$ of dimension $\De_\f=\frac 1 8$, there must exist a scalar in the $\f\x\f$ OPE with dimension $\De\in[0.161,1.04]$.
\label{claimboundone}
\end{claim}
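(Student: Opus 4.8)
The plan is to run Algorithm~\ref{thealgorithm} inside the two-dimensional projection defined by the map $\vec v$ in (\ref{eq:vecv}), arguing by contradiction. Suppose the $\f\x\f$ OPE contained \emph{no} scalar primary of dimension in $[0.161,1.04]$. Then by the unitarity bounds (\ref{eq:unitarityboundsummary}) every operator $\cO$ appearing in the OPE is either the unit operator, or a scalar with $\De\in(0,0.161)\cup(1.04,\oo)$, or a spin-$\ell$ operator with $\ell\geq 2$ even and $\De\geq\ell$. Applying the linear map $\vec v$ to the crossing equation (\ref{eq:crossingononeside})--(\ref{eq:abstractcrossingeq}) produces the finite-dimensional relation (\ref{eq:finitedimensionalcrossing}): the vectors $\vec v(F^{\De_\f}_{\De,\ell})\in\R^2$, as $(\De,\ell)$ ranges over the allowed set, sum to zero with nonnegative coefficients $p_{\De,\ell}=f_{\f\f\cO}^2$. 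The unit operator is inert here, since $g_{0,0}\equiv 1$ makes the ratio $H$ in (\ref{eq:vecv}) identically $-1$ and hence $\vec v(F^{\De_\f}_{0,0})=\vec 0$. The goal is to obstruct this relation with a separating linear functional, precisely as in step~2 of Algorithm~\ref{thealgorithm}.

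Concretely, I would look for a functional $\alpha(\vec v)=\vec n\.\vec v$ on $\R^2$ with two properties: $\alpha\p{\vec v(F^{\De_\f}_{\De,\ell})}\geq 0$ for \emph{every} unitary $(\De,\ell)$, and, among operators that could appear, only the unit operator, the stress tensor (which sits at $\De=d=2$, $\ell=2$, i.e.\ at the $\ell=2$ unitarity bound), and scalars at exactly $\De=0.161$ and $\De=1.04$ lie on $\{\alpha=0\}$. This is the geometry of figure~\ref{fig:twodvectorsexample}: the dashed line is $\{\alpha=0\}$, it carries $\vec v(F^{\De_\f}_{2,2})$, and the only part of the whole picture on its far side is the arc $\De\in[0.161,1.04]$ of the scalar curve. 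Given such an $\alpha$, apply it to (\ref{eq:finitedimensionalcrossing}): each term $p_{\De,\ell}\,\alpha(\vec v(F^{\De_\f}_{\De,\ell}))$ is nonnegative, so all of them vanish, forcing $p_{\De,\ell}=0$ except for operators on which $\alpha$ vanishes. Under the hypothesis the forbidden scalars are absent, so only the unit operator and the stress tensor can have nonzero coefficient, and (\ref{eq:finitedimensionalcrossing}) collapses to $f_{\f\f T}^2\,\vec v(F^{\De_\f}_{2,2})=\vec 0$. But the stress-tensor Ward identity (\ref{eq:stresstensorward}) gives $f_{\f\f T}^2\propto\De_\f^2/C_T>0$ (here $\De_\f=\frac18\neq 0$ and $C_T>0$ by unitarity), and evaluating the 2d block (\ref{eq:explicitblock2d}) for $\De=\ell=2$ at the three points $(1/2,3/5)$, $(1/2,1/3)$, $(1/3,1/4)$ shows $\vec v(F^{\De_\f}_{2,2})\neq\vec 0$ --- a contradiction. (If one does not insist on the sharp endpoints, it is cleaner to tilt $\alpha$ so that the stress tensor sits strictly on the positive side; the same contradiction then appears immediately, at the cost of a slightly wider interval.)

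The actual content --- and the main obstacle --- is proving that such an $\alpha$ exists, i.e.\ controlling $\alpha$ on the infinitely many curves $\De\mapsto\vec v(F^{\De_\f}_{\De,\ell})$, one per even spin. The explicit 2d formula (\ref{eq:explicitblock2d}), with $g^{(2d)}_{\De,\ell}=k_{\De+\ell}(z)k_{\De-\ell}(\bar z)+(z\leftrightarrow\bar z)$ and $k_\beta(x)=x^{\beta/2}\,{}_2F_1(\frac\beta2,\frac\beta2,\beta,x)$, reduces everything to a handful of Gauss hypergeometric values at $x\in\{3/5,1/3,1/4\}$, all inside $|x|<1$, where $k_\beta(x)$ is positive, monotone in its arguments, and (at fixed $x$) exponentially decaying in $\beta$. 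I would organize the verification in three steps. (i) The scalar curve $\ell=0$ is a single explicit plane curve, and $\alpha(\vec v(F^{\De_\f}_{\De,0}))$ is a transcendental function of $\De$ built from these hypergeometric values; it has exactly two sign changes, located --- rigorously, by interval arithmetic together with the monotonicity of $k_\beta$ --- at $\De\approx 0.161$ and $\De\approx 1.04$. These two numbers are the claimed endpoints and they fix the optimal direction $\vec n$. (ii) For each fixed even $\ell\geq 2$, the curve starts at its unitarity-bound point $\De=\ell$ (the stress tensor when $\ell=2$) and, by the same positivity and monotonicity estimates, stays in $\{\alpha\geq 0\}$, strictly so away from that starting point when $\ell=2$. (iii) Since the blocks decay like $|\rho|^{\De+\ell}$ with $|\rho|<1$ at the relevant points (cf.\ (\ref{eq:radialcoordinatedefinition})), the higher-spin curves collapse uniformly as $\ell\to\oo$ and the $\De\to\oo$ tail of each curve tends to a single direction (the negative $y$-axis in figure~\ref{fig:twodvectorsexample}); choosing $\vec n$ so that $\alpha\geq 0$ on that direction leaves only finitely many spins below an explicit cutoff to check, plus a uniform tail bound. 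None of these steps is conceptually deep, but verifying the required inequalities uniformly over a two-parameter family of hypergeometric evaluations is the delicate part, and the step most likely to hide an error.
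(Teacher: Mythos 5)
Your proposal follows essentially the same route as the paper: project the crossing equation onto the two-dimensional subspace defined by $\vec v$ in (\ref{eq:vecv}), take as $\alpha$ the functional defined by the separating (dashed) line of figure~\ref{fig:twodvectorsexample}, and run Algorithm~\ref{thealgorithm} to contradict the hypothesis, while deferring --- just as the paper does via exercise~\ref{exercise:asymptotics} --- the uniform verification that $\alpha(\vec v(F^{\De_\f}_{\De,\ell}))\geq 0$ for all unitary $(\De,\ell)$. The only difference is in the final strict-positivity step, and it is minor: where the paper either tilts $\vec u$ slightly so the stress tensor is strictly positive or invokes the existence of some additional operator in the OPE, you keep $T$ exactly on the line and close the argument with $f_{\f\f T}^2\,\vec v(F^{\De_\f}_{2,2})=\vec 0$ together with the Ward identity (\ref{eq:stresstensorward}) and $\vec v(F^{\De_\f}_{2,2})\neq\vec 0$, which is an equally valid ending.
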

\begin{proof}
We have already given the proof, but let us rephrase it in terms of Algorithm~\ref{thealgorithm}.
\begin{itemize}
\item Suppose (for a contradiction) that there are no scalars with $\De\in[0.161,1.04]$ in the $\f\x\f$ OPE.
\item Let
\be
\alpha(F) &=& \vec u \. \vec v(F),
\ee
where $\vec v(F)$ is defined in (\ref{eq:vecv}) and $\vec u\in \R^2$ is normal to the dashed line, pointing to the bottom right in figure~\ref{fig:twodvectorsexample}.  Note that $\alpha(F_{\De,\ell}^{\De_\f})\geq 0$ for all $\De,\ell$ satisfying our hypothesis.  Further, $\alpha$ is strictly positive on at least one vector appearing in the $\f\x\f$ OPE. (To establish this, we could rotate $\vec u$ slightly so that $\alpha$ is strictly positive on the stress tensor vector. Alternatively, we could use the fact that at least one higher dimension operator must appear in the $\f\x\f$ OPE.)
\item Applying $\alpha$ to both sides of (\ref{eq:abstractcrossingeq}), we find a contradiction: $0>0$.
\end{itemize}
\end{proof}

\begin{exercise}
\label{exercise:asymptotics}
Check that $\alpha(F_{\De,\ell}^{\De_\f})\geq 0$ is true asymptotically as $\De\to \oo$ and $\ell\to \oo$.  Convince yourself that the proof of Claim~\ref{claimboundone} could be made rigorous to a mathematician's standards.
\end{exercise}

Fiddling around with two-dimensional vectors has yielded a surprisingly strong result.  The 2d Ising CFT is an example of a unitary theory with a real scalar $\s$ (the spin operator) with dimension $\De_\s=\frac 1 8$.  The lowest dimension scalar in the $\s\x\s$ OPE is the energy operator $\e$, which has $\De_\e=1$.  So our upper bound $\De_\mathrm{scalar}\leq 1.04$ is within $4$ percent of being saturated by an actual theory!

\subsection{Numerical Techniques}
\label{sec:numericaltechniques}

The bound $\De_\mathrm{scalar}\in[0.161,1.04]$ is not particularly special.  If we had picked a different two-dimensional subspace (\ref{eq:vecv}), we would have gotten different numbers.  We might also consider higher-dimensional subspaces and derive even stronger results.  Although it is possible to prove bounds by hand as we did in the previous subsection, computerized searches are the current state-of-the-art.  In this section, we describe some of the techniques involved.

The hard part of Algorithm~\ref{thealgorithm} is the middle step: finding a functional $\alpha$ such that
\be
\label{eq:positivityconstraints}
\alpha(\vec F_{\De,\ell}^{\De_\f}) \geq 0,\qquad\textrm{for all $\De,\ell$ satisfying our hypothesis}.
\ee
If we want to use a computer, we have two immediate difficulties:
\begin{enumerate}
\item The space of possible $\alpha$'s is infinite-dimensional.
\item There are an infinite number of positivity constraints (\ref{eq:positivityconstraints}) --- one for each $\De,\ell$ satisfying our hypothesis.  Our hypothesis usually allows $\ell$ to range from $0$ to $\oo$, and $\De$ to vary continuously (aside from a few discrete values).\footnote{This is due to ignorance about the spectrum. Although physical CFT spectra should be discrete, we don't know exactly which discrete values $\De$ takes, and so we must include positivity constraints for continuously varying $\De$.}
\end{enumerate}

The first difficulty is easy to fix.  Instead of searching the infinite-dimensional space of all functionals, we restrict to a finite-dimensional subspace.  If we find $\alpha$ in our subspace that satisfies the positivity constraints, we can immediately rule out our hypothesis about the spectrum.  If we can't find $\alpha$, then we can't conclude anything about the spectrum: either no functional exists, or we just weren't searching a big enough subspace.

In the example from section~\ref{sec:boundsexample}, we restricted $\alpha$ to linear combinations of the components of $\vec v(F)$ in (\ref{eq:vecv}).  For numerical computations, we usually take linear combinations of derivatives around the crossing-symmetric point $z=\bar z = \frac 1 2$,
\be
\label{eq:choiceoffunctionals}
\alpha(F) &=& \sum_{m+n\leq \Lambda} a_{mn}\ptl_z^m \ptl_{\bar z}^n F(z,\bar z)|_{z=\bar z=\frac 1 2},
\ee
where $\Lambda$ is some cutoff.  The functional $\alpha$ is now parameterized by a finite number of coefficients $a_{mn}$, and a computer can search over these coefficients.\footnote{Note that $F(z,\bar z)$ is symmetric under $z\leftrightarrow \bar z$ (because $u$ and $v$ are), so we can restrict $m\leq n$.  Also, $F(z,\bar z)$ is odd under $(z,\bar z)\to (1-z,1-\bar z)$, so we can restrict to $m+n$ odd. This gives $\frac 1 2\lfloor \frac{\Lambda+1}{2}\rfloor(\lfloor \frac{\Lambda+1}{2}\rfloor+1)$ coefficients.}\footnote{Sometimes these bounds appear to converge as $\Lambda$ increases, justifying post-hoc the choice of subspace (\ref{eq:choiceoffunctionals}).  However, this subspace is not always obviously the best choice.  New results might come from studying different points in the $z,\bar z$ plane, integrating against kernels $K(z,\bar z)$, or doing something more exotic.  For example, the limit $z\to 0$, with $\bar z$ fixed is known to encode interesting information about high spin operators. Finding the optimal space of functionals is an open problem.}

Getting around the second difficulty takes more care.  To solve the inequalities (\ref{eq:positivityconstraints}) on a computer, we must encode them with a finite amount of data.  It is usually sufficient to restrict $\ell\leq \ell_\mathrm{max}$ for some large cutoff $\ell_\mathrm{max}$.  After we find $\alpha$, we can go back and check afterwards that it satisfies $\alpha(F_{\De,\ell}^{\De_\f})\geq 0$ for $\ell>\ell_\mathrm{max}$, as in exercise~\ref{exercise:asymptotics}.

To deal with the continuous infinity of $\De$'s, three techniques have been used in the literature:
\begin{itemize}
\item Discretize $\De$ with a small spacing and impose a cutoff $\De_\mathrm{max}$.  We then have a finite set of linear inequalities for $a_{mn}$, which can be solved using {\it linear programming}.  This is the approach taken in the original paper on CFT bounds \cite{Rattazzi:2008pe}.

\item Use a version of the simplex algorithm (underlying many linear programming solvers) that is customized to handle continuously varying constraints, see \cite{El-Showk:2014dwa,Paulos:2014vya}.

\item Approximate the constraints (\ref{eq:positivityconstraints}) as positivity conditions on polynomials and use {\it semidefinite programming} \cite{Poland:2011ey,Kos:2013tga,Kos:2014bka,Simmons-Duffin:2015qma}.  Appendix~\ref{app:semidefinite} explains the basic idea.

\end{itemize}

\subsection{Improving on our Hand-Computed Bound}

Let us compute an upper bound on the lowest-dimension scalar in the $\f\x\f$ OPE using a computer search. We will assume a $\Z_2$ symmetry under which $\f$ is odd so that $\f$ doesn't appear in its own OPE\@. The procedure is as follows
\begin{enumerate}
\item Pick a value $\De_0$ and assume that all scalars in the $\f\x\f$ OPE have dimension $\De\geq \De_0$.

\item Use a computer to search for $a_{mn}$ such that
\begin{align}
\label{eq:positivityconditioninexample}
\sum_{m+n\leq \Lambda} & a_{mn}  \ptl_z^m\ptl_{\bar z}^n F^{\De_\f}_{\De,\ell}(z,\bar z)|_{z=\bar z = \frac 1 2} \geq 0,\nn\\
\textrm{for all }\quad\ell &= 0,2,\dots,\ell_\mathrm{max},\quad\De \geq \begin{cases}
\De_0 & (\ell=0), \\
\ell+d-2 & (\ell>0).
\end{cases}
\end{align}

\item If (\ref{eq:positivityconditioninexample}) is solvable, there must exist a scalar with dimension below $\De_0$.

\end{enumerate}

The best bound is the critical value $\De_{0}^\mathrm{crit.}$ above which (\ref{eq:positivityconditioninexample}) is solvable and below which it is not. To find it, we can perform a binary search in $\De_0$, running the algorithm above at each step. By additionally varying $\De_\f$, we obtain a $\De_\f$-dependent upper bound on the lowest-dimension scalar in the $\f\x\f$ OPE.

An implementation of this procedure is included with the semidefinite program solver {\tt SDPB} \cite{Simmons-Duffin:2015qma}.\footnote{See {\tt mathematica/Bootstrap2dExample.m} at {\tt https://github.com/davidsd/sdpb}.}  See also \cite{Behan:2016dtz} for a Python interface to {\tt SDPB} and \cite{Paulos:2014vya} for another user-friendly bootstrap package.  Running the code for $\Lambda=6,8,12,16,20,28$ gives the bounds shown in figure~\ref{fig:2dbound}.\footnote{We use the {\tt SDPB} parameters listed in the appendix of \cite{Simmons-Duffin:2015qma}.}  

\begin{figure}[hrt!]
\begin{center}
\includegraphics[width=\textwidth]{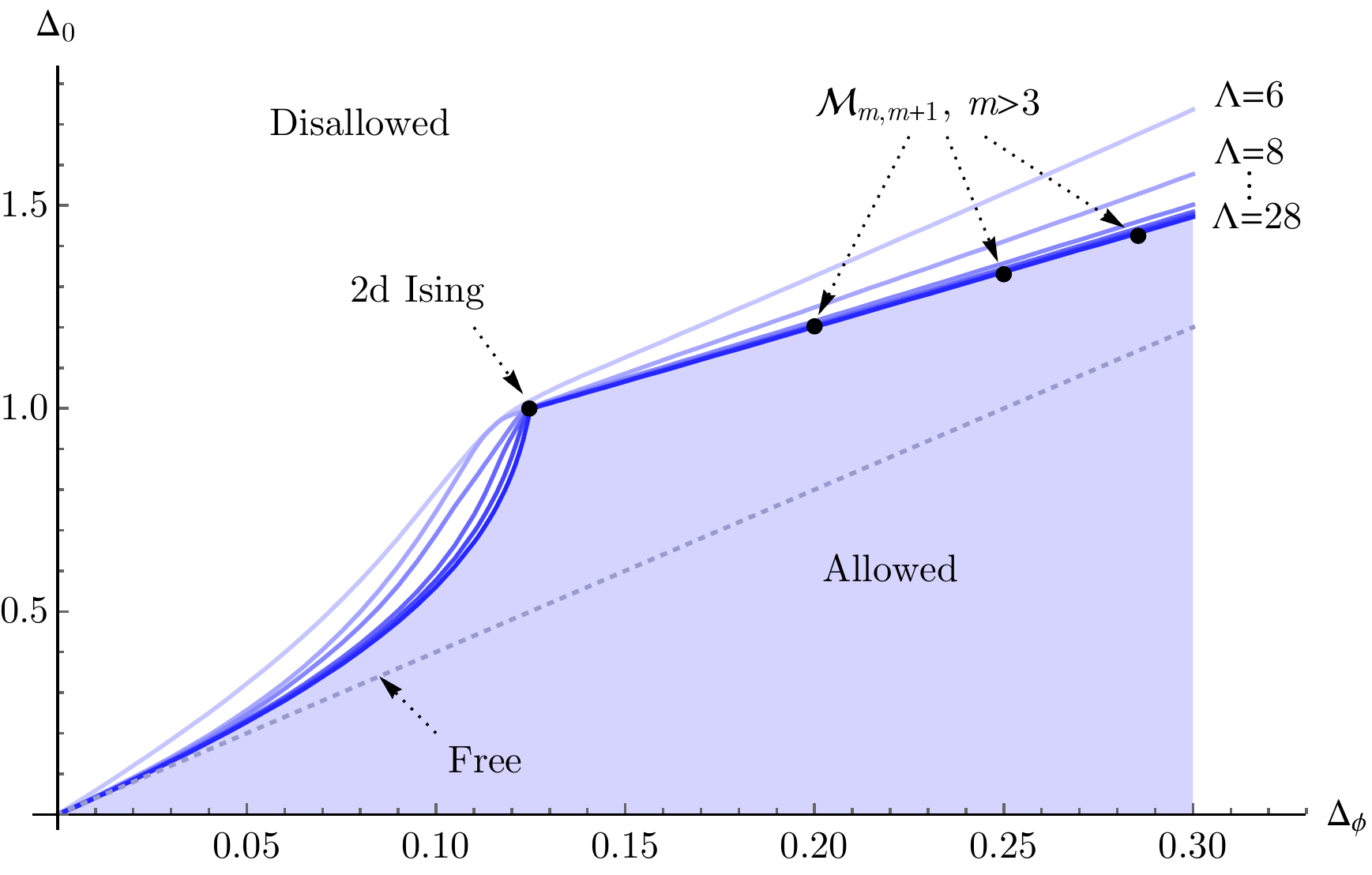}
\end{center}
\caption{\label{fig:2dbound}
Upper bounds on the dimension $\De_0$ of the lowest dimension scalar in the $\f\x\f$ OPE as a function of $\De_\f$, for 2d CFTs with a $\Z_2$ symmetry. The bounds are computed using {\tt SDPB} for $\Lambda=6,8,12,16,20,28$, with the strongest bound (darkest blue curve) corresponding to $\Lambda=28$ (a 105-dimensional space of functionals). The black dots represent the unitary minimal models $\mathcal{M}_{m,m+1}$ with $(\De_\f,\De_0)=(\frac 1 2 - \frac{3}{2(m+1)},2-\frac{4}{m+1})$ for $m=3,4,5,6$, of which the 2d Ising model is the case $m=3$.  The dashed line represents the lowest dimension scalar in an OPE of operators $\cos(k\phi)$ in the free boson theory.  These bounds first appeared in~\cite{Rychkov:2009ij}. It should be possible to improve on the lower bound in section~\ref{sec:boundsexample} as well, though we have not attempted this.
}
\end{figure}

As the cutoff $\Lambda$ on the number of derivatives increases, the bounds $\De_0^\mathrm{crit.}(\De_\f)$ get stronger. Remarkably, the strongest bounds are nearly saturated by interesting physical theories. The most obvious feature of figure~\ref{fig:2dbound} is a {\it kink\/} near the location of the 2d Ising model $(\De_\f,\De_0)=(\frac 1 8,1)$.  (Other exactly soluble unitary minimal models $\mathcal{M}_{m,m+1}$ also lie near the bound.) The bounds for different $\Lambda$ at the 2d Ising point $\De_\f=\frac 1 8$ are given in table~\ref{tab:twodbounds}.  Taking $\Lambda=28$ gives a bound $\De_\e \leq \De_0^\mathrm{crit.}(\frac 1 8)\approx 1.0000005$, within $5\x 10^{-7}$ of the correct value.

\begin{table}[ht]
\tbl{\label{tab:twodbounds} Upper bounds on $\De_\e$ in the 2d Ising model, computed with different cutoffs $\Lambda$ on the number of derivatives.}{
\begin{tabular}{c|c|c|c|c|c|c}
$\Lambda$  & 6 & 8 & 12 & 16 & 20 & 28 \\
\hline
$\De_0^\mathrm{crit.}(\De_\f=\frac 1 8)$ & 1.020 & 1.0027 & 1.00053 & 1.000043 & 1.0000070 & $\sim1.0000005$
\end{tabular}}
\end{table}

\subsection{Numerical Results in 3d}

It is helpful to compare to exact solutions in 2d, but the above results are remarkable because the methods are so general. We input information about 2d global conformal symmetry (nothing about the Virasoro algebra!) and unitarity, and the 2d Ising model pops out.  Wonderfully, the same thing happens in 3d! Again, we compute an upper bound on the lowest dimension scalar in the $\f\x\f$ OPE, this time using the 3d conformal blocks and the 3d unitarity bound.  The resulting bound, shown in figure~\ref{fig:singletbound3d}, has a kink at $(\De_\f,\De_0)\approx(0.518,1.412)$ --- close to the values realized in the 3d Ising CFT \cite{ElShowk:2012ht}.

\begin{figure}[hrt!]
\begin{center}
\includegraphics[width=\textwidth]{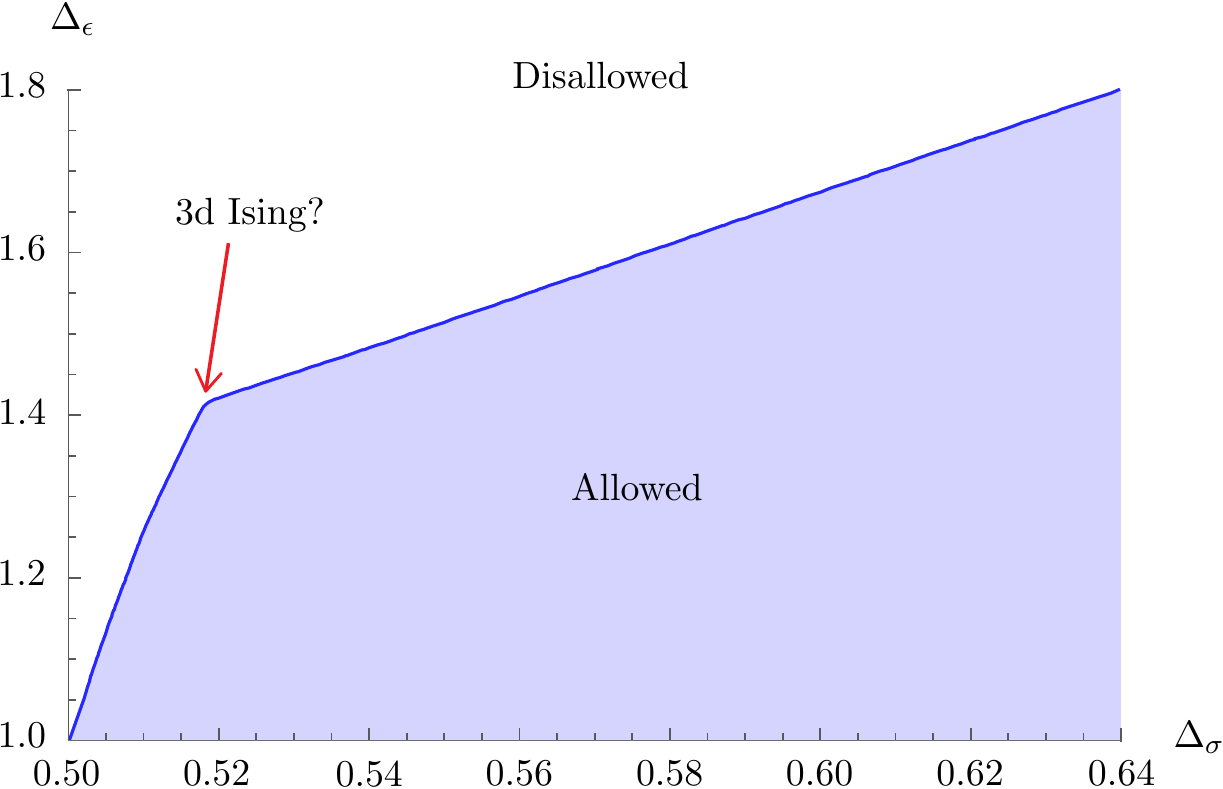}
\end{center}
\caption{\label{fig:singletbound3d} Upper bound on the dimension $\De_\e$ of the lowest dimension scalar in the $\s\x\s$ OPE, where $\s$ is a real scalar primary in a unitary 3d CFT with a $\Z_2$ symmetry, from \cite{El-Showk:2014dwa}. This bound is computed with $\Lambda=24$ (78-dimensional space of derivatives).}
\end{figure}

All the results discussed so far come from studying crossing symmetry of a single four-point function.  However, the techniques can be generalized to systems of correlation functions. The system $\<\s\s\s\s\>$, $\<\s\s\e\e\>$, $\<\e\e\e\e\>$ in the 3d Ising CFT was studied in \cite{Kos:2014bka}.  To get interesting new bounds in this case, it's necessary to input an additional fact: that $\s$ and $\e$ are the only relevant scalars in the theory.\footnote{This is an obvious experimental fact about the 3d Ising CFT. (It would be interesting to prove mathematically.)  Relevant scalars are in one-to-one correspondence with parameters that must be tuned to reach the critical point in some microscopic theory.  The fact that the phase diagram of water is 2-dimensional (the axes are temperature and pressure) tells us that the critical point of water has two relevant operators.}  In practice, this roughly means that we impose positivity conditions $\alpha(F_{\De,\ell})\geq 0$ for $\De=\De_\s,\De_\e$, and $\De\geq 3$.  The resulting bound in figure~\ref{fig:multicorrelator3d} now restricts $(\De_\s,\De_\e)$ to a small island in the space of operator dimensions.

\begin{figure}[hrt!]
\begin{center}
\includegraphics[width=\textwidth]{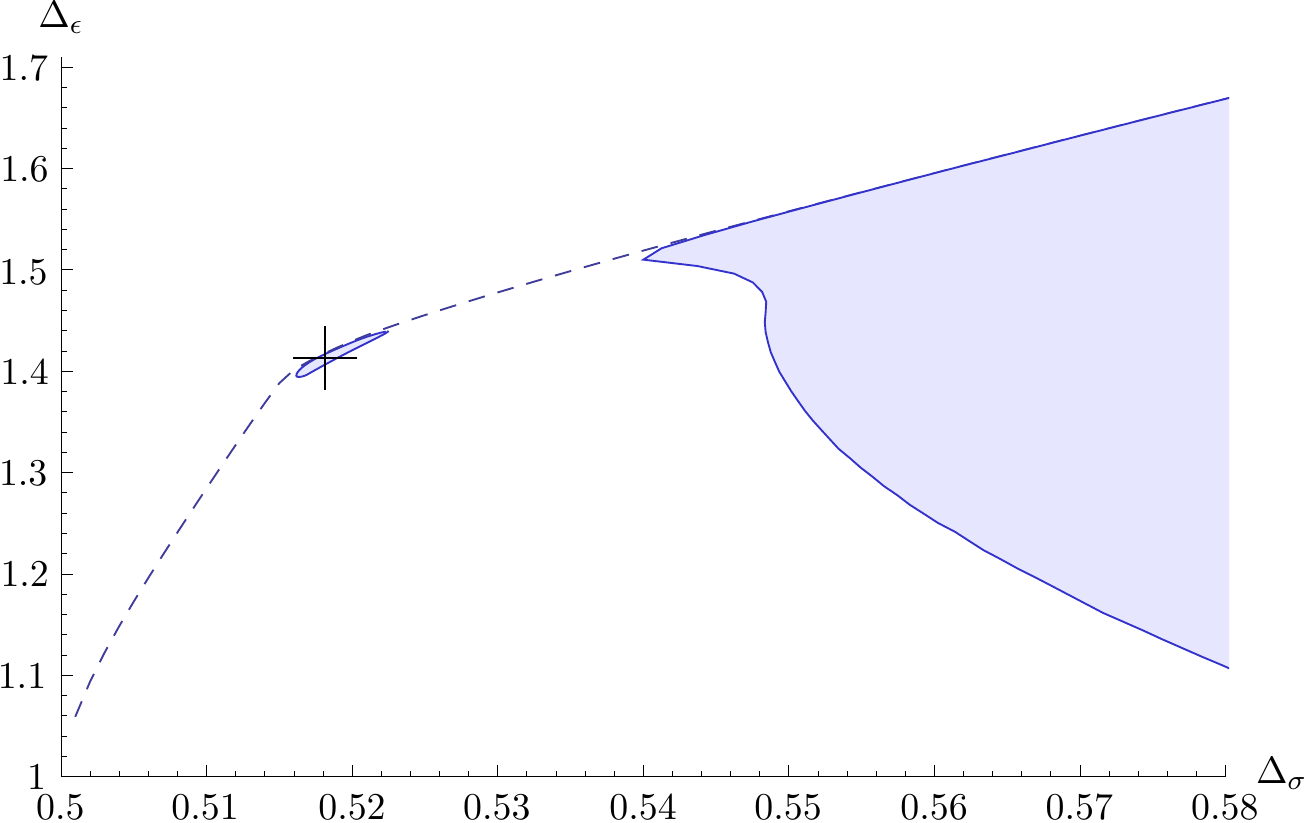}
\end{center}
\caption{\label{fig:multicorrelator3d} Bound on $(\De_\s,\De_\e)$ in a unitary 3d CFT with a $\Z_2$ symmetry and two relevant scalars $\s,\e$ with $\Z_2$ charges $-,+$.  The bound comes from studying crossing symmetry of $\<\s\s\s\s\>$, $\<\s\s\e\e\>$, $\<\e\e\e\e\>$, and is computed with $\Lambda=12$.  The allowed region is now a small island near the 3d Ising point (black cross), with an additional bulk region to the right.}
\end{figure}

The same multiple correlator bound, computed with $\Lambda=43$ using {\tt SDPB}, is shown in figure~\ref{fig:sdpbBound} \cite{Simmons-Duffin:2015qma}.  The island has shrunk substantially, giving a precise determination of the 3d Ising operator dimensions,
\be
(\De_\s,\De_\e) &=& (0.518151(6), 1.41264(6)).
\ee
Figures~\ref{fig:multicorrelator3d} and~\ref{fig:sdpbBound} are conceptually interesting.  Firstly, the striking agreement between Monte Carlo simulations and the conformal bootstrap is strong evidence that the critical 3d Ising model actually does flow to a conformal fixed-point, as originally conjectured by Polyakov \cite{Polyakov:1970xd}.

Secondly, figures~\ref{fig:multicorrelator3d} and~\ref{fig:sdpbBound} give a way to understand the phenomenon of critical universality discussed at the beginning of this course.  If a theory flows to a unitary 3d CFT with a $\Z_2$-symmetry and two relevant scalars $\s,\e$ --- and if $\De_\s,\De_\e$ don't live in the bulk region in figure~\ref{fig:multicorrelator3d} --- then the IR theory must live in the 3d Ising island!  Perhaps future bootstrap studies will shrink the 3d Ising island to a point, proving the IR equivalence of these theories.

\begin{figure}[hrt!]
\begin{center}
\includegraphics[width=\textwidth]{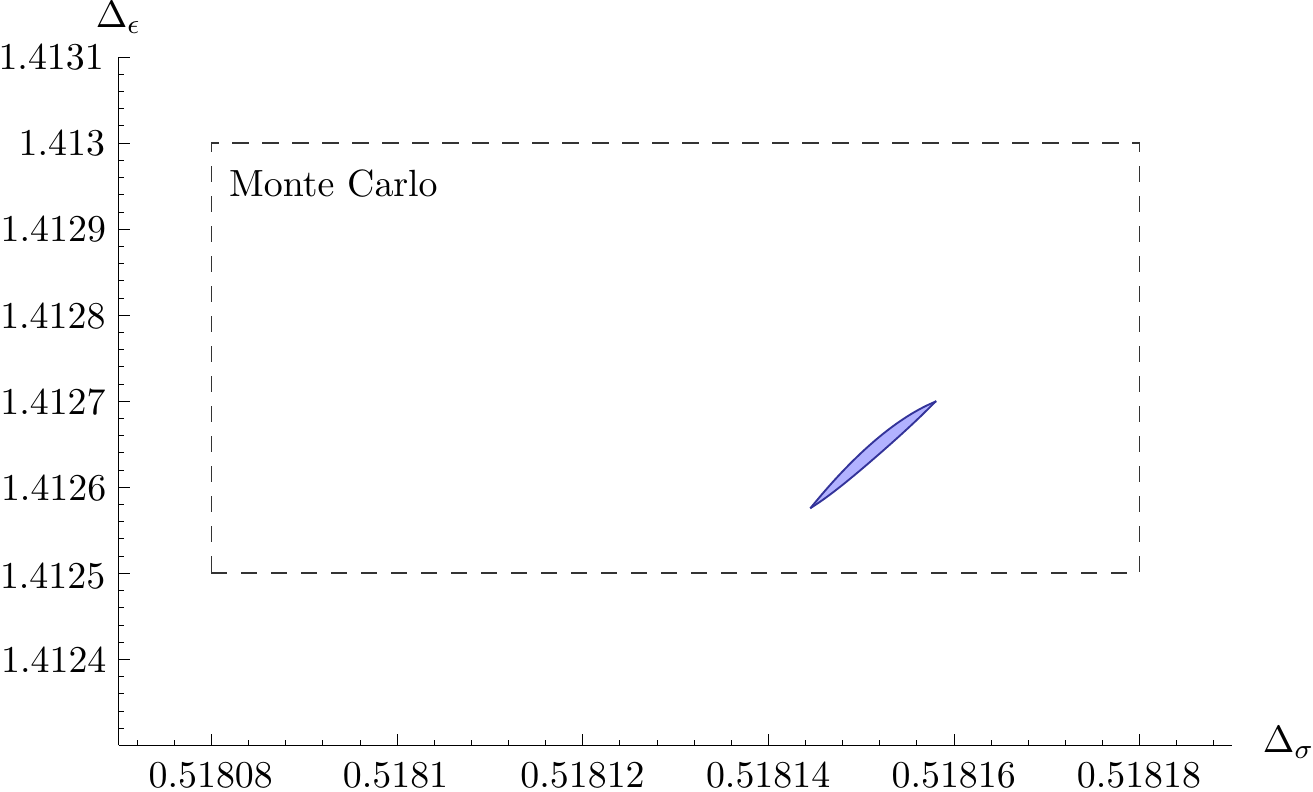}
\end{center}
\caption{\label{fig:sdpbBound} Bound on $(\De_\s,\De_\e)$ in a unitary 3d CFT with a $\Z_2$ symmetry and two relevant scalars $\s,\e$ with $\Z_2$ charges $-,+$.  The bound comes from studying crossing symmetry of $\<\s\s\s\s\>$, $\<\s\s\e\e\>$, $\<\e\e\e\e\>$, and is computed with $\Lambda=43$ using {\tt SDPB}. The allowed region is the blue sliver. The dashed rectangle shows the $68\%$ confidence region for the current best Monte Carlo determinations.}
\end{figure}

\subsection{Open Questions}

The techniques above have been applied to numerous theories in different spacetime dimensions, with different amounts of supersymmetry \cite{Rattazzi:2008pe, Rychkov:2009ij, Caracciolo:2009bx, Poland:2010wg, Rattazzi:2010gj, Rattazzi:2010yc, Vichi:2011ux, Poland:2011ey, Rychkov:2011et, ElShowk:2012ht,Liendo:2012hy, Beem:2013qxa, Kos:2013tga, El-Showk:2013nia, Alday:2013opa, Gaiotto:2013nva,Bashkirov:2013vya, Berkooz:2014yda, El-Showk:2014dwa, Nakayama:2014lva,Nakayama:2014yia, Alday:2014qfa, Chester:2014fya, Kos:2014bka, Caracciolo:2014cxa, Nakayama:2014sba, Golden:2014oqa, Chester:2014mea, Paulos:2014vya, Beem:2014zpa, Simmons-Duffin:2015qma, Bobev:2015vsa, Bobev:2015jxa, Kos:2015mba, Chester:2015qca, Beem:2015aoa, Iliesiu:2015qra,poland2015exploring,Lemos:2015awa,Lin:2015wcg,Chester:2015lej,Chester:2016wrc}.  Because we don't start with a Lagrangian, there's no guarantee when and how a particular physical theory will show up in the bounds. It's an open question which correlators to study to isolate different CFTs.

Other open questions include the following:
\begin{itemize}
\item How do the bounds behave in the limit $\Lambda\to \oo$? Does the Ising island shrink to a point, still using a finite number of correlation functions, or must we study larger systems of crossing equations?
\item How does one efficiently compute higher operator dimensions and OPE coefficients?  The extremal functional method \cite{Poland:2010wg,ElShowk:2012hu,El-Showk:2014dwa} is one way, but it is hard to estimate the associated errors.
\item Can the requirement of unitarity be relaxed? Gliozzi's method of determinants \cite{Gliozzi:2013ysa} has shown success analyzing the crossing equation in nonunitary theories and other situations where positivity is not obviously present \cite{Gliozzi:2014jsa,Gliozzi:2015qsa,Nakayama:2016cim}. Can it be made rigorous?
\item What information is hidden in correlators of higher-spin operators like stress tensors?
\item What can we prove analytically about the crossing equations? Progress has been made in certain limits, for example large-$N$ \cite{Heemskerk:2009pn}, large dimension \cite{Pappadopulo:2012jk}, large spin \cite{Alday:2007mf,Fitzpatrick:2012yx,Komargodski:2012ek,Alday:2015ewa}, and combinations thereof \cite{Fitzpatrick:2014vua,Fitzpatrick:2015qma,Fitzpatrick:2015zha}. Another approach is to study the implications of slightly broken symmetries \cite{Maldacena:2012sf,Rychkov:2015naa,Alday:2015ota,Giombi:2016hkj}. It would be extremely interesting to prove analytical results about the small-$N$, small $\De,\ell$ regime.
\item What additional structures and consistency conditions should we incorporate into the bootstrap? (See section~\ref{sec:additionalstructures}.)
\item What protected information can be computed using supersymmetry? Bootstrap studies recently led to the discovery of beautiful new algebraic structures in the OPE algebra of supersymmetric theories in $3,4,6$ dimensions \cite{Beem:2013sza,Chester:2014mea,Beem:2016cbd}. How do these structures interact with the full non-protected bootstrap? 
\end{itemize}

That's a lot of open questions, and there are certainly many more.
I hope some of you will help find the answers!

\section*{Acknowledgements}

I am grateful to Joe Polchinski and Pedro Vieira for inviting me to give this course, and Tom DeGrand, Oliver DeWolfe, and Sherry Namburi for helping make TASI such a fun experience.  I am also grateful to Justin David, Chethan Krishnan and Gautam Mandal for organizing the Advanced Strings School at ICTS, Bangalore.  A special thanks to the spectacular students at TASI and the Strings School, who asked so many good questions.  Thanks to Chris Beem, Joanna Huey, Filip Kos, Petr Kravchuk, David Poland, and Slava Rychkov for comments.  Thanks to Sheer El-Showk, Jo\~ao Penedones, and Pedro Vieira for the nice example in section~\ref{sec:boundsexample}. I am supported by DOE grant number DE-SC0009988 and a William D. Loughlin Membership at the Institute for Advanced Study.

\begin{appendix}[Quantization of the Lattice Ising Model]
\label{app:latticequantization}

In this section, we show how to interpret the partition function of the Ising model on a square lattice in terms of Hilbert spaces and discrete time evolution.  This is a textbook trick,\footnote{It is the starting point for Onsager's exact solution of the 2d Ising model \cite{PhysRev.65.117}.} but we review it because it clearly illustrates several ideas from section~\ref{sec:quantization}.

Consider the 2d Ising model on an $m\x n$ lattice with periodic boundary conditions.  The spins are given by $s_{i,j}\in \{\pm 1\}$, where $i\in \Z/m\Z$ and $j\in \Z/n\Z$.  The partition function is
\be
\label{eq:isingpartitionfunction}
Z &=& \sum_{\{s\}} \exp\p{-JS_h(s) -J S_v(s)},\nn\\
S_h(s) &\equiv& \sum_{i,j} s_{i,j}s_{i+1,j},\nn\\
S_v(s) &\equiv& \sum_{i,j} s_{i,j} s_{i,j+1},
\ee
where we have separated the action into contributions from horizontal and vertical bonds.

We will think of the $j$-direction as ``time", and introduce a Hilbert space $\cH_m$ associated with a ``slice" of $m$ lattice sites at constant time.  The space $\cH_m$ has a basis state for each spin configuration on the slice,
\be
|s_1,\dots,s_m\>,\qquad s_i\in\{\pm 1\}.
\ee
These are the analogs of the field eigenstates $|\phi_b\>$ in section~\ref{sec:operatorimpliesstate}.
The Pauli spin matrices $\widehat\sigma_i^{\mu}$, $\mu=x,y,z$ act on the $i$-th site.

The operator
\be
 U &\equiv & \exp\p{-J\sum_{i=1}^m \widehat\s^z_i \widehat\s^z_{i+1}}
\ee
encodes the contribution to the partition function from horizontal bonds between $m$ spins in a line. For example on an $m\x 1$ lattice, we would have
\be
 U |s_1,\dots,s_m\> &=& e^{-JS_h(s)}|s_1,\dots,s_m\>.
\ee
The operator
\be
 V &\equiv& \prod_i(e^{-J} + e^{J}\widehat\s_i^x)
\ee
encodes the effects of vertical bonds. For each site, it either preserves the spin, giving a factor $e^{-J}$ associated with aligned spins, or flips it, giving a factor $e^{J}$ associated with anti-aligned spins.  Defining the ``transfer matrix" $ T\equiv VU$, it's easy to check that
\be
Z &=& \Tr_{\cH_m}( T^n).
\ee

We have interpreted the discrete path integral (\ref{eq:isingpartitionfunction}) in terms of operators on a Hilbert space.  The transfer matrix is a discrete analogue of the time-evolution operator $e^{-tH}$.  The path integral variable $s_{i,j}$ maps to the quantum operator
\be 
\label{eq:quantizationmapone}
s_{i,j} &\to& T^{-j} \s_i^z  T^j,
\ee
and correlation functions become traces of time-ordered products, e.g.\footnote{We use the convention $\th(0)=\frac 1 2$.}
\be
\<\s_{i_1,j_1}\s_{i_2,j_2}\> &=& \Tr_{\cH_m}(T^{n+j_2-j_1} \widehat \s_{i_1}^z T^{j_1-j_2} \widehat \s_{i_2}^z)\theta(j_1 - j_2) + (1\leftrightarrow 2).\nn\\
\ee

We could instead have quantized the theory with the horizontal direction as time.  This would give a different Hilbert space $\cH_n$ with dimension $2^n$ instead of $2^m$, a new transfer matrix $T'$ (acting on $\cH_n$), and a different formula for the same path integral
\be
Z &=& \Tr_{\cH_n}(T'^m)=\Tr_{\cH_m}(T^n).
\ee
The new quantization map would be
\be
\label{eq:quantizationmaptwo}
s_{i,j} &\to& T'^{-i} \widehat \s_j^z T'^i.
\ee
Let us emphasize that the operators (\ref{eq:quantizationmapone}) and (\ref{eq:quantizationmaptwo}) are truly different, even though they represent the same path integral variable.  They even act on different-dimensional Hilbert spaces ($2^m$ vs.\ $2^n$)!  Thus, it's not surprising that properties associated to a particular quantization, like their behavior under Hermitian conjugation (section~\ref{sec:reflectionpositivity}), could be different.

\vspace{0.2in}

\end{appendix}

\begin{appendix}[Euclidean vs. Lorentzian and Analytic Continuation]
\label{app:analyticcontinuation}

Here we make some brief comments about Euclidean and Lorentzian correlation functions and analytic continuation between them.  

The first comment is that in Euclidean quantum field theory, out-of-time-order correlators don't make sense.  For example, consider a Euclidean two-point function,
\be
\label{eq:euclideantwopt}
\<0|\cO_1(t_1)\cO_2(t_2)|0\> &=& \<0| \cO_1(0) e^{H(t_2-t_1)} \cO_2(0)|0\>.
\ee
In QFT, the Hamiltonian $H$ is bounded from below and has arbitrarily large positive eigenvalues.  If we take $t_2 > t_1$, then the operator $e^{H(t_2-t_1)}$ is unbounded.  Generically, local operators $\cO_{1,2}(0)$ have nonzero amplitude to create arbitrarily high energy states. Thus, (\ref{eq:euclideantwopt}) is formally infinite.

Because the Euclidean path integral gives a time-ordered product
\be
\label{eq:euclideantimeorderedproduct}
\<\cO_1(t_1)\cO_2(t_2)\> &=& \<0|\cO_1(0)e^{H(t_2-t_1)}\cO_2(0)|0\>\th(t_1-t_2)+\nn\\
&& \<0|\cO_2(0)e^{H(t_1-t_2)}\cO_1(0)|0\>\th(t_2-t_1),
\ee
it is well-defined for any ordering of the time coordinates. Specifically, the operators $e^{H(t_i-t_j)}$ in (\ref{eq:euclideantimeorderedproduct}) are always bounded.

In Lorentzian quantum field theory, however,  non-time-ordered correlators (Wightman functions) are interesting observables.  They can be obtained from time-ordered Euclidean correlators as follows.  First set the Euclidean times equal to small values $t_{Ei}=\e_i$, increasing in the same order as the operator ordering we want.  For example, to place $\cO_1$ later than $\cO_2$, consider
\be
\<\cO_1(\e)\cO_2(0)\> &=& \<0|\cO_1(0)e^{-\e H}\cO_2(0)|0\>,\qquad \e>0.
\ee
Now continue $t_{Ei}$ in the pure imaginary direction to the desired Lorentzian times $it_{Li}$.  Because $e^{H(t_{Ei}-t_{Ej})}$ never becomes unbounded, the operators remain in the same order,
\be
\<\cO_1(\e+i t_{L1})\cO_2(it_{L2})\> &=& \<0|\cO_1(0)e^{-\e H - iH(t_{L1}-t_{L2})}\cO_2(0)|0\>.
\ee
Finally, take $\e\to 0$ to get the desired Wightman function.

To get a time-ordered Lorentzian correlator, there is a simple trick: just simultaneously rotate all Euclidean times $t\to i (1-i\e) t$. Because the ordering of the real parts of $t$ are preserved, the order of the operators will be too. This is Wick rotation.

Many properties of correlators under analytic continuation are clearer when thinking about states and Hamiltonians, as opposed to path integrals.

\vspace{0.2in}

\end{appendix}

\begin{appendix}[Semidefinite Programming]
\label{app:semidefinite}

For our purposes, a semidefinite program solver is an oracle that can solve the following problem:
\be
\label{eq:semidefiniteprogram}
\textrm{Find $\vec a$ such that $\vec a \. \vec P_i(x) \geq 0$ for all $x\geq 0$, $i=1,\dots,N$},
\ee
where $\vec P_i(x)$ are vector-valued polynomials.  There are many freely-available semidefinite program solvers.  {\tt SDPB} \cite{Simmons-Duffin:2015qma} in particular was written for application to the conformal bootstrap.

We would like to write our search in the form (\ref{eq:semidefiniteprogram}).
After restricting to the subspace (\ref{eq:choiceoffunctionals}), our positivity constraints become
\be
\label{eq:positivityinfinitespace}
\sum_{m+n\leq \Lambda} a_{mn}\ptl_z^m \ptl_{\bar z}^n F_{\De,\ell}^{\De_\f}(z,\bar z)|_{z=\bar z = \frac 1 2} \geq 0.
\ee
The trick is to find an approximation
\be
\label{eq:polynomialapproximation}
\ptl_z^m \ptl_{\bar z}^n F_{\De,\ell}^{\De_\f}(z,\bar z)|_{z=\bar z= \frac 1 2} &\approx& \chi_\ell(\De) P^{mn}_\ell(\De),
\ee
where $\chi_\ell(\De) \geq 0$ are positive and $P^{mn}_\ell(\De)$ are polynomials.  Then, dividing (\ref{eq:positivityinfinitespace}) by $\chi_\ell(\De)$ and writing $\De=\De_{\mathrm{min},\ell}+x$, our inequality becomes
\be
\sum_{m+n\leq \Lambda} a_{mn} P_\ell^{mn}(\De_{\mathrm{min},\ell}+x) &\geq& 0.
\ee
This has the right form if we group the coefficients $a_{mn}$ into a vector $\vec a$ and identify $\ell\to i$, $\ell_\mathrm{max}\to N$.  The value $\De_{\mathrm{min},\ell}$ depends on the calculation at hand, see for example (\ref{eq:positivityconditioninexample}).

To get a positive-times-polynomial approximation, we can start with the series expansion (\ref{eq:seriesexpansion}),
\be
g_{\De,\ell}(u,v) &=& \sum_{n,j} B_{n,j} r^{\De+n} C_j^{\frac{d-2}{2}}(\cos\th).
\ee
Recall that the coefficients $B_{n,j}$ are positive rational functions of $\De$.  The crossing-symmetric point $z=\bar z = \frac 1 2$ corresponds to a very small value of $r=r_*=3-2\sqrt 2 \approx 0.17$.  Thus, truncating the series at some large $n_\mathrm{max}$ gives a good approximation,
\be
\ptl_r^a \ptl_\th^b g_{\De,\ell}(u,v)|_{r=r_*,\th=0} &\approx & r_*^\De \frac{P^{ab}_\ell(\De)}{Q_\ell(\De)}  + O(r_*^{\De+n_\mathrm{max}}),
\ee
where $P^{ab}_\ell$ and $Q_\ell$ are polynomials and $Q_\ell(\De)$ is positive.  Since derivatives of $F_{\De,\ell}^{\De_\f}(z,\bar z)$ are linear combinations of derivatives of $g_{\De,\ell}(u,v)$, this establishes (\ref{eq:polynomialapproximation}) with
\be
\chi_\ell(\De) &=& \frac{r_*^\De}{Q_\ell(\De)}.
\ee

When exact formulae for conformal blocks are not available (for example, in odd dimensions), the polynomials $P^{ab}_\ell(\De)$ can be computed using recursion relations \cite{Zamolodchikov:1985ie,Zamolodchikov:1987,Kos:2013tga,Kos:2014bka,Penedones:2015aga,Yamazaki:2016vqi,Iliesiu:2015akf} or differential equations \cite{Hogervorst:2013kva}.

\end{appendix}

\bibliographystyle{ws-rv-van}
\bibliography{TASI-bootstrap-chapter}
\ifarxivsubmission
\else
  \blankpage
\fi
\printindex
\end{document}